\definecolor{linkcol}{rgb}{0,0,0.38}
\definecolor{citecol}{rgb}{0.8,0,0}
\definecolor{urlcol}{rgb}{0.1,0.35,0}
\DeclareFontFamily{U}{BOONDOX-calo}{\skewchar\font=45 }
\DeclareFontShape{U}{BOONDOX-calo}{m}{n}{
  <-> s*[1.05] BOONDOX-r-calo}{}
\DeclareFontShape{U}{BOONDOX-calo}{b}{n}{
  <-> s*[1.05] BOONDOX-b-calo}{}
\DeclareMathAlphabet{\link}{U}{BOONDOX-calo}{m}{n}
\DeclareMathAlphabet{\blink}{U}{BOONDOX-calo}{b}{n}
\newtheoremstyle{light} %
    {\topsep}                    %
    {\topsep}                    %
    {\itshape}                   %
    {}                           %
    {\scshape}                   %
    {.}                          %
    {.5em}                       %
    {}  %
\newtheorem{theorem}{Theorem}[section]
\newtheorem{lemma}[theorem]{Lemma}
\newtheorem{proposition}[theorem]{Proposition}
\newtheorem{definition}[theorem]{Definition}
\newtheorem{corollary}[theorem]{Corollary}
\newtheorem{conjecture}[theorem]{Conjecture}
\newtheorem{observation}[theorem]{Observation}
\theoremstyle{light}
\crefname{claiminproof}{Claim}{Claims}
\crefname{claiminproof}{claim}{claims}
\crefname{algocf}{Algorithm}{Algorithms}
\crefname{algocf}{algorithm}{algorithms}
\crefname{conjecture}{Conjecture}{Conjectures}
\crefname{conjecture}{conjecture}{conjectures}
\crefname{thm}{Theorem}{Theorems}
\crefname{thm}{theorem}{theorems}
\crefname{lem}{Lemma}{Lemmas}
\crefname{lem}{lemma}{lemmas}
\newcommand{\labeltarget}[1]{\Hy@raisedlink{\hypertarget{#1}{}}}
\setlist[enumerate]{nosep,topsep=0.1em}
\setlist[enumerate,1]{label=(\roman*), leftmargin=2.2em}
\setlist[itemize]{nosep,topsep=0.3em}
\newcommand\appendtographicspath[1]{%
  \g@addto@macro\Ginput@path{#1}%
}
\let\truehypersetup\hypersetup
\renewcommand\hypersetup[1]{}
\let\hypersetup\truehypersetup
\DeclareMathOperator{\head}{head}
\DeclareMathOperator{\tail}{tail}
\DeclareMathOperator{\succelem}{succ}
\DeclareMathOperator{\flow}{flow}
\renewcommand{\epsilon}{\varepsilon}
\newcommand{\abs}[1]{\left\lvert #1 \right\rvert}
\newcommand{\Fin}{F_{{\textrm{in}}}}
\newcommand{\Pcc}{P_{{\textrm{cc}}}}
\newcommand{\Pclo}{P_{{\textrm{clo}}}}
\newcommand{\acc}{a_{{\textrm{cc}}}}
\newcommand{\aclo}{a_{{\textrm{clo}}}}
\newcommand{\bcc}{b_{{\textrm{cc}}}}
\newcommand{\bclo}{b_{{\textrm{clo}}}}
\renewcommand{\P}{\text{P}}
\renewcommand{\NP}{\text{NP}}
\newcommand{\inp}{\langle \mathrm{input} \rangle}
\let\@@pmod\pmod
\DeclareRobustCommand{\pmod}{\@ifstar\@pmods\@@pmod}
\def\@pmods#1{\mkern8mu({\operator@font mod}\mkern 6mu#1)}
\let\@@mod\mod
\DeclareRobustCommand{\mod}{\@ifstar\@mods\@@mod}
\def\@mods#1{\mkern8mu{\operator@font mod}\mkern 6mu#1}
\definecolor{green}{rgb}{0.4,0.85,0.6}
\def\@fnsymbol#1{\ensuremath{\ifcase#1\or *\or %
\ddagger\or
    \mathsection\or \mathparagraph\or \|\or **\or \dagger\dagger
    \or \ddagger\ddagger \else\@ctrerr\fi}}
\title{Single-Source Unsplittable Flows in Planar Graphs%
\ifbool{sodasubm}{}{%
\thanks{This project received funding from Swiss National Science Foundation grant 200021\_184622 and the European Research Council (ERC) under the European Union's Horizon 2020 research and innovation programme (grant agreement No 817750).}}
}
\author{}
\author{
Vera Traub\thanks{
Research Institute for Discrete Mathematics and Hausdorff Center for Mathematics, University of Bonn.
Email: \href{mailto:traub@dm.uni-bonn.de}%
{traub@dm.uni-bonn.de}.
}
\and
Laura Vargas Koch\thanks{
Department of Mathematics, ETH Zurich, Zurich, Switzerland.
Email: \href{mailto:laura.vargas@ifor.math.ethz.ch}%
{laura.vargas@ifor.math.ethz.ch}.}
\and
Rico Zenklusen\thanks{
Department of Mathematics, ETH Zurich, Zurich, Switzerland.
Email: \href{mailto:ricoz@ethz.ch}%
{ricoz@ethz.ch}.}
}
\date{}
\begin{document}

\maketitle
\thispagestyle{empty}
\addtocounter{page}{-1}
\enlargethispage{-1cm}

\begin{abstract}
The single-source unsplittable flow (SSUF) problem asks to send flow from a common source to different terminals with unrelated demands, each terminal being served through a single path.
One of the most heavily studied SSUF objectives is to minimize the violation of some given arc capacities.
A seminal result of Dinitz, Garg, and Goemans showed that, whenever a fractional flow exists respecting the capacities, then there is an unsplittable one violating the capacities by at most the maximum demand.
Goemans conjectured a very natural cost version of the same result, where the unsplittable flow is required to be no more expensive than the fractional one.
This intriguing conjecture remains open.
More so, there are arguably no non-trivial graph classes for which it is known to hold.

We show that a slight weakening of it (with at most twice as large violations) holds for planar graphs.
Our result is based on a connection to a highly structured discrepancy problem, whose repeated resolution allows us to successively reduce the number of paths used for each terminal, until we obtain an unsplittable flow.
Moreover, our techniques also extend to simultaneous upper and lower bounds on the flow values.
This also affirmatively answers a conjecture of Morell and Skutella for planar SSUF.
\end{abstract}
 
\ifbool{sodasubm}{}{
\begin{tikzpicture}[overlay, remember picture, shift = {(current page.south east)}]
\begin{scope}[shift={(-1.1,2.5)}]
\def\hd{2.5}
\node at (-2.15*\hd,0) {\includegraphics[height=0.7cm]{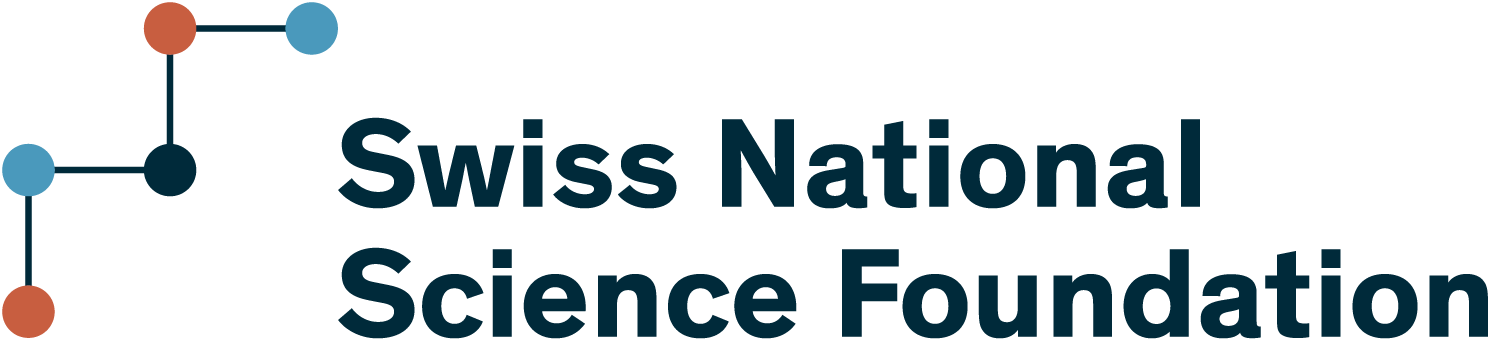}};
\node at (-\hd,0) {\includegraphics[height=1.0cm]{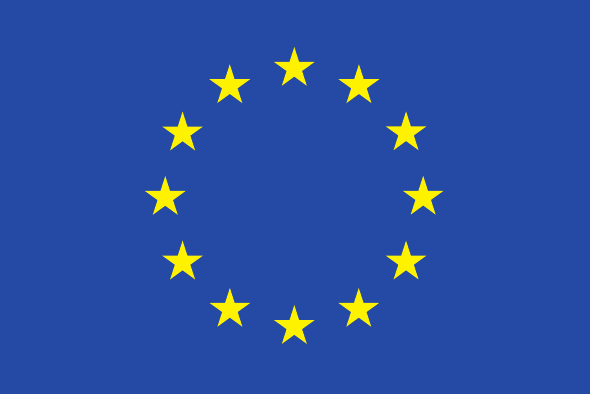}};
\node at (-0.2*\hd,0) {\includegraphics[height=1.2cm]{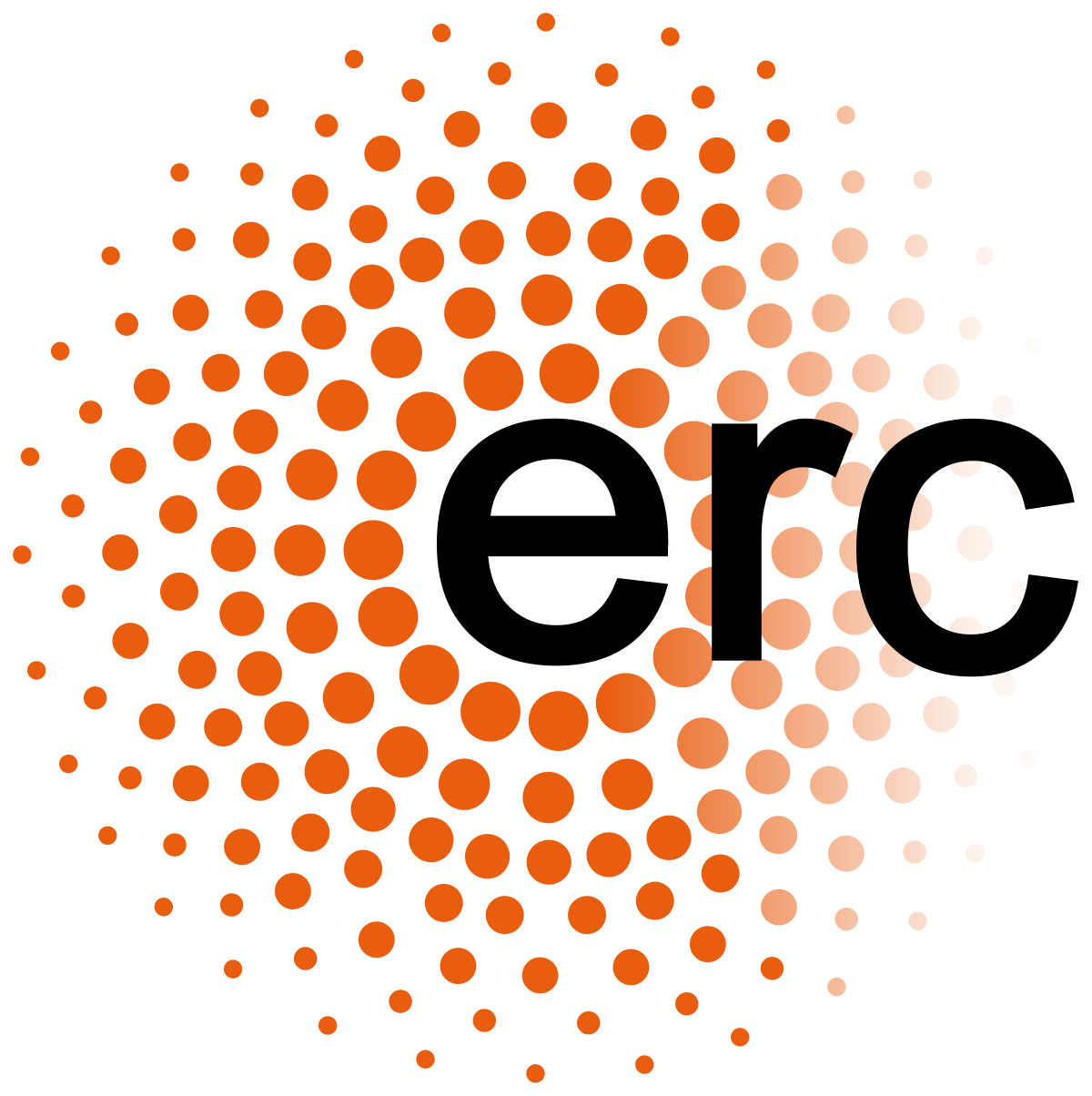}};
\end{scope}
\end{tikzpicture}
}

\clearpage

\section{Introduction}\label{sec:intro}

Flow problems are among the most classical combinatorial optimization problems.
To their prominent role in the field significantly contributed both their wide, and sometimes surprising, range of applications and also the existence of very fast algorithms (in practice as well as in theory) for basic flow optimization problems.
(We refer the interested reader to the textbooks of \textcite{ahuja_1993_network}, and \textcite{williamson_2019_network} for more information.)
In their most canonical version, flow from a source to a sink can be split over many paths. 
Being able to split flows naturally allows for casting flow problems---even with multiple sources, sinks, and commodities---as linear programs.
Flow problems become much harder, and are also much less understood, as soon as the flow is not allowed to be split, i.e., is unsplittable.

In this work, we consider a heavily studied variant thereof, namely the \emph{single-source unsplittable flow problem} (SSUF), originally introduced by \textcite{kleinberg_1996_approximation,kleinberg_1996_singlesource}.
Here, we are given a directed graph $G=(V,A)$, with arc capacities $u\in \mathbb{Q}_{\geq 0}^A$, a single source $s\in V$, and terminals/sinks $T\subseteq V$, where each $t\in T$ has a demand $d_t \in \mathbb{Q}_{\geq 0}$.
(The notions of sink and terminal are used interchangeably in the literature.)
For simplicity, and without loss of generality, we assume that the source $s$ and sinks $T$ are all distinct.
Ideally, one would like to route the demanded flow from the source to each terminal on a single path such that all capacities are respected.
Formally, this corresponds to determining one $s$-$t$ path $P^t\subseteq A$ for each $t\in T$ such that $\sum_{\substack{t\in T: a\in P^t}}d_t \leq u(a)$ for each $a\in A$.
Deciding whether an \emph{unsplittable flow}, i.e., such a family $\mathcal{P}=\{P^t\}_{t\in T}$ of paths, exists, can easily be seen to be NP-hard.
Actually, even in a $2$-vertex graph, one can reduce Bin Packing or Subset Sum to it.
Thus, significant attention has been devoted to the setting where one is allowed to exceed the capacities and the goal is to minimize the largest capacity violation.
This is also the setting we consider here.

The research in this area has been largely influenced by a seminal result of \textcite{dinitz_1999_singlesource} and a subsequent conjecture of Goemans, which we discuss next.
These are statements about the existence of unsplittable flows with limited capacity violation assuming that there exists a \emph{splittable flow} $x\in \mathbb{Q}^A$ 
that respects the capacities, i.e., such a vector $x$ satisfies $x(a) \leq u(a)$ for all arcs $a \in A$ and
\begin{equation*}
x(\delta^+(v)) - x(\delta^-(v)) = \begin{cases}
\sum_{t\in T}d_t & \text{if } v=s,\\
-d_v & \text{if } v\in T,\\
0 & \text{if } v\in V\setminus (\{s\}\cup T),
\end{cases}
\end{equation*}
where $\delta^-(v)$ and $\delta^+(v)$ denote the sets of all arcs entering and leaving $v$, respectively.
More precisely, these statements assume the existence of a splittable flow $x\in \mathbb{Q}^A_{\geq 0}$, and the capacity violation of an unsplittable flow is measured by the difference between the flow value of the unsplittable flow and the value of the splittable one.
In short, this can be interpreted as a worst-case assumption where the capacity $u(a)$ of an arc $a$ is equal to $x(a)$.
We thus define an SSUF instance as follows.
\begin{definition}[Single-source unsplittable flow (SSUF) instance]
A \emph{single-source unsplittable flow (SSUF)} instance is a tuple $(G=(V,A),s,T,d,x)$, where $G$ is a directed graph with source $s\in V$, terminals $T\subseteq V$ with corresponding demands $d\in \mathbb{Q}_{\geq 0}^T$, and a splittable flow $x\in \mathbb{Q}_{\geq 0}^A$.
The source $s$ and terminals $T$ are assumed to be distinct.
\end{definition}
The result of \textcite{dinitz_1999_singlesource}, stated below, was the first to establish an additive capacity violation of $O(d_{\max})$, where $d_{\max}\coloneqq \max\{d_t\colon t\in T\}$ is the largest demand, whenever a splittable flow exists.
For ease of notation, given an unsplittable flow $\mathcal{P}=\{P^t\}_{t\in T}$ and an arc $a\in A$, we denote by
\begin{equation*}
\flow_{\mathcal{P}}(a)\coloneqq \sum_{t\in T\colon a\in P^t} d_t
\end{equation*}
the total flow value of the unsplittable flow $\mathcal{P}$ that traverses $a$.
\begin{theorem}[\cite{dinitz_1999_singlesource}]\label{thm:dinitzGargGoemans}
Given an SSUF instance $(G,s,T,u,d,x)$, one can compute in polynomial time an unsplittable flow $\mathcal{P}=\{P^t\}_{t\in T}$ with $\flow_{\mathcal{P}}(a) \leq x(a) + d_{\max}$ for all $a\in A$.
\end{theorem}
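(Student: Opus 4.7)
The plan is to iteratively convert $x$ into an unsplittable flow by merging fractional flow pieces that belong to the same terminal, until each terminal is served by a single path. First I would prepare a \emph{fractional unsplittable flow} by decomposing $x$ into a family $\mathcal{F}_0 = \{(P_i, w_i, t_i)\}$ of weighted $s$-to-terminal paths such that for each terminal $t$, the weights of paths ending at $t$ sum to $d_t$. After cancelling any cycles in the support of $x$ (which only decreases $x$-values on arcs), I may assume this support is acyclic, so the decomposition is immediate.

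The main loop would maintain a family $\mathcal{F}$ satisfying: (i) for each terminal $t$, $\sum_{i \,:\, t_i = t} w_i = d_t$, and (ii) for every arc $a$, $\flow_{\mathcal{F}}(a) \coloneqq \sum_{i \,:\, a \in P_i} w_i \leq x(a) + d_{\max}$. Both hold initially, since $\flow_{\mathcal{F}_0}(a) = x(a)$. While some terminal has two paths $P_1, P_2 \in \mathcal{F}$ with weights $w_1 \leq w_2$, I would merge them: replace $(P_1, w_1)$ and $(P_2, w_2)$ by $(P_2, w_1 + w_2)$, effectively rerouting the $w_1$ fraction of that terminal's demand from $P_1$ onto $P_2$. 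This shifts $w_1$ units of flow around the undirected cycles formed by $P_1 \cup P_2$: arcs of $P_1 \setminus P_2$ lose $w_1$ while arcs of $P_2 \setminus P_1$ gain $w_1$. Each merge strictly decreases $|\mathcal{F}|$, so polynomially many suffice to reach one path per terminal, at which point $\mathcal{F}$ encodes the desired $\mathcal{P} = \{P^t\}_{t \in T}$.

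The principal obstacle is preserving invariant (ii): a single merge lifts $\flow_{\mathcal{F}}$ on arcs of $P_2 \setminus P_1$ by $w_1 \leq d_{\max}$, so a naive execution could stack many such increments on one arc and blow past $x(a) + d_{\max}$. The key insight of Dinitz, Garg, and Goemans is to enforce a monotonicity property guaranteeing that each arc is pushed above its $x$-value at most \emph{once} during the whole algorithm. One way to realize this is to process the vertices of the acyclic flow support in reverse topological order and, at each vertex, greedily consolidate the incoming demand items onto the outgoing arcs using the slack $d_{\max}$ as a rounding budget; an equivalent view is to always merge the lightest current piece into a compatible heavier one along a carefully chosen augmenting cycle, so that subsequent shifts never revisit an already-raised arc in the ``wrong'' direction. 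Proving that such a rule preserves (ii) throughout — via an exchange argument on alternating cycles in the current fractional flow — is the technical heart of the theorem, whereas demand conservation and polynomial termination (since $|\mathcal{F}|$ is polynomially bounded and strictly decreases with each merge) are routine.
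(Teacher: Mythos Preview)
This theorem is not proved in the paper at all; it is quoted from \cite{dinitz_1999_singlesource} as a known result and serves only as background for Goemans' conjecture and the paper's own contributions on planar instances. There is therefore no ``paper's own proof'' to compare your proposal against.

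As for the proposal on its own terms: it is an outline, not a proof, and the gap is exactly where you say it is. Your second paragraph commits to a concrete rule --- absorb the lighter of two same-terminal paths into the heavier one --- and your third paragraph correctly notes that this rule, as written, does \emph{not} preserve invariant~(ii): nothing prevents many merges from stacking $w_1$-increments on the same arc. You then offer two candidate repairs (a reverse-topological sweep with a $d_{\max}$ budget, and ``a carefully chosen augmenting cycle'') but specify neither, and you explicitly defer the invariant argument as ``the technical heart of the theorem.'' That deferred step is the entire content of the result. The Dinitz--Garg--Goemans algorithm is not a pairwise merge of same-terminal paths; it moves demand tokens forward through the acyclic support along alternating cycles chosen so that whenever an arc's load rises above $x(a)$, the overshoot is caused by a single token of value at most $d_{\max}$ and is never increased again by later iterations. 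Writing down that rule precisely and proving the ``never increased again'' claim is not bookkeeping left to the reader --- it is the theorem --- and your proposal does not supply it.
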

Shortly thereafter, Goemans conjectured that the following stronger, cost-enhanced version of the same result holds.
\begin{conjecture}[Goemans]\label{conj:goemans}
Given an SSUF instance $(G,s,T,u,d,x)$ and a cost vector $c\in \mathbb{Q}^A_{\geq 0}$, one can compute in polynomial time an unsplittable flow $\mathcal{P}=\{P^t\}_{t\in T}$ with
\begin{enumerate}
\item $\flow_{\mathcal{P}}(a) \leq x(a) + d_{\max}$ for all $a\in A$, and \label{it: flow_upper_bounds}
\item cost at most $c^Tx$, i.e., $\sum_{a\in A} c(a) \flow_{\mathcal{P}}(a) \leq \sum_{a\in A} c(a) x(a)$.
\end{enumerate}
\end{conjecture}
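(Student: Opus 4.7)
The plan is to start from a path decomposition of the splittable flow: for each terminal $t \in T$, write the $d_t$ units reaching $t$ as $\sum_i \alpha_{t,i}\,\mathbbm{1}_{P_{t,i}}$ over $s$-$t$ paths with $\sum_i \alpha_{t,i}=1$, and let $y(a)\coloneqq \sum_{t,i:\,a\in P_{t,i}} \alpha_{t,i}\, d_t$ denote the resulting arc-load. Call the total number of distinct paths used, summed over terminals, the \emph{splittability}; an unsplittable flow corresponds to splittability $|T|$, in which case $y=\flow_{\mathcal{P}}$. I would iteratively reduce the splittability while maintaining the two invariants $y(a)\le x(a)+d_{\max}$ for every arc $a$ and $\sum_a c(a)\,y(a)\le c^\top x$.

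First I would look for an elementary \emph{local move}. Pick a terminal $t$ using at least two paths $P_1,P_2$ and consider their symmetric difference; it decomposes into cycles and possibly an $s$-$t$ walk, along which one can shift flow between the two paths. This is the mechanism underlying Theorem \ref{thm:dinitzGargGoemans}, whose standard analysis ensures that the arc-load after such a shift still lies within $d_{\max}$ of $x$. The new requirement for \Cref{conj:goemans} is to choose both the direction of the shift and, when the cycles involve other terminals, the coupled partners, so that the total cost does not increase.

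Second, to reconcile capacity and cost simultaneously, I would recast the remaining rounding problem in the language of structured discrepancy. After bundling so that every terminal uses at most two paths, the choice ``route $t$ along $P_{t,1}$ or $P_{t,2}$'' is a binary variable, and both every arc-load and the cost are linear in these variables with fractional expectation $x(a)$ and $c^\top x$, respectively. Achieving additive discrepancy $d_{\max}$ on the arc-load constraints (as in \cite{dinitz_1999_singlesource}) together with zero signed discrepancy on the single cost constraint would finish the proof. Since cost is just one extra linear constraint, I would attempt a ``floating variable'' trick: execute the local move from Theorem \ref{thm:dinitzGargGoemans} on all but one carefully chosen terminal and use the remaining freedom to annihilate the cost discrepancy by convex adjustment.

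The main obstacle is that the capacity and cost discrepancies are not independent. The local move is essentially dictated by the support of the current fractional flow, so its effect on cost is forced; if the forced direction happens to raise the cost, there may be no cost-neutral alternative without first reshaping the support. I expect this tension is precisely why \Cref{conj:goemans} remains open in general, and that a full proof likely requires additional combinatorial structure on $G$---for example planarity, as used in the paper's main result---which produces sufficiently many cycles in the fractional support to allow cost-neutral cancellations, possibly at the price of a slightly weaker capacity bound such as the $2 d_{\max}$ mentioned in the abstract.
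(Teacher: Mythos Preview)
The statement you are asked to prove is \emph{Goemans' conjecture}, which the paper explicitly states remains \emph{open}: there is no proof of it in the paper, and indeed the authors write that ``even a resolution of a weaker version of Goemans' conjecture with additive capacity violations in the order $O(d_{\max})$ \ldots\ would likely be considered a breakthrough.'' So there is nothing to compare your proposal against; the paper's contribution is the strictly weaker \Cref{thm:main_cost}, which assumes planarity and obtains $2d_{\max}$ rather than $d_{\max}$.

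You seem to realize this yourself in your final paragraph, which is accurate and even prescient: the tension you identify between the capacity-discrepancy moves of \cite{dinitz_1999_singlesource} and an additional cost constraint is exactly the obstruction, and the paper's workaround is precisely what you guess---restrict to planar graphs, use a structured (non-crossing) path decomposition so that the relevant arc-constraints become \emph{interval} constraints on a circular order, reduce to a half-integral discrepancy problem where complementing a selection preserves the discrepancy bound, and pick the cheaper of the two complementary integral selections. The reduction to half-integrality costs the extra factor of~$2$.

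What is \emph{not} a proof is the first two paragraphs of your proposal. The ``floating variable'' idea---fix all but one terminal via the Dinitz--Garg--Goemans moves and use the last degree of freedom to kill the cost---cannot work as stated: a single binary variable gives you only two cost values, neither of which need lie below $c^\top x$, and adjusting it ``convexly'' reintroduces fractionality. More fundamentally, the DGG procedure does not give you independent binary choices per terminal; the augmenting steps are coupled through shared arcs, so the picture of ``one extra linear constraint on independent bits'' is not the right model. Your own diagnosis in the last paragraph is the honest conclusion: without extra structure on $G$, no one currently knows how to do this.
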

Note that one can assume $G$ to be acyclic in \Cref{conj:goemans}.
Indeed, if $G$ contains directed cycles, then flow along such cycles can be reduced and zero-flow arcs can be removed.
This leads to an acyclic instance that is no easier than the original one.

The following even stronger version of Goemans' conjecture, which includes both upper and lower bounds on the flow values, has been stated by \textcite{morell_2022_single}.
\begin{conjecture}[\cite{morell_2022_single}]\label{conj:morellAndSkutella}
Given an SSUF instance $(G,s,T,u,d,x)$ on an acyclic graph $G$ and a cost vector $c\in \mathbb{Q}^A_{\geq 0}$, one can compute in polynomial time an unsplittable flow $\mathcal{P}=\{P^t\}_{t\in T}$ with
\begin{enumerate}
\item $x(a) - d_{\max} \leq \flow_{\mathcal{P}}(a) \leq x(a) + d_{\max}$ for all $a\in A$, and \label{it: flow_upper_and_lower_bounds}
\item $\sum_{a\in A} c(a) \flow_{\mathcal{P}}(a) \leq \sum_{a\in A} c(a) x(a)$.
\end{enumerate}
\end{conjecture}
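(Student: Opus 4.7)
The plan is to establish \Cref{conj:morellAndSkutella} by iteratively transforming the given splittable flow $x$ into an unsplittable one through a sequence of \emph{sign-balanced path-swap} operations, each of which preserves the demand conservation, does not increase $c^T x$, and, crucially, keeps the running arc flow within $[x(a)-d_{\max},\,x(a)+d_{\max}]$ on every arc $a$. Fix an acyclic instance $(G,s,T,u,d,x)$ together with a path decomposition of $x$ into $s$-$t$ paths $\{P^t_i\}$ with weights $\lambda^t_i>0$ summing to $d_t$ for each $t\in T$. Call a terminal \emph{fractional} if more than one path carries flow to it. The plan is to design a potential (for example, the total number of path pieces across all terminals) which decreases strictly at every swap, so that after polynomially many swaps one obtains an unsplittable family $\mathcal{P}=\{P^t\}_{t\in T}$ satisfying $|\flow_{\mathcal{P}}(a)-x(a)|\leq d_{\max}$ and $c^T\flow_{\mathcal{P}}\leq c^T x$.

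The central object is the swap step. Given a fractional terminal $t$ supplied by two paths $P_1,P_2$ with weights $\lambda_1\leq \lambda_2$, consider the elementary move that shifts the full amount $\lambda_1$ from $P_1$ onto $P_2$, pushes the resulting excess along $P_2$, and re-routes the corresponding deficit via an alternating ``detour'' path $Q$ in the support of the current flow to a second terminal $t'$, compensating $t'$ by reducing one of its path weights by the same amount. Formally, this is a signed circulation $y$ supported on $(P_1\symdiff P_2)\cup Q$, and demand conservation is preserved by construction. The core technical claim would be that for every fractional terminal $t$ there exist a partner terminal $t'$ and a detour $Q$ so that the induced $y$ satisfies simultaneously: \textbf{(i)} $c^T y\leq 0$, \textbf{(ii)} the resulting arc flows still lie in $[x(a)-d_{\max},\,x(a)+d_{\max}]$, and \textbf{(iii)} the chosen potential strictly decreases. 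If \textbf{(i)}--\textbf{(iii)} can be secured jointly, then iteration together with \Cref{thm:dinitzGargGoemans} (to bootstrap termination once only one fractional terminal remains) closes the proof.

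The main obstacle, and the reason the conjecture is longstanding, is the joint existence of such a swap. \Cref{thm:dinitzGargGoemans} yields \textbf{(ii)} one-sidedly but offers no handle on \textbf{(i)}, while the planar result of the present paper secures \textbf{(i)} at the price of doubling the violation in \textbf{(ii)}. To close the gap I would develop a cost-aware refinement of the ball-growing / augmenting-tree construction of \cite{dinitz_1999_singlesource}: associate with each fractional terminal a laminar family of alternating $s$-$T$ cuts in the residual graph, and, via a weighted averaging argument whose weights are shaped by $c$, show that at least one cut in the family hosts a balanced, cheap swap. The real difficulty is that cost is a single \emph{global} linear constraint whereas the two-sided $d_{\max}$ bound must hold \emph{pointwise} on every arc and in both directions; forcing all three to coexist on the same swap is exactly what present discrepancy techniques (including the one underlying the planar result of this paper) fail to achieve in general. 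I expect that an affirmative resolution must identify a new matroidal or polyhedral primitive---perhaps an integral $\{-1,0,+1\}$-decomposition of the difference $\flow_{\mathcal{P}}-x$ compatible with both cost and two-sided capacity slack---whose general existence appears to be the crux of the open problem.
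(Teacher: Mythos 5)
The statement you are addressing is \Cref{conj:morellAndSkutella}, which is stated in the paper as an \emph{open conjecture} of Morell and Skutella: the paper contains no proof of it, and its own contribution is the strictly weaker \Cref{thm:main_cost} (planar graphs only, with violation $2d_{\max}$ instead of $d_{\max}$), obtained via a non-crossing path decomposition and a structured interval-discrepancy argument rather than via local swaps. So there is no ``paper proof'' for your attempt to match, and your text, by its own admission, does not supply one either.

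The genuine gap is your ``core technical claim'': the joint existence, for every fractional terminal, of a partner terminal $t'$ and a detour $Q$ whose induced signed circulation $y$ simultaneously satisfies \textbf{(i)} $c^Ty\le 0$, \textbf{(ii)} preservation of the two-sided bound $x(a)-d_{\max}\le \cdot \le x(a)+d_{\max}$ on every arc, and \textbf{(iii)} strict decrease of the potential. You assert this as something ``to be secured'' and then explicitly concede that existing techniques (including \Cref{thm:dinitzGargGoemans}, which gives only a one-sided bound and no cost control, and the planar discrepancy machinery of this paper, which controls cost only at the price of doubling the violation) do not yield it; the proposed ``cost-aware refinement of the ball-growing construction'' and the ``weighted averaging argument'' are described as intentions, with no construction, no existence proof, and no verification that the pointwise two-sided constraints compose across a polynomial number of swaps (each swap moves up to $\lambda_1\le d_{\max}$ units along an entire detour, so maintaining the invariant after each step is itself nontrivial and unargued). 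In short, you have correctly located where the difficulty lies, but the step that would constitute the proof is exactly the step you leave open, so the proposal is a research plan rather than a proof of the conjecture — consistent with the fact that the conjecture remains open in the paper as well.
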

The fact that $G$ is explicitly assumed to be acyclic in \Cref{conj:morellAndSkutella} is due to the lower bounds.
Without acyclicity, the nature of the problem changes.
In particular, it would suddenly become important whether an unsplittable flow consists of paths or walks, because one may try to fulfill lower bounds by going along cycles.
Also, one can show that statements as \Cref{conj:morellAndSkutella} (even with a violation of $O(d_{\max})$ instead of just $d_{\max}$) cannot be obtained in general (non-acyclic) graphs (see \Cref{sec:hardness_cyclic_graphs}).

\textcite{morell_2022_single} also explicitly conjectured the following weaker version of \Cref{conj:morellAndSkutella} without costs, which also remains open and is a natural intermediate step toward \Cref{conj:morellAndSkutella}.
\begin{conjecture}[\cite{morell_2022_single}]\label{conj:morellAndSkutella_weaker}
Given an SSUF instance $(G,s,T,u,d,x)$ on an acyclic graph $G$ there exists an unsplittable flow $\mathcal{P}=\{P^t\}_{t\in T}$ with
\[ x(a) - d_{\max} \leq \flow_{\mathcal{P}}(a) \leq x(a) + d_{\max} \text{ for all } a\in A.\]
\end{conjecture}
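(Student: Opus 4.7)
The plan is to adapt the iterative path-shifting framework behind \Cref{thm:dinitzGargGoemans} so that it controls capacity violations on both sides simultaneously. I would maintain, as the algorithm's state, a \emph{semi-splittable} flow: for every terminal $t$, a convex combination of $s$-$t$ paths whose weights sum to $d_t$. The initial state is obtained by a greedy path decomposition of $x$ using a reverse topological traversal of the acyclic graph $G$. Throughout the algorithm, the aggregate arc flow stays in the band $[x(a)-d_{\max},\, x(a)+d_{\max}]$; this invariant holds trivially at the start because the aggregate equals $x(a)$, and when the algorithm terminates with each terminal served by a single path the invariant becomes precisely the required conclusion.

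The core of the argument is a local reduction step that strictly decreases the total number of positive-weight paths. I would pick a terminal $t$ with two positive-weight paths $P$ and $P'$ in its current decomposition and try to shift some amount $\delta>0$ of weight from $P'$ onto $P$. A naive direct shift is usually infeasible: it raises the flow on arcs in $P\setminus P'$ (threatening the upper part of the invariant) and lowers it on arcs in $P'\setminus P$ (threatening the lower part). To restore feasibility I would compose the shift with a chain of compensating swaps involving further paths, possibly serving other terminals. Because $G$ is acyclic, a topological ordering lets me propagate corrections from the source toward the sinks, with the aim of producing a finite chain that absorbs every deficit and every surplus created by the initial swap.

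The hardest step, and in fact the reason the statement remains conjectural, is to certify that such a feasible compensating chain always exists as long as some terminal is still fractionally served. My plan is to formulate this as a Hall-type condition on a bipartite auxiliary graph whose vertices are the arcs pinned at their upper, respectively lower, deviation bound, and whose edges record which pinned pairs can exchange flow through a single path swap in the current decomposition. Acyclicity of $G$ should provide enough structural richness to force Hall's condition to hold unless the current semi-splittable flow is already integral. If this structural argument breaks down in some adversarial configuration, a fallback would be to invoke a discrepancy-theoretic rounding of the arc-path incidence matrix grouped by terminal, but generic constructive tools only yield an $O(\sqrt{\log n})\cdot d_{\max}$ deviation rather than the conjectured $d_{\max}$; bridging this gap to obtain the sharp constant for all acyclic graphs is precisely the open hurdle, and it is where I expect the real difficulty to lie.
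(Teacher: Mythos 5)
There is a genuine gap, and you flag it yourself: the entire content of the statement is concentrated in the step you leave unproved, namely that a feasible compensating chain of path swaps (equivalently, your Hall-type condition on arcs pinned at their upper or lower deviation bound) always exists while some terminal is still fractionally served. Nothing in your proposal establishes this; acyclicity by itself is not known to force it, and the difficulty is exactly the simultaneous maintenance of upper and lower bounds---each side separately is handled by known augmentation/shifting arguments (the Dinitz--Garg--Goemans scheme and its lower-bound analogue), but no invariant-preserving local exchange argument is known that respects both at once. Your fallback via generic discrepancy rounding concedes an $O(\sqrt{\log n})\cdot d_{\max}$ deviation, which does not prove the claimed bound either. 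So the proposal is a research plan, not a proof.

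For calibration against the paper: the statement you were asked to prove is \Cref{conj:morellAndSkutella_weaker}, which the paper explicitly leaves open in full generality; the paper does not prove it and you should not expect to reconstruct a proof of it from the paper. What the paper does prove is the planar special case (\Cref{thm:main}), and by a rather different route from your local-swap scheme: it builds a geometrically non-crossing, source-numbered path decomposition of $x$ in an arc-split graph, shows that the index sets of paths per terminal form a non-interleaving partition and that the paths through any fixed arc are discrepancy-equivalent to a circular interval, and then solves a highly structured prefix/interval discrepancy problem by a greedy algorithm with discrepancy at most $d_{\max}$. If you want to make progress on the general acyclic case, either your exchange-chain existence claim must be turned into an actual lemma (with a proof that does not secretly assume the conjecture), or one needs a structural substitute for planarity that again yields an interval-like discrepancy structure; as it stands, neither is supplied.
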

\Cref{conj:morellAndSkutella_weaker} can be seen as the counterpart of \Cref{thm:dinitzGargGoemans} with additional lower bounds.
One motivation for \Cref{conj:morellAndSkutella_weaker} is that a positive resolution of it would have implications to both some open discrepancy and scheduling problems, on which we briefly expand in \Cref{sec:linkDiscrAndScheduling}.

All three conjectures, \Cref{conj:goemans,conj:morellAndSkutella,conj:morellAndSkutella_weaker}, remain open, and even a resolution of a weaker version of Goemans' conjecture with additive capacity violations in the order $O(d_{\max})$ (instead of the conjectured $d_{\max}$, which is the best one can hope for in terms of constants) would likely be considered a breakthrough.

Procedures leading to an unsplittable flow with a multiplicative violation, i.e., $\flow_{\mathcal{P}}(a) = O(x(a))$ for $a\in A$, have been obtained previously~\cite{%
kleinberg_1996_approximation,%
kleinberg_1996_singlesource,%
kolliopoulos_2002_approximation,%
skutella_2002_approximating,%
morell_2022_single,%
}.
Several of these approximation algorithms have also been implemented and empirically tested \cite{du_2005_implementing}.
Moreover, \textcite{skutella_2002_approximating} showed that \Cref{conj:goemans} holds when the demands are all multiples of each other.
Furthermore, techniques presented in \textcite{lenstra1990approximation} solve the problem for the special case when $G$ has a source plus two layers of vertices, with arcs only going from the source to the first layer and from the first to the second layer.

Very little is known about \Cref{conj:morellAndSkutella,conj:morellAndSkutella_weaker}.
When only dealing with lower bounds, analogous results (for acyclic graphs) can be obtained as with upper bounds only.
More precisely, \textcite{morell_2022_single} showed that for any SSUF instance $(G,s,T,u,d,x)$, there exists an unsplittable flow $\mathcal{P}=\{P^t\}_{t\in T}$ with $\flow_{\mathcal{P}}(a) \geq x(a)-d_{\max}$ for $a\in A$.
Moreover, one can also adjust the algorithm of \textcite{dinitz_1999_singlesource} to get the same lower bound guarantees.
However, the combination of both lower and upper bounds seems to be much more challenging.

Prior to this work, there was arguably no non-trivial graph class for which any of \Cref{conj:goemans,conj:morellAndSkutella,conj:morellAndSkutella_weaker} was known to hold, even if additive violations in the order of $O(d_{\max})$ are allowed instead of only $d_{\max}$.
The goal of this work is to address this gap, by showing that a slightly weaker version of \Cref{conj:morellAndSkutella} (allowing a violation of $2 d_{\max}$ instead of $d_{\max}$) and \Cref{conj:morellAndSkutella_weaker} hold for planar graphs.
This is a further positive sign regarding \Cref{conj:goemans,conj:morellAndSkutella,conj:morellAndSkutella_weaker}.
We obtain this result by connecting SSUF in planar graphs to a very well-structured discrepancy problem, whose resolution allows us to successively transform a splittable flow into another one that uses fewer paths per terminal until, eventually, an unsplittable flow is obtained.

\subsection{Our results}\label{sec:ourResults}

As mentioned, we focus on planar graphs in this work.
A planar instance of SSUF is formally defined as follows.
\begin{definition}[Planar single-source unsplittable flow (PSSUF) instance]
An SSUF instance $(G,s,T,d,x)$ is a \emph{planar single-source unsplittable flow (PSSUF)} instance if $G$ is acyclic and planar.
\end{definition}

Our first main result shows that, for planar graphs, unsplittable flows can be constructed with the lower and upper bound guarantees as claimed in \Cref{conj:morellAndSkutella_weaker}.
\begin{theorem}\label{thm:main}
Given a PSSUF instance $(G,s,T,d,x)$, there is an unsplittable flow $\mathcal{P}=\{P^t\}_{t\in T}$ with
\begin{equation*}
x(a) - d_{\max} \leq \flow_{\mathcal{P}}(a) \leq x(a) + d_{\max} \qquad \forall a\in A.
\end{equation*}
Moreover, it can be computed in time $O(\abs{V}^2)$.
\end{theorem}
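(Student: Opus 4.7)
The plan is to start from a path decomposition of the splittable flow $x$ and iteratively reduce the number of paths used per terminal, while maintaining as invariant that the aggregated arc flow $y$ satisfies $|y(a) - x(a)| \leq d_{\max}$ for every $a \in A$; once each terminal is served by a single path, the family $\mathcal{P} = \{P^t\}_{t \in T}$ so obtained is the desired unsplittable flow and it automatically meets the claimed two-sided bounds. To start, decompose $x$ as $x = \sum_{t\in T}\sum_i w^t_i \mathbbm{1}_{P^t_i}$ with $s$-$t$ paths $P^t_i$ and positive weights $w^t_i$ summing to $d_t$; since $G$ is acyclic, $O(|A|)$ paths suffice.

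In each iteration, I would pick a terminal $t$ with at least two paths and select two of them, $P$ and $P'$ with weights $w$ and $w'$, that are neighbors in the left-to-right planar order of the $s$-$t$ paths through $t$; I then replace them by a single path carrying weight $w+w'$. Because $G$ is planar and $P, P'$ are neighbors in the left-to-right order, their symmetric difference decomposes into alternating \emph{lenses} (pairs of internally disjoint subpaths bounding faces), and on each lens there is a binary choice of which of $P$ or $P'$ to keep. Shifting the merged weight onto the ``kept'' side of each lens changes $y$ on the arcs of that lens by at most $\min(w,w')$ in one direction and by the same amount in the other. Choosing these directions well is precisely the ``discrepancy subproblem'' alluded to in the introduction: by planarity the lenses interact with arcs of other terminals through a highly structured set system (essentially laminar/interval), on which a signed rounding with per-arc discrepancy at most $d_{\max}$ can be shown to exist and to be computable efficiently.

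I expect the main obstacle to be the simultaneous handling of the upper \emph{and} lower bound. With only the upper bound, as in \cite{dinitz_1999_singlesource}, a greedy, per-merge direction choice is enough; once both bounds are imposed, some arcs may be pinned near the lower bound while others are pinned near the upper bound, so that the direction on one lens forces the direction on another and a global coordination becomes necessary. Planarity is used crucially here to guarantee that the ``pinned'' arcs together with the lenses of the current merge always form an interval/laminar instance, for which the needed $\pm 1$ rounding exists (via a parity/flip argument on the laminar tree). Given this structural lemma, the invariant is preserved at every step, the iteration terminates after $O(|V|)$ merges (one per reduction of the total path count, which starts at $O(|V|)$ for an acyclic planar decomposition), and each merge can be implemented in $O(|V|)$ time on the fixed embedding, yielding the claimed $O(|V|^2)$ running time.
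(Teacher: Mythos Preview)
Your proposal takes a route that is \emph{genuinely different} from the paper's, and it contains a real gap at the point you yourself flag as ``the main obstacle.''

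\textbf{What the paper actually does.} The paper does \emph{not} iterate by merging pairs of paths to the same terminal. Instead, it computes in one shot a very particular path decomposition of $x$---a \emph{source-numbered nice $s$-path decomposition} in an arc-split graph---where all $\ell=O(|V|)$ paths are pairwise non-crossing and are numbered in the cyclic order in which they leave $s$. Two structural lemmas are then proved about this single global decomposition: (i) the index sets $S^t=\{i:P_i\text{ ends at }t\}$ form a \emph{non-interleaving} partition of $[\ell]$, and (ii) for every arc $a$, the set $\{i:a\in\phi(P_i)\}$ is \emph{discrepancy-equivalent to a circular interval}. With these in hand, selecting one path per terminal becomes a \emph{single} interval-discrepancy problem over a non-interleaving partition, which the paper solves by a greedy prefix algorithm achieving discrepancy $\le d_{\max}/2$ on prefixes and hence $\le d_{\max}$ on all circular intervals. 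No invariant is maintained across stages; the bound comes out of one global rounding.

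\textbf{Where your argument breaks.} In your scheme the invariant $|y(a)-x(a)|\le d_{\max}$ must be preserved at every merge, but you do not show that a feasible direction choice exists on each lens once earlier merges have already pushed some arcs to the boundary. Concretely, on a lens with sides $L_P,L_{P'}$ it is perfectly consistent with the invariant to have an arc $a\in L_P$ with $y(a)=x(a)+d_{\max}$ \emph{and} an arc $b\in L_{P'}$ with $y(b)=x(b)+d_{\max}$; then ``keep $P$'' increases $a$ and ``keep $P'$'' increases $b$, so neither choice preserves the invariant. You assert that planarity makes the pinned arcs together with the current lenses ``essentially laminar/interval,'' but you give no mechanism for this, and the pinned arcs carry the entire history of previous merges for \emph{all} terminals, not just the one currently being processed. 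The paper's proof shows exactly how delicate this structure is: the interval property (Lemma~2.13) is a statement about the \emph{global non-crossing} decomposition and the numbering around $s$, proved via a tri-coloring argument that tracks how every path interacts with a fixed pair of paths. There is no local analogue of this for an arbitrary intermediate $y$ produced by your merging, and indeed Figure~1 of the paper is designed to show that without the right global decomposition no per-terminal path selection can achieve $O(d_{\max})$ discrepancy. Your plan is missing precisely the construction of the non-crossing decomposition and the circular-interval lemma that make the discrepancy step go through.
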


We obtain an analogous result involving costs when allowing for a slightly larger lower and upper bound violation of $2 d_{\max}$ instead of $d_{\max}$.
\begin{theorem}\label{thm:main_cost}
Given a PSSUF instance $(G,s,T,d,x)$ and arc costs $c\in \mathbb{Q}_{\geq 0}^A$, there is an unsplittable flow $\mathcal{P}=\{P^t\}_{t\in T}$ satisfying
\begin{enumerate}[itemsep=0.2em]
\item $\displaystyle x(a) - 2d_{\max} \leq \flow_{\mathcal{P}}(a) \leq x(a) + 2d_{\max} \quad \forall a\in A$, and
\item $\displaystyle \sum_{a\in A} c(a) \flow_{\mathcal{P}}(a) \leq \sum_{a\in A} c(a) x(a)$.
\end{enumerate}
Moreover, it can be computed in time $O(\inp \cdot \abs{V})$, where $\inp$ denotes the input size of the instance.
\end{theorem}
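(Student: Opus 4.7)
The plan is to derive \Cref{thm:main_cost} by augmenting the path-reduction strategy underlying \Cref{thm:main} with a mechanism for cost control. The un-costed version of \Cref{thm:main} already produces an unsplittable flow with per-arc violation $d_{\max}$; the additional factor of two in the violation will be what buys us room to respect the cost constraint.

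I would organize the argument in two stages. In the first stage, replace the original splittable flow $x$ by an intermediate splittable flow $\tilde{x}$ of cost at most $c^T x$ whose path decomposition is drastically simplified, in the sense that every terminal is served by at most two positive-flow paths. Starting from a path decomposition of $x$, repeatedly select a terminal served by at least three of its paths and shift flow among three such paths in a way that zeros out at least one of them while preserving cost exactly. Such a cost-preserving shift is always available by a rank argument: one linear equation expresses that the total shifted flow is zero and one linear equation expresses cost preservation, which together leave a non-trivial direction in the three path variables, along which one may move until one path value hits zero. Because the $s$-to-terminal paths admit a linear order dictated by the planar embedding, these shifts can be arranged to move flow in a one-dimensional monotone fashion between neighbouring paths, enabling a telescoping argument in the spirit of \Cref{thm:main} that bounds the total per-arc perturbation between $x$ and $\tilde{x}$ by $d_{\max}$.

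In the second stage, apply the iterative discrepancy-based rounding underlying \Cref{thm:main} to $\tilde{x}$. Since each terminal of $\tilde{x}$ is served by at most two paths, this stage effectively reduces to selecting one of the two paths for each terminal, and the discrepancy argument bounds the additional per-arc perturbation by another $d_{\max}$. Concatenating the two stages yields an unsplittable flow of cost at most $c^T x$ and per-arc violation at most $2 d_{\max}$, as required. The overall running time is dominated by $O(|V|)$ shifts across both stages, each implementable in $O(\inp)$ time using standard planar flow manipulations.

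The main obstacle will be proving that the cost-preserving three-path shifts in the first stage telescope to a per-arc perturbation of only $d_{\max}$, rather than compounding additively across many iterations. The planar ordering of $s$-to-terminal paths should be the key ingredient here, ensuring that consecutive shifts largely cancel on the overlap of their supports, analogously to how \Cref{thm:main} exploits planarity. A secondary concern is to guarantee that the second-stage discrepancy rounding remains applicable once the path structure of $\tilde{x}$ has been constrained by the first stage, and that its perturbations do not destructively combine with those of the first stage in a way that inflates the bound beyond $2 d_{\max}$; a careful bookkeeping of the signs of the two families of perturbations on each arc should be enough to rule out such interference.
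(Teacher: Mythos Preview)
Your two-stage plan has a structural gap that breaks the cost conclusion. Stage~2 invokes the rounding behind \Cref{thm:main}, which carries no cost guarantee whatsoever; so even if Stage~1 delivers $\tilde x$ with $c^T\tilde x\le c^Tx$, nothing stops the unsplittable flow produced in Stage~2 from costing more than $c^T\tilde x$. Reducing to two paths per terminal does not by itself make Stage~2 cost-aware: the complement trick (swap the selected path at every terminal) preserves interval-discrepancy only when the fractional weights on the two paths are each $\tfrac{1}{2}$, which your Stage~1 does not arrange. The paper handles cost in a fundamentally different way. It never modifies the path decomposition; instead it sets up the selection as a single WPCS instance and proves a cost-aware interval-discrepancy bound directly (\Cref{thm:discrepancy_main_costs}). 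The mechanism is bit-by-bit rounding: one repeatedly isolates the least significant bit of the fractional selection, which yields a genuinely half-integral sub-instance where the complement trick does apply, rounds that bit, and sums the incurred errors as a geometric series $\sum_k 2^{-k}d_{\max}$. That geometric decay is exactly what caps the total discrepancy at $2d_{\max}$ while maintaining $c^Tz\le c^Ty$ at every step---a feature your three-path shifts do not have.

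Your Stage~1 is also not on solid ground, and you correctly flag it as the main obstacle. The direction of a cost-preserving three-path shift is fixed by the linear system $\alpha_1+\alpha_2+\alpha_3=0$ and $\alpha_1 c(P_1)+\alpha_2 c(P_2)+\alpha_3 c(P_3)=0$, which is governed entirely by the costs and bears no relation to the planar order of the paths. Consequently there is no reason the shifts should be monotone along that order, and the hoped-for telescoping does not follow; across many iterations and many terminals the per-arc perturbation can accumulate well beyond $d_{\max}$. The paper avoids this issue altogether: the nice $s$-path decomposition is computed once and never altered, and all subsequent work---including cost control---takes place inside the discrepancy problem on the fixed index set $[\ell]$, where the non-interleaving structure (\Cref{lem:non-interleaving}) and the circular-interval reduction (\Cref{lem:discr_equiv_to_interval}) are what enable the bounded-error analysis.
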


\Cref{thm:main_cost} settles \Cref{conj:morellAndSkutella} for planar graphs, when allowing for a violation of $2 d_{\max}$ instead of $d_{\max}$.
Moreover, the above results are the first SSUF results to simultaneously respect upper and lower bounds with only additive errors of $O(d_{\max})$ in a non-trivial graph class.
We provide an overview of our techniques, which are based on a reduction to a well-structured discrepancy problem, in \Cref{sec:overview}.

\subsection{Connections to discrepancy problems and scheduling}\label{sec:linkDiscrAndScheduling}

SSUF is closely connected to scheduling problems.
Already \textcite{kleinberg_1996_approximation} observed that a natural special case of makespan minimization on unrelated machines can be cast as SSUF with upper bounds only. 
When allowing for upper and lower bounds, an interesting more general scheduling problem can be captured as observed by Lars Rohwedder. 
This is a non-preemptive maximum flow time minimization problem with release times, where each job can only be processed on a subset of machines. 
It remains open whether this setting allows for $O(1)$-approximations, and a constructive resolution of \Cref{conj:morellAndSkutella_weaker}, even with a violation of $O(d_{\max})$ instead of $d_{\max}$, would resolve this question.
(See \cite{morell_2022_single} for more details including a discussion of Rohwedder's reduction.)
The currently best known approximation factor is $O(\log n)$ by \textcite{bansal2015minimizing}, while the best known lower bound for the variant with unrelated machines is $3/2$ (unless $\P = \NP$) by \textcite{lenstra1990approximation}.
Moreover, \textcite{bansal2022flow} showed very recently that the integrality gap of the natural LP relaxation for this scheduling problem is $O(\sqrt{\log n})$.
(A non-constructive resolution of \Cref{conj:morellAndSkutella_weaker} would imply an $O(1)$ integrality gap for this LP relaxation.)

Furthermore, in \cite{bansal2022flow} an intriguing equivalence between the existence of an LP-based $O(1)$-approximation for this scheduling problem and the resolution of an interesting open discrepancy problem was revealed.
This discrepancy problem is a special case of the prefix version of Beck-Fiala, which is a well-studied discrepancy problem.
A positive answer to \Cref{conj:morellAndSkutella_weaker} would imply a positive answer to the open discrepancy problem, in an interesting special case, which would be of independent interest.

Interestingly, in our algorithm for PSSUF, we also rely on a prefix version of a discrepancy problem.
However, our notion of prefixes is different from the one in the prefix Beck-Fiala problem.

\subsection{Further related work}\label{sec:furtherRelatedWork}
\textcite{martens_2007_convex} took an approach toward Goemans' conjecture where they allow for slight demand modifications.
More precisely, they provide a polynomial-time procedure that, given any splittable single-source flow, writes it as a convex combination of unsplittable flows for slightly rounded demands (by a factor of at most $2$), such that the average demands used in the convex combination correspond to the original ones.
If each term in the convex combination sent the original demands, then this would imply Goemans' conjecture.
(Actually the existence of such a convex combination is equivalent to the existence of an unsplittable flow fulfilling the conditions of Goemans' conjecture.)

Also, we would like to mention that there has been extensive work on unsplittable flows in a variety of settings, including multiple sources and sinks and with different objectives.
We refer the interested reader to the survey by \textcite{kolliopoulos2007edge}, and the discussion and references in a very recent contribution of \textcite{grandoni2022ptas} to unsplittable flows on a path.
Moreover, intermediate notions between splittable and unsplittable flows have been considered.
In particular, $k$-splittable flows, which have been introduced by \textcite{baier_2005_k-splittable}, allow for splitting the flow among up to $k$ paths.
(See \cite{kolliopoulos_2005_minimum-cost,koch2008maximum,salazar_2009_single-source} for further results on $k$-splittable flows.)

\subsection{Organization of the paper}
In \Cref{sec:overview}, we provide an overview of our approach.
A key ingredient is the construction of a nice path decomposition of the fractional flow, which we discuss in \Cref{sec:nice_path_decomp}.
\Cref{sec:properties_path_decomp} discusses and proves properties of nice path decompositions, which we crucially exploit later.
In \Cref{sec:discrepancy}, we present polynomial time algorithms to solve the highly structured prefix discrepancy problem that we need to solve to determine an appropriate selection of paths for the unsplittable flow problem, both for the variant with costs and without costs.
We conclude in \Cref{sec:conclusions} and,
finally, \Cref{sec:hardness_cyclic_graphs} shows why statements like \Cref{thm:main} or \Cref{conj:morellAndSkutella} cannot be extended to cyclic graphs, even when allowing for a violation of $O(d_{\max})$ instead of $d_{\max}$.
\section{Overview}\label{sec:overview}

To simplify the presentation, we assume in the following that no terminal has an outgoing arc.
This is without loss of generality, because we can add for each $t\in T$ a new vertex $t'$ together with an arc $a_t=(t,t')$ with $x(a_t) \coloneqq d_t$ to obtain an equivalent PSSUF instance with terminal set $\{ t' : t\in T\}$.
(Note that this modification indeed preserves planarity.)

In order to prove \Cref{thm:main} and \Cref{thm:main_cost} we will proceed in two steps.
First, we carefully decompose the given flow $x$ into paths starting at $s$, i.e., we compute paths $P_1, \dots, P_{\ell}$ with weights $\lambda_i >0$ for $i\in [\ell]$ such that
\[
x=\sum_{i=1}^{\ell} \lambda_i \cdot \chi^{P_i},
\]
where $\chi^{P_i}\in \{0,1\}^A$ denotes the incidence vector of $P_i$ and $[\ell]\coloneqq\{1,\dots,\ell\}$.
(Here we identify a path with its arc set.)
Because $G$ is acyclic, we have $x(\delta^+(s))=\sum_{i=1}^{\ell} d_i$ and hence every path $P_i$ ends at some terminal.
In a second step, we then select for each terminal $t\in T$ a path $P^t$ from the set 
\[
\mathcal{P}^t \coloneqq \{ P_i : i\in[\ell] \text{ and }P_i\text{ ends at terminal }t\}.
\]
\Cref{fig:need_to_choose_paths_carefully} shows that it is crucial to choose the path decomposition of the flow $x$ carefully for this approach to have a chance to lead to an unsplittable flow as claimed in \Cref{thm:main,thm:main_cost}.

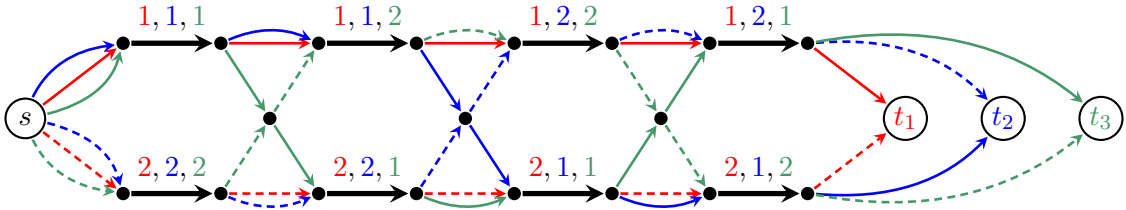
\begin{figure}[htb]
\begin{center}

\begin{tikzpicture}[xscale=1.3,
vertices/.style={circle,fill=black, draw=black, inner sep=1.5pt, thick}, outer sep=1pt,
arcs/.style={line width=1pt,-stealth},
ss/.style={fill=white,minimum size=15}
]

\colorlet{ct1}{red}
\colorlet{ct2}{blue}
\colorlet{ct3}{green!70!black}

\begin{scope}[every node/.style={vertices}]
\node[ss] (s) at (0,0) {$s$};
\node (1a) at (1,1) {};
\node (1b) at (2,1) {};
\node (2a) at (3,1) {};
\node (2b) at (4,1) {};
\node (3a) at (5,1) {};
\node (3b) at (6,1) {};
\node (4a) at (7,1) {};
\node (4b) at (8,1) {};
\node (5a) at (1,-1) {};
\node (5b) at (2,-1) {};
\node (6a) at (3,-1) {};
\node (6b) at (4,-1) {};
\node (7a) at (5,-1) {};
\node (7b) at (6,-1) {};
\node (8a) at (7,-1) {};
\node (8b) at (8,-1) {};
\node[ss,text=ct1] (t1) at (9,0) {$t_1$};
\node[ss,text=ct2] (t2) at (10,0) {$t_2$};
\node[ss,text=ct3] (t3) at (11,0) {$t_3$};
\node(p1) at (2.5,0) {};
\node(p2) at (4.5,0) {};
\node(p3) at (6.5,0) {};
\end{scope}

\begin{scope}[arcs]

\begin{scope}[line width=2]
\draw (1a) -- node[above] {$\textcolor{ct1}{1},\textcolor{ct2}{1},\textcolor{ct3}{1}$} (1b);
\draw (2a) -- node[above] {$\textcolor{ct1}{1},\textcolor{ct2}{1},\textcolor{ct3}{2}$} (2b);
\draw (3a) -- node[above] {$\textcolor{ct1}{1},\textcolor{ct2}{2},\textcolor{ct3}{2}$} (3b);
\draw (4a) -- node[above] {$\textcolor{ct1}{1},\textcolor{ct2}{2},\textcolor{ct3}{1}$} (4b);
\draw (5a) -- node[above] {$\textcolor{ct1}{2},\textcolor{ct2}{2},\textcolor{ct3}{2}$} (5b);
\draw (6a) -- node[above] {$\textcolor{ct1}{2},\textcolor{ct2}{2},\textcolor{ct3}{1}$} (6b);
\draw (7a) -- node[above] {$\textcolor{ct1}{2},\textcolor{ct2}{1},\textcolor{ct3}{1}$} (7b);
\draw (8a) -- node[above] {$\textcolor{ct1}{2},\textcolor{ct2}{1},\textcolor{ct3}{2}$} (8b);
\end{scope}

\begin{scope}[ct1]
\draw (s) -- (1a);
\draw (1b) -- (2a);
\draw (2b) -- (3a);
\draw (3b) -- (4a);
\draw (4b) -- (t1);
\end{scope}
\begin{scope}[ct1, densely dashed]
\draw (s) -- (5a);
\draw (5b) -- (6a);
\draw (6b) -- (7a);
\draw (7b) -- (8a);
\draw (8b) -- (t1);
\end{scope}
\begin{scope}[ct2]
\draw[bend left] (s) to (1a);
\draw[bend left] (1b) to (2a);
\draw (2b) to (p2);
\draw (p2) to (7a);
\draw[bend right] (7b) to (8a);
\draw[bend right] (8b) to (t2);
\end{scope}
\begin{scope}[ct2, densely dashed]
\draw[bend left] (s) to (5a);
\draw[bend right] (5b) to (6a);
\draw (6b) to (p2);
\draw (p2) to (3a);
\draw[bend left] (3b) to (4a);
\draw[bend left] (4b) to (t2);
\end{scope}
\begin{scope}[ct3]
\draw[bend right] (s) to (1a);
\draw (1b) to (p1);
\draw (p1) to (6a);
\draw[bend right] (6b) to (7a);
\draw (7b) to (p3);
\draw (p3) to (4a);
\draw[bend left] (4b) to (t3);
\end{scope}
\begin{scope}[ct3, densely dashed]
\draw[bend right] (s) to (5a);
\draw (5b) to  (p1);
\draw (p1) to (2a);
\draw[bend left] (2b) to (3a);
\draw (3b) to (p3);
\draw (p3) to (8a);
\draw[bend right] (8b) to (t3);
\end{scope}
\end{scope}

\begin{scope}

\end{scope}

\end{tikzpicture}
 \end{center}
\caption{The figure shows an example of a path decomposition of some flow $x$, which we define through the path decomposition in this example.
For each terminal $t\in T=\{{\color{red}t_1},{\color{blue}t_2},{\color{green!70!black}t_3}\}$, there are two $s$-$t$ paths $P_t^1$ and $P_t^2$ shown in the color of terminal $t$, where $P_t^1$ is drawn as solid arcs and $P_t^2$  as dashed ones.
For any combination $r\in \{1,2\}^T$ of first/second paths for each terminal, there is one black arc traversed by precisely this combination.
(The corresponding combination $r$ is highlighted above the black arcs.)
Every path has weight $1$ in our path decomposition of $x$.
Thus, $x(a)=|T|$ for each black arc $a$, and $d_t=2$ for all $t\in T$.
Then, no matter how we choose the paths $P^{t} \in \mathcal{P}^t=\{ P^1_t, P^2_t\}$,  there is an arc $a$ such that  $\flow_{\mathcal{P}}(a) = 2 |T| = x(a)+ \frac{1}{2} |T| \cdot d_{\max}$.
Note that this construction canonically extends to more than three terminals.
\label{fig:need_to_choose_paths_carefully}
}
\end{figure}

Our flow decomposition will be chosen such that the paths are pairwise non-crossing in a  geometric sense.
We will number the paths according to the order in which they leave the source $s$ (in a fixed planar embedding of our graph).
We show that our choice of the flow decomposition implies useful structural properties of the sets $\mathcal{P}^t$ (see \Cref{lem:non-interleaving}) and the sets $\{ i\in[\ell] : a\in P_i\}$ (see \Cref{lem:discr_equiv_to_interval}).\footnote{
The paths $P_i$ in $G$ will be called $\phi(P_i)$ in \Cref{lem:discr_equiv_to_interval}.
This is due to technical reasons explained in  \Cref{sec:nice_path_decomp}.}
These allow for reducing the selection of the paths $P^t\in \mathcal{P}^t$ to a highly structured discrepancy problem.

In the remainder of this section we provide a more detailed description of our approach.
We first describe the choice of our flow decomposition and the resulting structure of the sets $\mathcal{P}^t$ in \Cref{sec:overview_path_decomp} and then describe the discrepancy problem and our solution for it in \Cref{sec:overview_discrepancy}.
Finally, we show how the structure of the sets $\{ i\in[\ell] : a\in P_i\}$ allows for reducing the selection of the paths $P^t\in \mathcal{P}^t$ to this discrepancy problem.

\subsection{Choosing the path decomposition}\label{sec:overview_path_decomp}

We now provide an overview of how we choose the path decomposition of the flow $x$.
Further details and proofs will be provided in \Cref{sec:nice_path_decomp}.

We start by computing a planar embedding of the graph $G$, i.e., an embedding of the vertices and arcs in the plane such that vertices are mapped to distinct points and no two arcs intersect in a point distinct from their endpoints.
This can be done in linear time (see \cite{hopcroft_1974_efficient,chiba_1985_linear}).
Our goal will be to choose the paths $P_1, \dots, P_{\ell}$ such that they do not cross in a geometric sense.
To this end, it is helpful to first represent the paths $P_1,\dots, P_{\ell}$ in an auxiliary graph $H=(V,F)$ obtained from $G=(V,A)$, by replacing each arc $a\in A$ by a well-chosen number of parallel copies.
We call such graphs $H$ \emph{arc-split graphs} of $G$, and, for every $f\in F$, we denote by $\phi(f)\in A$ the arc in $G$ corresponding to $f$, i.e., $f$ is a parallel copy of $\phi(f)$.
The idea is to choose $H$ such that $P_1,\dots, P_{\ell}$ can be chosen to be arc-disjoint in $H$.

More precisely, given a PSSUF instance $(G,s,T,d,x)$, we start by constructing the following:
\begin{enumerate}[label=(\roman*)]
\item an arc-split graph $H=(V,F)$ of $G$ (with the associated mapping $\phi$);
\item\label{item:snNicePaths} a partition of $F$ into paths $P_1,\dots, P_{\ell}$, each of which is a path in $H$ from $s$ to some terminal $t\in T$, and such that these paths are (geometrically) non-crossing, and numbered counterclockwise around $s$;
\item\label{item:lambdas} coefficients $\lambda_1,\dots, \lambda_\ell \in \mathbb{Q}_{\geq 0}$ such that $\sum_{i=1}^{\ell}\lambda_i \chi^{\phi(P_i)} = x$.
\end{enumerate}

\noindent See \Cref{fig:arc-split_ex} for an example.

\begin{figure}[htb]
\begin{center}
\begin{tikzpicture}[xscale=1.08,
vertices/.style={circle,fill=black, draw=black, inner sep=1.5pt, thick}, outer sep=1pt,
arcs/.style={line width=1.3pt,-stealth, shorten >=1pt},
ss/.style={fill=white,minimum size=15},
p1/.style={solid},
p2/.style={dashed, dash pattern=on 8pt off 2pt},
p3/.style={densely dotted},
p4/.style={dashdotted, dash pattern=on 6pt off 2pt on 1pt off 2pt},
]

\small

\colorlet{ct1}{red!80!black}
\colorlet{ct2}{blue!80!black}
\colorlet{ct3}{green!50!black}
\colorlet{ct4}{cyan!80!black}

\begin{scope}

\begin{scope}[every node/.style={vertices},scale=0.5]

\begin{scope}[every node/.append style={ss}]
\node  (1) at (7.16,-9.04) {$s$};
\node[text=ct1]  (2) at (6.60,-15.00) {$t_1$};
\node[text=ct2]  (3) at (10.66,-11.74) {$t_2$};
\node[text=ct3]  (4) at (13.06,-6.56) {$t_3$};
\node[text=ct4]  (5) at (3.50,-6.30) {$t_4$};
\end{scope}

\node  (6) at (3.68,-11.70) {};
\node  (7) at (10.58,-7.66) {};
\node  (8) at (13.30,-9.00) {};
\node  (9) at (14.40,-15.00) {};
\node (10) at (8.00,-4.44) {};
\node (11) at (14.92,-5.38) {};
\end{scope}

\begin{scope}
\def\hs{0.65}
\node[align=right,ct1] at ($(2)+(-1.1*\hs,0)$) {$d_1\! =\! 7$};
\node[align=left,ct2] at ($(3)+(1.1*\hs,0)$) {$d_2\! =\! 5$};
\node[align=right,ct3] at ($(4)+(-1.1*\hs,0)$) {$d_3\! =\! 3$};
\node[align=right,ct4] at ($(5)+(0,0.8*\hs)$) {$d_4\! =\! 2$};
\end{scope}

\begin{scope}[arcs]
\draw  (9) --  (2) node[pos=0.7,above] {$4.2$};
\draw  (6) --  (2) node[midway,below left=-3pt] {$2.8$};
\draw  (7) --  (3) node[pos=0.5,right=-2pt] {$1.25$};
\draw  (6) --  (3) node[pos=0.5,above=-1pt] {$1.25$};
\draw  (1) --  (3) node[midway,above=2pt] {$2.5$};
\draw  (1) --  (5) node[pos=0.4,above=1pt] {$1.2$};
\draw  (1) --  (6) node[midway,above=3pt] {$5.45$};
\draw  (1) --  (7) node[pos=0.4,above=1pt] {$5.65$};
\draw  (1) -- (10) node[midway,left=-1pt] {$2.2$};
\draw  (7) --  (8) node[midway,above] {$2.9$};
\draw  (8) --  (9) node[pos=0.6,left=-1pt] {$1.4$};
\draw  (6) --  (9) node[midway,above] {$1.4$};
\draw  (7) -- (10) node[midway,right=1pt] {$1.5$};
\draw  (8) --  (4) node[midway,right=-1pt] {$1.5$};
\draw (10) --  (5) node[midway,above] {$0.8$};
\draw (10) -- (11) node[midway,above=-1pt] {$2.9$};
\draw (11) --  (4) node[pos=0.4,below=2pt] {$1.5$};
\draw (11) --  (9) node[pos=0.4,left=-2pt] {$1.4$};
\end{scope}
\end{scope}

\begin{scope}[xshift=7.5cm]

\begin{scope}[every node/.style={vertices},scale=0.5]

\begin{scope}[every node/.append style={ss}]
\node  (1) at (7.16,-9.04) {$s$};
\node[text=ct1]  (2) at (6.60,-15.00) {$t_1$};
\node[text=ct2]  (3) at (10.66,-11.74) {$t_2$};
\node[text=ct3]  (4) at (13.06,-6.56) {$t_3$};
\node[text=ct4]  (5) at (3.50,-6.30) {$t_4$};
\end{scope}

\node  (6) at (3.68,-11.70) {};
\node  (7) at (10.58,-7.66) {};
\node  (8) at (13.30,-9.00) {};
\node  (9) at (14.40,-15.00) {};
\node (10) at (8.00,-4.44) {};
\node (11) at (14.92,-5.38) {};
\end{scope}

\begin{scope}
\def\hs{0.65}
\node[align=right,ct1] at ($(2)+(-1.1*\hs,0)$) {$d_1\! =\! 7$};
\node[align=left,ct2] at ($(3)+(1.1*\hs,0)$) {$d_2\! =\! 5$};
\node[align=right,ct3] at ($(4)+(-1.1*\hs,0)$) {$d_3\! =\! 3$};
\node[align=right,ct4] at ($(5)+(0,0.8*\hs)$) {$d_4\! =\! 2$};
\end{scope}

\begin{scope}[arcs]

\begin{scope}[ct1]
\begin{scope}[p1]
\draw (1) to[out=180, in=90] node[above left=-4pt] {$P_1$} (6);
\draw (6) to[out=-60, in=140] (2);
\end{scope}

\begin{scope}[p2]
\draw (1) to[out=-160,in=60] node[below=-1pt] {$P_2$} (6);
\draw (6) to[out=-20,in=135] (9);
\draw (9) to[out=160,in=40] (2);
\end{scope}

\begin{scope}[p3]
\draw (1) to[bend right=20] node[below=-2pt] {$P_6$} (7);
\draw (7) to[bend right=20] (8);
\draw (8) to[bend right=0] (9);
\draw (9) to[out=170,in=10] (2);
\end{scope}

\begin{scope}[p4]
\draw (1) to[bend right=0] node[pos=0.3,left=-2pt] {$P_9$} (10);
\draw (10) to[out=10,in=135] (11);
\draw (11) to[bend right=0] (9);
\draw (9) to[bend left=15] (2);
\end{scope}

\end{scope}

\begin{scope}[ct2]

\begin{scope}[p1]
\draw (1) to[bend right=-20] node[pos=0.2,below=1pt] {$P_3$} (6);
\draw (6) to[bend right=0] (3);
\end{scope}

\begin{scope}[p2]
\draw (1) to[out=-80, in=150] node[pos=0.1,below=1pt] {$P_4$} (3);
\end{scope}

\begin{scope}[p3]
\draw (1) to[out=-50,in=-90] node[below=1pt] {$P_5$} (7);
\draw (7) to[out=-70,in=70] (3);
\end{scope}

\end{scope}

\begin{scope}[ct3]

\begin{scope}[p1]
\draw (1) to[bend right=0] node[above=-1pt] {$P_7$} (7);
\draw (7) to[bend right=0] (8);
\draw (8) to[bend right=0] (4);
\end{scope}

\begin{scope}[p2]
\draw (1) to[out=70,in=140] node[above=-1pt] {$P_8$} (7);
\draw (7) to[bend right=10] (10);
\draw (10) to[bend right=10] (11);
\draw (11) to[bend right=0] (4);
\end{scope}

\end{scope}

\begin{scope}[ct4]

\begin{scope}[p1]
\draw (1) to[out=130,in=-120] node[left=-1pt] {$P_{10}$} (10);
\draw (10) to[bend right=0] (5);
\end{scope}

\begin{scope}[p2]
\draw (1) to[out=160,in=-50] node[below=-1pt] {$P_{11}$} (5);
\end{scope}

\end{scope}

\end{scope}

\begin{scope}[shift={(8.3,-2.5)}]
\def\vs{0.5}
\node[ct1] at (0,0) {$\lambda_1 = \frac{14}{5}$};
\node[ct1] at (0,-\vs) {$\lambda_2 = \frac{7}{5}$};
\node[ct2] at (0,-2*\vs) {$\lambda_3 = \frac{5}{4}$};
\node[ct2] at (0,-3*\vs) {$\lambda_4 = \frac{5}{2}$};
\node[ct2] at (0,-4*\vs) {$\lambda_5 = \frac{5}{4}$};
\node[ct1] at (0,-5*\vs) {$\lambda_6 = \frac{7}{5}$};
\node[ct3] at (0,-6*\vs) {$\lambda_7 = \frac{3}{2}$};
\node[ct3] at (0,-7*\vs) {$\lambda_8 = \frac{3}{2}$};
\node[ct1] at (0,-8*\vs) {$\lambda_9 = \frac{7}{5}$};
\node[ct4] at (-0.05,-9*\vs) {$\lambda_{10} = \frac{4}{5}$};
\node[ct4] at (-0.05,-10*\vs) {$\lambda_{11} = \frac{6}{5}$};
\end{scope}

\end{scope}

\end{tikzpicture}
\end{center}
\caption{The left figure shows a PSSUF instance $(G,s,T,d,x)$ with $T=\{t_1,t_2,t_3,t_4\}$, and the $x$-values are shown next to the arcs.
The right figure shows an arc-split graph $H$ of $G$ together with a source-numbered nice $s$-path decomposition $(P_i,\lambda_i)_{i\in [11]}$.}
\label{fig:arc-split_ex}
\end{figure}
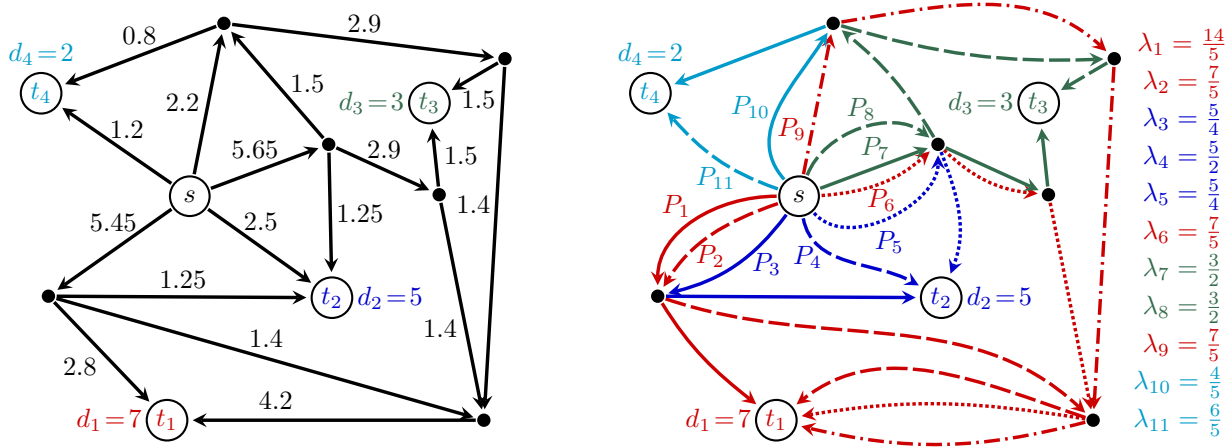

We call paths $P_1,\dots, P_{\ell}$ together with their coefficients $\lambda_1,\dots, \lambda_{\ell}$ fulfilling \cref{item:snNicePaths,item:lambdas} a \emph{source-numbered nice $s$-path decomposition} of $H$.
(For a more formal definition, see \Cref{sec:nice_path_decomp}.)
It can be thought of, as a very structured flow decomposition of $x$.

First, we show that such a decomposition can be obtained in polynomial time.
\begin{theorem}\label{thm:compute_arc_split_overview}
Let $(G=(V,A),s,T,d,x)$ be a PSSUF instance.
We can compute in $O(|V|^2)$ time an arc-split graph $H$ of $G$ together with a source-numbered nice $s$-path decomposition $(P_i,\lambda_i)_{i\in[\ell]}$ of $H$ with $\ell=O(|V|)$.
\end{theorem}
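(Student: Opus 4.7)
The plan is to construct the decomposition by iteratively peeling off paths from the flow $x$ in $G$ using a canonical ``leftmost-turn'' routing with respect to a fixed planar embedding, and only at the end to assemble the arc-split graph $H$ by creating one arc-copy per path use. First I would compute a planar embedding of $G$ in $O(|V|)$ time via \cite{hopcroft_1974_efficient}. Since $G$ is planar and acyclic, $|A| = O(|V|)$, which will be the key reason behind the bound $\ell = O(|V|)$.

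The path extraction proceeds as follows. Maintain a residual flow $x'$, initialized to $x$. To extract the $i$-th path, start at $s$ and choose the outgoing arc with $x'>0$ that is rotationally extreme (the clockwise-most in the rotation at $s$ among outgoing arcs with positive residual). Thereafter, at every intermediate vertex $v$ entered through an arc $a$, continue along the outgoing arc $a'$ with $x'(a')>0$ obtained from $a$ by the leftmost turn in the rotation at $v$. By acyclicity of $G$ (and the convention that terminals have no outgoing arcs), this walk is a simple $s$-$t$ path for some $t\in T$; call it $Q_i$. Set $\lambda_i \coloneqq \min_{a\in Q_i} x'(a)$ and update $x' \leftarrow x' - \lambda_i \chi^{Q_i}$. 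Each iteration drives the residual on at least one arc of $G$ to zero, so after at most $\ell \leq |A| = O(|V|)$ iterations $x'$ vanishes, and $\sum_{i=1}^\ell \lambda_i \chi^{Q_i} = x$. The arc-split graph $H = (V,F)$ is then obtained by inserting, for every $a\in A$, one parallel copy per path $Q_i$ using $a$ (with $\phi$ mapping each copy back to $a$); lifting each $Q_i$ to an arc-disjoint path $P_i$ in $H$ and drawing the parallel copies of each $a$ as a tight bundle preserves planarity and induces an embedding of $H$.

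The main obstacle will be proving the non-crossing property. Suppose toward contradiction that two paths $Q_i$ and $Q_j$ with $i<j$ cross at an internal vertex $v$, i.e., the cyclic order at $v$ of their four incident arcs alternates between the two paths. Let $a,a'$ be the arcs by which $Q_i$ respectively enters and leaves $v$, and let $b'$ be the arc by which $Q_j$ leaves $v$. At the moment $Q_i$ was extracted, the leftmost-turn rule selected $a'$ out of $v$ from $a$. But $b'$ had strictly positive residual at that moment (since $Q_j$ uses $b'$ at the later iteration $j>i$), and the alternating cyclic condition forces $b'$ to lie rotationally strictly between $a$ and $a'$ in the leftmost-turn sense, contradicting the choice of $a'$. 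An analogous argument at $s$ (using the initial rotational extremality rule) shows that the first arcs of $Q_1,\dots,Q_\ell$ appear in weakly counterclockwise order around $s$, establishing the source-numbering. Finally, since each iteration does $O(|V|)$ work (traversing a simple path in $G$ and updating residuals), the overall running time is $O(|V|^2)$, matching the claim.
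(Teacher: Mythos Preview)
Your overall plan—greedy leftmost-turn extraction in $G$, then bundling parallel arc-copies into an arc-split graph—is a plausible alternative to the paper's method, but the non-crossing argument as written has a real gap.

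The crossing condition at an internal vertex $v$ comprises \emph{two} cyclic configurations: in your notation (with $Q_i$ using $a,a'$ and $Q_j$ using $b,b'$, $i<j$), either $(a,b,a',b')$ or $(b,a,b',a')$ is a $\pi_v$-progression. Whichever rotational direction your ``leftmost turn'' scans from the entering arc, your claim that ``$b'$ lies strictly between $a$ and $a'$'' holds in only one of these two configurations; in the other it is the \emph{incoming} arc $b$ of $Q_j$ that sits between $a$ and $a'$ in the scan direction, and since the rule inspects only outgoing arcs this yields no contradiction for $Q_i$. Appealing instead to $Q_j$'s turn does not help: in that configuration the scan from $b$ meets $a'$ before $b'$, but $a'$ may have been driven to zero residual by iteration $j$ (it was used by $Q_i$ and possibly by intermediate paths), so $Q_j$ legitimately takes $b'$ and the crossing stands. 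Concretely, with four arcs at $v$ in counterclockwise order $b,a,b',a'$ (with $a,b\in\delta^-(v)$, $a',b'\in\delta^+(v)$, unit residual each), if the earlier path happens to enter via $a$ and the later via $b$, a clockwise-scanning leftmost turn produces $(a,a')$ followed by $(b,b')$, which is exactly the forbidden progression $(b,a,b',a')$.

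The paper sidesteps this by a different mechanism: before tracing any path it precomputes, at every $v\in V\setminus(T\cup\{s\})$, a \emph{good $v$-wiring}—an ordered list of incoming/outgoing pairs $(f_i,g_i)$ carrying the full $x$-flow through $v$ and satisfying, for all $i<j$, that neither $(f_i,f_j,g_i)$ nor $(f_i,g_j,g_i)$ is a $\pi^G_v$-progression. This two-sided local condition kills both crossing configurations simultaneously; paths are then routed greedily through these fixed wires, and the paper also specifies explicitly how to order the parallel copies of each arc in $H$ (earlier copies on one side at the tail and the opposite side at the head), a point you leave vague but which genuinely matters for whether two paths sharing an arc of $G$ cross in $H$ (cf.\ the paper's \Cref{fig:problem_without_arc_split}). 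If you want to salvage the leftmost-turn route, a one-vertex local contradiction will not suffice; you would need a more global argument—for instance, an induction showing that each extracted path is the rightmost $s$--$T$ path in the current residual support and hence lies entirely to one side of every later path.
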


Recall, that in the second step of our algorithm, we will choose our PSSUF solution from the paths $P_1,\dots, P_{\ell}$ (or more formally, the corresponding paths $\phi(P_1),\dots, \phi(P_{\ell})$ in $G$).
More precisely, for each terminal $t\in T$, we will pick one path among
\begin{equation*}
\mathcal{P}^t \coloneqq \{ P_i : i\in[\ell] \text{ and }P_i\text{ ends at terminal }t\}.
\end{equation*}
One important consequence of the paths $(P_i)_{i\in [\ell]}$ being non-crossing, is that the sets $S^t\coloneqq \{ i \in [\ell]: P_i \in \mathcal{P}^t\}$ for $t \in T$ are \emph{non-interleaving}, as we will prove in \Cref{sec:non-interleaving}.

\begin{definition}[Non-interleaving]
Two disjoint sets $S_1,S_2 \subseteq \mathbb{Z}_{>0}$ are \emph{interleaving} if there exist $a_1,b_1 \in S_1$, $a_2,b_2 \in S_2$ with $ a_1 < a_2 <b_1 <b_2$ or $a_2 < a_1 < b_2 < b_1$.
 A partition $\mathcal{S}$ of $[\ell]$ is \emph{non-interleaving} if for any $S_1,S_2 \in \mathcal{S}$ with $S_1 \neq S_2$ the sets $S_1$ and $S_2$ are non-interleaving.
\end{definition}

\begin{restatable}{relem}{nonInterleaving}\label{lem:non-interleaving}
For $t\in T$ let  $S^t \coloneqq\{ i \in [\ell]: P_i \in \mathcal{P}^t \}$.
Then the  partition $\{ S^t : t \in T\}$ of $[\ell]$  is non-interleaving. 
\end{restatable}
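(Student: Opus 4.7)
The plan is to argue by contradiction using a planar topological argument. Suppose, toward a contradiction, that there are two distinct terminals $t_1, t_2 \in T$ and indices $a_1 < a_2 < b_1 < b_2$ with $P_{a_1}, P_{b_1} \in \mathcal{P}^{t_1}$ and $P_{a_2}, P_{b_2} \in \mathcal{P}^{t_2}$ (the symmetric case is handled by swapping the roles of $t_1$ and $t_2$). First I would form the closed curve $C$ in the planar embedding of $H$ obtained by traversing $P_{a_1}$ from $s$ to $t_1$ and then the reverse of $P_{b_1}$ from $t_1$ back to $s$. Because $P_{a_1}$ and $P_{b_1}$ are arc-disjoint and geometrically non-crossing, $C$ does not cross itself, though it may touch itself at vertices shared by the two paths. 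A planar-topology argument (discussed in the last paragraph) then yields that $C$ separates the plane into two connected regions, which I will call $R_{\mathrm{in}}$ and $R_{\mathrm{out}}$.

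Next I would use the source-numbering hypothesis to locate the other paths and then propagate along them. At $s$, the first arcs of $P_{a_1}$ and $P_{b_1}$ split the cyclic order of out-arcs at $s$ into two open arcs; by the counterclockwise numbering the indices $i$ with $a_1 < i < b_1$ are precisely those whose first arc lies in one of these open arcs, and the indices with $i < a_1$ or $i > b_1$ lie in the other. These two open arcs are contained, locally near $s$, in $R_{\mathrm{in}}$ and $R_{\mathrm{out}}$ respectively, so $P_{a_2}$ begins in $R_{\mathrm{in}}$ while $P_{b_2}$ begins in $R_{\mathrm{out}}$. Since every other path in the decomposition is arc-disjoint from $P_{a_1}$ and $P_{b_1}$ and non-crossing with them, at every vertex that $P_{a_2}$ shares with $C$ its incoming and outgoing arcs lie on the same side of $C$ in the rotation; hence $P_{a_2}$ cannot pass from $R_{\mathrm{in}}$ to $R_{\mathrm{out}}$ and stays in $\overline{R_{\mathrm{in}}}$. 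By the same reasoning $P_{b_2}$ stays in $\overline{R_{\mathrm{out}}}$. Their common endpoint $t_2$ therefore lies in $\overline{R_{\mathrm{in}}} \cap \overline{R_{\mathrm{out}}} = C$, so $t_2$ is a vertex of $P_{a_1} \cup P_{b_1}$. By the simplification at the start of Section~\ref{sec:overview}, every terminal has out-degree zero, so $t_2$ cannot appear as an internal vertex of any path; thus $t_2 \in \{s, t_1\}$, contradicting the distinctness of $s$, $t_1$, $t_2$.

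The main obstacle I anticipate is making the separation claim rigorous when $P_{a_1}$ and $P_{b_1}$ share internal vertices, so that $C$ is not literally a simple Jordan curve. Rather than invoking the Jordan curve theorem on $C$ itself, I plan to split each shared vertex $v$ into two nearby copies, one inherited by $P_{a_1}$ and one by $P_{b_1}$, using the fact that the non-crossing condition forces the incident arcs of the two paths to be consecutive in the rotation at $v$; this produces a genuinely simple closed curve $C'$ with the same combinatorial relation to the remaining paths as $C$. Applying the Jordan curve theorem to $C'$ and then undoing the perturbation yields the required decomposition of the plane and justifies each claim about which side of $C$ a given path lies on.
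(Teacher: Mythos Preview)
Your proposal is correct and follows the same high-level strategy as the paper: the two $s$--$t_1$ paths $P_{a_1},P_{b_1}$ separate the plane into two regions, the source-numbering forces $P_{a_2}$ and $P_{b_2}$ to start in different regions, the non-crossing property prevents them from switching regions, and hence their common endpoint $t_2$ must lie on $P_{a_1}\cup P_{b_1}$, contradicting that terminals have out-degree zero.

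The difference lies in how the separation is formalized. You invoke the Jordan curve theorem on a perturbed curve $C'$ obtained by splitting shared vertices; the paper instead takes the subgraph $C=P_{a_1}\cup P_{b_1}$, observes that all its vertex degrees are even so its planar dual is bipartite, and $2$-colors the faces (the ``tri-coloring''). This yields the two-region decomposition directly, with no perturbation needed and no case analysis for self-touchings. The paper then packages the ``cannot switch sides'' step as a standalone Lemma~\ref{lem:only_one_change_of_color}, proved by a local case analysis at vertices of $C$ with $2$ or $4$ incident $C$-arcs; this lemma is reused verbatim in the proof of Lemma~\ref{lem:discr_equiv_to_interval}. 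Your approach is more self-contained for this single lemma, but the dual-bipartiteness trick is cleaner (it sidesteps the Jordan curve theorem and the perturbation entirely) and yields reusable machinery. One small imprecision: you write that non-crossing forces the two paths' arcs to be ``consecutive in the rotation at $v$''; what you need (and what holds) is that among the four arcs of $P_{a_1},P_{b_1}$ at $v$, the two arcs of each path are consecutive in the induced cyclic order of those four arcs---equivalently, they do not alternate---which is exactly the negation of the crossing condition.
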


This is a key structure we will exploit in our discrepancy-based approach to select one path in each $\mathcal{P}^t$ for each $t\in T$, which we discuss next.

\subsection{Interval-discrepancy on non-interleaving partitions}\label{sec:overview_discrepancy}

We now discuss the selection of paths $P^t\in \mathcal{P}^t$ for all $t\in T$, which we formulate as a discrepancy problem.

\begin{definition}[Selections]
Let $\mathcal{S}$ be a partition of $[\ell]$. Then
\begin{itemize}
\item a \emph{fractional selection} for $\mathcal{S}$ is a vector $y \in \{ q \in [0,1]^\ell \colon q(S) =1 \; \forall S \in \mathcal{S} \}$, and
\item an \emph{integral selection} for $\mathcal{S}$ is a vector $z \in \{ q \in \{0,1\}^\ell \colon q(S) =1 \; \forall S \in \mathcal{S} \}$.
\end{itemize}
\end{definition}

We will apply this definition to the partition $\mathcal{S}$ consisting of the sets  $S^t\coloneqq \{ i \in [\ell]: P_i \in \mathcal{P}^t\}$ with $t \in T$.
The $s$-path decomposition $(P_i, \lambda_i)_{i\in [\ell]}$ naturally gives rise to a fractional selection $y$ for $\mathcal{S}$ by setting $y_i \coloneqq \frac{\lambda_i}{d_t}$ where $t\in T$ is the terminal where the path $P_i$ ends.
Then the vector $(\lambda_i)_{i\in [\ell]}$ will be called the \emph{load vector}.

\begin{definition}[Load vector]
Given a vector $y\in [0,1]^\ell$ and a demand vector $d\in \mathbb{Q}^{\mathcal{S}}_{\ge 0}$, the \emph{load vector} $y^d \in \mathbb{Q}_{\geq 0}^\ell$ is defined by 
\[
y_i^d \coloneqq d_S \cdot y_i,
\]
 where $S \in \mathcal{S}$ is the set containing $i\in [\ell]$.   
\end{definition}

An instance of our discrepancy problem consists of the integer $\ell$, the partition $\mathcal{S}$, the demand vector $d$, and a fractional selection $y$.

\begin{definition}[Weighted partition-constrained selection instance, WPCS instance]
We call a tuple $(\ell,\mathcal{S}, d,y)$ a \emph{weighted partition-constrained selection instance} (for short \emph{WPCS instance}) if
\begin{itemize}
\item $\ell$ is a nonnegative integer,
\item $\mathcal{S}$ is a partition of $[\ell]$,
\item $d$ is a vector in $\mathbb{Q}^{\mathcal{S}}_{\ge 0}$, and
\item $y$ is a fractional selection for $\mathcal{S}$.
\end{itemize}
   \end{definition}

We prove \Cref{thm:main} by finding an integral selection $z$ such that for each arc $a\in A$,
\begin{equation}\label{eq:desired_bound_on_z}
 x(a) - d_{\max} \le \sum_{i\in [\ell]: a\in \phi(P_i) } z^d_i  \le x(a) + d_{\max}.
\end{equation}
Because $(P_i, \lambda_i)_{i\in [\ell]}$ is an $s$-path decomposition, we have 
\[
x(a) = \sum_{i\in [\ell]: a\in \phi(P_i) } \lambda_i = \sum_{i\in [\ell]: a\in \phi(P_i) } y^d_i,
\]
and thus \eqref{eq:desired_bound_on_z} is equivalent to the condition that the \emph{$(y,z)$-discrepancy} of the set $\{i\in [\ell]: a\in \phi(P_i) \}$ is bounded by $d_{\max}$, where the $(y,z)$-discrepancy is defined as follows:

\begin{definition}[Discrepancy]
For a set $I \subseteq [\ell]$ and fractional selections $y$,$z$, we define the \emph{$(y,z)$-discrepancy} of $I$ as 
\[D_{y,z} (I) \coloneqq \abs{y^d(I) - z^d(I)}.\]
\end{definition}

Clearly, in general there is no integral selection $z$ such that the $(y,z)$-discrepancy for every set $I\subseteq [\ell]$ is small. 
(For example, if $y$ is a fractional selection with only small entries, the index set corresponding to the selected elements in any integral selection has very high discrepancy.)
However, we will be able to find an integral selection $z$ such that the $(y,z)$-discrepancy is at most $d_{\max}$ for all \emph{circular intervals}.

\begin{definition}[Circular interval]
For $i,j \in  [\ell]$, the circular interval from $i$ to $j$ is the set 
\begin{itemize}
\item $\{i, i+1, \ldots , j \}$ if $i \leq j$, and
\item $\{i, i+1, \dots, \ell, 1, \dots, j \}$ if $j < i$.
\end{itemize}
A set $I\subseteq [\ell]$ is called a \emph{circular interval} if $I= \emptyset$ or if $I$ is the circular interval from $i$ to $j$
for some $i,j \in  [\ell]$.
\end{definition}

We call the maximum $(y,z)$-discrepancy of any circular interval the \emph{$(y,z)$-interval-discrepancy}.

\begin{definition}[Interval-discrepancy]
For a circular interval $I \subseteq [\ell]$ and fractional selections $y$,$z$, we define the $(y,z)$-interval-discrepancy as 
\[
D_{y,z} \coloneqq \max_{\substack{I \subseteq [\ell] \\ \text{circular interval}}} D_{y,z} (I).
\]
\end{definition}

In \Cref{sec:discrepancy} we will prove the following statement.
\begin{restatable}{rethm}{discrepancyMain}\label{thm:discrepancy_main}
    Consider a WPCS instance $(\ell,\mathcal{S}, d,y)$. 
    If $\mathcal{S}$ is non-interleaving, there exists an integral selection  $z$ with
    \[
    D_{y,z} \leq  d_{\max},
    \]
    where $d_{\max} \coloneqq \max_{S \in \mathcal{S}} d_S$.
    Moreover, it can be computed in time $O(\ell)$. 
\end{restatable}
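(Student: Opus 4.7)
The plan is to reformulate the interval-discrepancy as a range condition on cumulative sums and then construct $z$ recursively, exploiting the fact that every non-interleaving partition contains an innermost contiguous part.

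First, I define $r_k \coloneqq \sum_{i=1}^k (z_i - y_i)\, d_{S(i)}$ for $k\in\{0,1,\ldots,\ell\}$, where $S(i)$ denotes the part of $\mathcal{S}$ containing $i$. Since $y$ and $z$ are both selections, $r_\ell = \sum_{S\in\mathcal{S}} d_S(z(S) - y(S)) = 0$. A brief case analysis on whether a circular interval $I$ wraps around, using $r_\ell = 0$ to absorb the wrap-around contribution, shows $y^d(I) - z^d(I) = r_j - r_k$ for suitable $j,k\in\{0,\ldots,\ell\}$. Hence $D_{y,z} = \max_k r_k - \min_k r_k$, so the task reduces to producing an integral selection $z$ whose cumulative-sum range is at most $d_{\max}$.

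Next, I establish the key structural lemma: any non-interleaving partition contains a part $S^{*}$ whose support is a contiguous run $\{a,a+1,\ldots,b\}$. Taking any part of minimum span $\max S - \min S$, an interior gap would contain an element of some other part $S'$; by non-interleaving, $S'$ cannot also have elements outside $[\min S, \max S]$ without inducing a $1212$-pattern, so $S' \subseteq (\min S, \max S)$ would have strictly smaller span, a contradiction. This enables an induction along the nesting forest of $\mathcal{S}$: extract $S^{*}$, recursively solve the smaller instance on $[\ell]\setminus S^{*}$ (obtaining cumulative sums $r'$ with range bounded by $\max_{S\ne S^{*}} d_S$), then extend by selecting one element $i^{*}\in S^{*}$. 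Because $S^{*}$ is contiguous, its contribution to $r$ is a pulse supported on $[a,b]$ starting and ending at $0$, with trough $-d_{S^{*}}\!\sum_{i\in S^{*}, i<i^{*}} y_i$ and crest $d_{S^{*}}\!\sum_{i\in S^{*}, i>i^{*}} y_i$. The $r$-values outside $[a,b]$ therefore coincide with $r'$, and inside $[a,b]$ one has $r_k = r'_{a-1} + c(k)$ for the pulse $c$ determined by $i^{*}$. I would choose $i^{*}$ so that the trough fits into the slack between $r'_{a-1}$ and $\min r'$ and the crest fits into the slack between $\max r'$ and $r'_{a-1}$, keeping the total range at most $d_{\max}$.

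The main obstacle is proving this range bound survives the induction when $d_{S^{*}}$ is comparable to or exceeds the recursive range, since a naive local argument only yields a bound of $2d_{\max}$. Overcoming this requires a more global coordination of the representative choices, processing parts along the nesting forest in an order and with a balancing rule that always leaves enough slack in $r'$ to absorb the incoming pulse; the combinatorial content is essentially that the nesting structure, viewed as a tree, lets one alternate trough/crest placements at consecutive depths so that nested pulses cancel rather than accumulate. For the stated $O(\ell)$ running time, I would implement the recursion as a single left-to-right sweep using a stack that tracks currently-open parts and detects contiguous parts on the fly, with constant amortized work per position.
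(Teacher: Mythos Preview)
Your range reformulation and the contiguous-part lemma are correct, and your recursive scheme can in fact be made to work, but the inductive step as written has a gap, and it is not the one you flag. You aim to choose $i^*$ so that the pulse fits entirely inside the existing recursive range $[\min r', \max r']$, i.e., $|\text{trough}| \le \alpha \coloneqq r'_{a-1} - \min r'$ and $\text{crest} \le \beta \coloneqq \max r' - r'_{a-1}$. This target is too restrictive and is generally infeasible (already when $S^*$ is the only part and the recursive range is $0$). The correct target is the weaker pair $\text{crest}(i^*) \le d_{\max} - \alpha$ and $|\text{trough}(i^*)| \le d_{\max} - \beta$, which only asks that the range stay within $d_{\max}$. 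With this target the induction closes immediately: pick the smallest $i^* \in S^*$ with $\text{crest}(i^*) \le d_{\max} - \alpha$ (this exists since $\text{crest}(b)=0$); if $i^* = a$ then $|\text{trough}(i^*)| = 0$, and if $i^* > a$ then the identity $\text{crest}(i^*-1) + |\text{trough}(i^*)| = d_{S^*}$ combined with $\text{crest}(i^*-1) > d_{\max} - \alpha$ gives $|\text{trough}(i^*)| < d_{S^*} - d_{\max} + \alpha \le d_{\max} - \beta$, the last step using $d_{S^*} \le d_{\max}$ and $\alpha + \beta \le d_{\max}$. So no ``global coordination'' or alternating scheme is needed; the straightforward recursion already yields $d_{\max}$, not $2d_{\max}$.

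For comparison, the paper's proof is quite different: it runs a single left-to-right greedy pass maintaining the stronger prefix invariant $|r_k| \le d_{\max}/2$ for all $k$ (which immediately gives interval discrepancy $\le d_{\max}$). The two forced choices (an earlier element of the current part was already selected, or the current element is the last of its part with none yet selected) are handled via the key lemma that for $i \le j$ in the same part $S$, the difference $r_j - r_{i-1}$ depends only on $z$ restricted to $S$; this is where non-interleaving enters in their argument. Both approaches give $O(\ell)$ algorithms.
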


We will also prove that if we are given costs $c\in \mathbb{Q}^{\ell}$ and require the selection $z$ to be no more expensive than the fractional selection $y$, then we can still achieve an upper bound of $2 d_{\max}$ on the interval discrepancy.
This cost version of the selection problem will be used to prove \Cref{thm:main_cost}.

\begin{restatable}{rethm}{discrepancyMainCosts}\label{thm:discrepancy_main_costs}
    Consider a WPCS instance $(\ell,\mathcal{S}, d,y)$ and let $c \in \mathbb{Q}^{\ell}$.
    If $\mathcal{S}$ is non-interleaving, there exists an integral selection $z$ with
    \begin{align*}
    D_{y,z} &\leq 2 d_{\max} \quad \text{ and}\\
    c^Tz &\leq c^Ty,
    \end{align*}
    where $d_{\max} \coloneqq \max_{S \in \mathcal{S}} d_S$
        Moreover, it can be computed in time $O(\ell \cdot \inp)$, where $\inp$ refers to the length of the bit encoding of the input. 
\end{restatable}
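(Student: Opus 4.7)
The plan is to reduce the cost-constrained version to the cost-free Theorem~\ref{thm:discrepancy_main} in two stages, using the extra $d_{\max}$ of slack in the interval-discrepancy bound to accommodate the cost constraint.

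\emph{Stage 1 (cost-improving preprocessing).} I first replace $y$ by a cheaper fractional selection $\tilde y$ that still sits inside a $d_{\max}$ interval-discrepancy ball around $y$. Concretely, consider the polytope
\[
Q = \Bigl\{y' \in [0,1]^{\ell} : y'(S)=1 \;\forall S \in \mathcal{S},\; \bigl|y^d(I)-(y')^d(I)\bigr| \le d_{\max}\; \forall\text{ circular intervals } I\Bigr\},
\]
and take $\tilde y$ to be any cost-optimal element of $Q$. Since $y \in Q$ trivially (zero self-discrepancy), one has $c^T\tilde y \le c^Ty$ and, by construction, $D_{y,\tilde y} \le d_{\max}$. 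Exploiting the laminar structure induced by the non-interleaving partition $\mathcal{S}$, the polytope $Q$ admits a compact description and $\tilde y$ can be computed combinatorially in $O(\ell \cdot \inp)$ time via an appropriate primal-dual descent.

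\emph{Stage 2 (integral rounding).} I then apply Theorem~\ref{thm:discrepancy_main} to the WPCS instance $(\ell,\mathcal{S},d,\tilde y)$, producing an integral selection $z$ with $D_{\tilde y, z} \le d_{\max}$. The triangle inequality yields
\[
D_{y,z} \le D_{y,\tilde y} + D_{\tilde y, z} \le 2 d_{\max},
\]
establishing the interval-discrepancy guarantee.

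\emph{Main obstacle: the cost bound.} What remains, and what I expect to be the technical heart of the argument, is ensuring $c^Tz \le c^Ty$: Theorem~\ref{thm:discrepancy_main} by itself offers no cost control, and different integral selections in the $d_{\max}$-ball around $\tilde y$ may vary widely in cost. My plan is to show that the cost optimality of $\tilde y$ in $Q$ forces $\tilde y$ to be a convex combination of integral selections with $(\tilde y,\cdot)$-interval-discrepancy at most $d_{\max}$; by averaging, at least one such integral selection has cost no greater than $c^T\tilde y \le c^Ty$. By equipping the algorithm of Theorem~\ref{thm:discrepancy_main} with a cost-sensitive tie-breaking rule (choosing cheaper alternatives whenever the discrepancy constraint leaves a choice), I can then constructively output such a cheap $z$. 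Establishing the required integrality/convex-combination statement is where the non-interleaving (laminar) structure of $\mathcal{S}$ plays the decisive role, since it makes the relevant discrepancy polytope combinatorially tractable.
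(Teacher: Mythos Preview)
Your two-stage framework delivers the discrepancy bound via the triangle inequality, but the cost bound is left genuinely open, and the plan you sketch for it does not work as stated. The central claim you need is that $\tilde y$ lies in the convex hull of integral selections $z$ with $D_{\tilde y,z}\le d_{\max}$; averaging would then hand you a cheap one. But observe that this is a purely structural statement about the discrepancy polytope around $\tilde y$ and has nothing to do with the cost-optimality of $\tilde y$ within $Q$. If it holds for the optimal $\tilde y$, there is no reason it should fail for the original $y$, which would make Stage~1 superfluous and in fact prove the stronger bound $D_{y,z}\le d_{\max}$ together with the cost constraint. Whether that stronger statement is even true is not addressed by anything in your outline; establishing it is at least as hard as the theorem itself. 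Note also that the vertices of $Q$ are \emph{not} automatically integral (the interval-discrepancy inequalities can be tight at fractional points), so minimizing cost over $Q$ does not locate $\tilde y$ near an integral selection. Your fallback of equipping the algorithm of Theorem~\ref{thm:discrepancy_main} with cost-sensitive tie-breaking is likewise unsubstantiated: in that greedy procedure many moves are forced (the already-selected and last-element cases), and in the remaining case the choice is dictated by a discrepancy threshold, leaving no room to consult costs without breaking the $d_{\max}/2$ prefix guarantee.

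The paper takes a completely different route. It first treats the half-integral special case ($|S|=2$ for all $S$ and $y\equiv\tfrac12$): run the cost-free algorithm to obtain $z$ with prefix discrepancy at most $d_{\max}/2$, observe that the complementary selection $\bar z$ (swap the chosen element in each pair) has the \emph{same} prefix discrepancies because $y^d-\bar z^d=-(y^d-z^d)$, and since $z+\bar z=2y$ one of the two is no more expensive than $y$. The general case is then reduced to this by iterated bit-rounding: isolate the elements where $y$ is not $2^{-k+1}$-integral, pair them up within each $S$ (non-interleaving is preserved), apply the half-integral lemma to round one bit, and repeat. The errors form a geometric series summing to at most $d_{\max}$ on prefixes, hence $2d_{\max}$ on circular intervals, and the cost inequality is maintained at every step by the complement trick, with no polytope optimization or convex-decomposition argument required.
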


To obtain \Cref{thm:discrepancy_main}, we exploit that $\mathcal{S}$ is non-interleaving to prove that a simple greedy algorithm yields an integral selection with the desired properties.
For showing \Cref{thm:discrepancy_main_costs}, we first use a common argument in discrepancy theory (see, e.g, \cite{lovasz_1986_discrepancy}, and \cite{bansal2022flow} for a recent application in a scheduling context), allowing us to reduce to the special case where each set $S\in \mathcal{S}$ has size two and the given fractional selection $y$ fulfills $y_i=\frac{1}{2}$ for all $i\in[\ell]$; this reduction comes at the cost of losing a factor of two in our discrepancy bound.
In the special case we just described, we can for each integral selection $z$ define a complementary integral selection $\overline{z}$ that selects for each $S\in \mathcal{S}$ the element not selected by $z$.
We observe that whenever $z$ fulfills the desired discrepancy bounds, then so does $\overline{z}$.
Moreover, one of the selections $z$ and $\overline{z}$ is not more expensive than $y$.
Hence, we can apply the same algorithm used to prove \Cref{thm:discrepancy_main} to this half-integral discrepancy problem in order to find an integral selection $z$ fulfilling the desired discrepancy bounds and then return the cheaper selection of $z$ and $\overline{z}$.
For details, see \Cref{sec:discrepancy}.

\subsection{Bounding interval-discrepancy suffices}

Recall that our goal is to find an integral selection $z$ such that the discrepancy of the set $\{i\in [\ell]: a\in \phi(P_i) \}$ is small for every arc $a\in A$.
By \Cref{thm:discrepancy_main} and \Cref{thm:discrepancy_main_costs} we can achieve a small discrepancy for every circular interval.
However, not every set $\{i\in [\ell]: a\in \phi(P_i) \}$ with $a\in A$ is a circular interval, as the example in \Cref{fig:not-interval} shows.

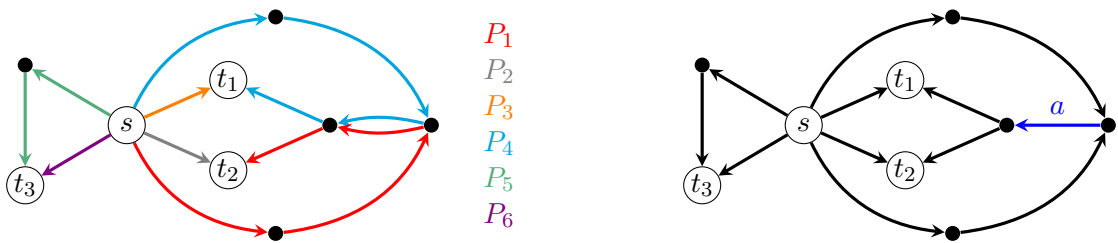
\begin{figure}[htb]
\begin{center}
\begin{tikzpicture}[rotate=-90,xscale=0.8, yscale=0.9,
terminals/.style={circle,draw=black, inner sep=0pt, minimum size=14pt,},
vertices/.style={circle,fill=black,inner sep=2pt},
arcs/.style={line width=1.2pt,-stealth},
p1/.style={red},
p2/.style={gray},
p3/.style={orange},
p4/.style={cyan!90!black},
p5/.style={green!80!black},
p6/.style={violet},
]

\begin{scope}
\node[terminals] (s) at (0,0) {$s$};
\node[terminals] (t1) at (-0.75,1.5) {$t_1$};
\node[terminals] (t2) at (0.75,1.5) {$t_2$};
\node[terminals] (t3) at (1,-1.5) {$t_3$};
\node[vertices] (1) at (-1,-1.5) {};
\node[vertices] (2) at (-1.8,2.2) {};
\node[vertices] (3) at (1.8,2.2) {};
\node[vertices] (4) at (0,4.5) {};
\node[vertices] (5) at (0,3) {};
\end{scope}

\begin{scope}[arcs]
\begin{scope}[p4]
\draw[bend left] (s) to (2);
\draw[bend left] (2) to (4);
\draw[bend right=15] (4) to (5);
\draw (5) -- (t1);
\end{scope}
\draw[p3] (s) to (t1);
\begin{scope}[p1]
\draw[bend right] (s) to (3);
\draw[bend right] (3) to (4);
\draw[bend left=15] (4) to (5);
\draw (5) -- (t2);
\end{scope}
\draw[p2] (s) to (t2);
\begin{scope}[p5]
\draw (s) to (1);
\draw (1) to (t3);
\end{scope}
\draw[p6] (s) to (t3);
\end{scope}

\node[p1] at (-1.5,5.5) {$P_1$};
\node[p2] at (-0.9,5.5) {$P_2$};
\node[p3] at (-0.3,5.5) {$P_3$};
\node[p4] at (0.3,5.5) {$P_4$};
\node[p5] at (0.9,5.5) {$P_5$};
\node[p6] at (1.5,5.5) {$P_6$};

\begin{scope}[shift={(0,10)}]
\node[terminals] (s) at (0,0) {$s$};
\node[terminals] (t1) at (-0.75,1.5) {$t_1$};
\node[terminals] (t2) at (0.75,1.5) {$t_2$};
\node[terminals] (t3) at (1,-1.5) {$t_3$};
\node[vertices] (1) at (-1,-1.5) {};
\node[vertices] (2) at (-1.8,2.2) {};
\node[vertices] (3) at (1.8,2.2) {};
\node[vertices] (4) at (0,4.5) {};
\node[vertices] (5) at (0,3) {};

\begin{scope}[arcs]
\draw[bend left] (s) to (2);
\draw[bend left] (2) to (4);
\draw (5) -- (t1);
\draw (s) to (t1);
\draw[bend right] (s) to (3);
\draw[bend right] (3) to (4);
\draw (5) -- (t2);
\draw (s) to (t2);
\draw (s) to (1);
\draw (1) to (t3);
\draw (s) to (t3);
\draw[blue] (4) -- node[above] {$a$} (5);
\end{scope}
\end{scope}

\end{tikzpicture}

 \end{center}
\caption{\label{fig:not-interval}
The figure shows an example of an arc-split graph $H$ (left) with a family of paths $P_1,\dots, P_6$ from a source-numbered nice $s$-path decomposition.
The graph $G$ is shown on the right with an arc $a$ highlighted in blue.
In this example we have $\{i\in [\ell]: a\in \phi(P_i) \} = \{1,4\}$, which is not a circular interval.
}
\end{figure}

Nevertheless, we will prove that \Cref{thm:discrepancy_main} and \Cref{thm:discrepancy_main_costs} can be used to bound the discrepancy of all sets $\{i\in [\ell]: a\in \phi(P_i) \}$ with $a\in A$.
In order to prove this, we define an equivalence relation on the set $2^{[\ell]}$ of all subsets of $[\ell]$.
Recall that $S^t \coloneqq \{ i\in [\ell] : P_i\text{ ends at terminal }t\}$.

\begin{definition}[$S^t$-addition and $S^t$-removal]
Let $t\in T$.
We say that a set $Y\subseteq [\ell]$ \emph{arises from }$X\subseteq [\ell]$\emph{ by $S^t$-addition}, if $X\cap S^t = \emptyset$ and $Y=X\cup S^t$. 
We say that a set $Y\subseteq [\ell]$ \emph{arises from }$X\subseteq [\ell]$\emph{ by $S^t$-deletion}, if $S^t \subseteq X$ and $Y=X\setminus S^t$.
\end{definition}

If a set $Y\subseteq [\ell]$ arises from $X \subseteq [\ell]$ by (potentially several) $S^t$-additions or $S^t$-removals (for some terminals $t\in T$),  the $(y,z)$-discrepancy of $X$ and $Y$ is identical for every integral selection $z$.
For this reason we will then say that $X$ and $Y$ are \emph{discrepancy-equivalent}.

\begin{definition}[Discrepancy-equivalent]
Two sets $X, Y \subseteq [\ell]$ are \emph{discrepancy-equivalent} if $Y$ arises from $X$ by a sequence of $S^t$-additions and $S^t$-removals for some terminals $t\in T$.
\end{definition}

Note that this indeed defines an equivalence relation.
We will prove in \Cref{sec:discr_equiv} the statement below, which says that the set of indices of all paths in $\mathcal{P}$ containing a copy of the same given arc $a\in A$ are discrepancy-equivalent to a circular interval.
For example, the indices of the paths in \Cref{fig:not-interval} that go over (a copy of) arc $a$, which are $1$ and $4$, are discrepancy-equivalent to the circular interval $\{4,5,6,1\}$, which is obtained from $\{1,4\}$ by $S^{t_3}$-addition.
\begin{restatable}{relem}{lemDiscrEquivToInterval}\label{lem:discr_equiv_to_interval}
For every arc $a \in A$, the set $\{  i \in [\ell]: a \in \phi(P_i)  \}$ is discrepancy-equivalent to a circular interval.
\end{restatable}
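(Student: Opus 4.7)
The plan is to exhibit an explicit circular interval $J \subseteq [\ell]$ such that $I_a := \{i \in [\ell] : a \in \phi(P_i)\}$ and $J$ differ by a disjoint union of complete sets $S^t$, where each step in transforming $I_a$ into $J$ is an $S^t$-addition or $S^t$-removal.

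Let $T_a := \{t \in T : S^t \cap I_a \neq \emptyset\}$ be the set of terminals reached by at least one path using $a$. Since each path in $I_a$ ends at some (unique) terminal that must lie in $T_a$, we have $I_a \subseteq \bigcup_{t \in T_a} S^t$. Define
\[
    J \;:=\; I_a \;\cup\; \bigcup_{t \in T \setminus T_a} S^t .
\]
By construction $I_a \subseteq J$ and
\[
    J \setminus I_a \;=\; \bigcup_{t \in T \setminus T_a} S^t ,
\]
which is a disjoint union of complete $S^t$-sets. Hence $J$ can be obtained from $I_a$ by one $S^t$-addition for each $t \in T \setminus T_a$, certifying that $I_a$ and $J$ are discrepancy-equivalent. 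All that remains is to prove that $J$ is a circular interval; equivalently, since the complement (in $[\ell]$) of a circular interval is again a circular interval, that
\[
    [\ell] \setminus J \;=\; \Bigl(\bigcup_{t \in T_a} S^t\Bigr) \setminus I_a
\]
is a circular interval. In words, I must show that the indices of paths going to a terminal in $T_a$ but not using arc $a$ form a cyclically contiguous block.

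To prove contiguity, I would combine three ingredients. First, using the planarity of $H$ together with the non-crossing property of the paths, the paths in $I_a$ traverse the parallel copies of $a$ in the same cyclic order as their source-order around $s$; this also implies that at the head $v$ of $a$ these paths fan out to reach the terminals of $T_a$ in that same cyclic order. Second, a Jordan-curve style planar argument applied across $a$ shows that any path ending at a terminal in $T_a$ but not using $a$ must lie on a specific side of the "barrier" formed by the paths in $I_a$ and hence has a source-index inside a single one of the cyclic gaps between consecutive elements of $I_a$. Third, the non-interleaving property of \Cref{lem:non-interleaving} forces the entire set $\bigcup_{t \in T_a} S^t \setminus I_a$ to collapse into one such gap, which is then a circular interval.

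The main obstacle is the planar step: formalizing the "barrier" in $H$, because the paths in $I_a$ are only arc-disjoint and may share interior vertices, so the region they enclose together with arc $a$ is not a simple closed curve a priori. I would handle this by invoking the structural properties of source-numbered nice $s$-path decompositions established in \Cref{sec:nice_path_decomp,sec:properties_path_decomp} to control how paths interact at shared vertices, in particular the fact that the cyclic order in which paths meet at each common vertex is consistent with the source-order. Once this is in place, a planar subdivision argument identifies a unique "side" of $a$ in which the non-$a$ paths to $T_a$ must lie, and combining with non-interleaving completes the proof.
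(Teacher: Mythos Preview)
Your candidate $J = I_a \cup \bigcup_{t \in T \setminus T_a} S^t$ need not be a circular interval, so the argument cannot be completed as proposed. Consider $\ell = 10$ with the non-interleaving partition
\[
S^{t_5}=\{1,2\},\quad S^{t_3}=\{3\},\quad S^{t_4}=\{4\},\quad S^{t_1}=\{5,7,10\},\quad S^{t_6}=\{6\},\quad S^{t_0}=\{8,9\}.
\]
This is realised by a planar instance in which $P_3,\dots,P_6$ all use $a=(v,w)$ (entering $v$ along four parallel copies of a single arc $(s,v)$), while $P_7$ and $P_{10}$ reach $t_1$ along two separate routes avoiding $v$ entirely, and $P_8,P_9$ go to a terminal $t_0$ lying in the region enclosed by $P_7\cup P_{10}$. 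Then $I_a=\{3,4,5,6\}$, $T_a=\{t_1,t_3,t_4,t_6\}$, and your $J=\{1,2,3,4,5,6,8,9\}$, whose complement $\{7,10\}$ is not a circular interval. Your third ingredient is precisely what fails: even when $\bigcup_{t\in T_a} S^t\setminus I_a$ lies in a single cyclic gap of $I_a$, it need not fill that gap, because non-interleaving does not prevent some $S^{t_0}$ with $t_0\notin T_a$ from being nested inside a gap of $S^{t_1}$ for $t_1\in T_a$.

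The paper's proof chooses a different interval, not expressible through $T_a$ alone. It picks the two extremal paths $\Pclo,\Pcc$ among those using $a$ (those whose incoming arcs at $v$ flank the copies of $a$), forms the tri-coloring of $H$ they induce together with $v$, and lets $J$ consist of the indices of $\Pclo,\Pcc$ and of all paths whose first arc is blue. This $J$ is a circular interval \emph{by construction}, since the blue first arcs form a contiguous block at~$s$. The colour-change lemma then shows that $J$ arises from $I_a$ by adding $S^t$ exactly for the \emph{blue} terminals~$t$ --- in general a strict subset of $T\setminus T_a$. In the example above the blue region is a thin lens containing no terminal, so the paper's $J$ equals $I_a=\{3,4,5,6\}$ itself; your $J$ additionally absorbs the red terminals $t_0,t_5\in T\setminus T_a$, and this over-addition is what destroys the interval structure.
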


The proof of this statement (in \Cref{sec:properties_path_decomp}) crucially exploits that $P_1,\dots, P_{\ell}$ are paths from a source-numbered nice $s$-path decomposition of the arc-split graph $H$.
Using \Cref{lem:non-interleaving} and \Cref{lem:discr_equiv_to_interval}, we can now prove that our discrepancy statements (\Cref{thm:discrepancy_main} and \Cref{thm:discrepancy_main_costs}) imply our main results (\Cref{thm:main} and \Cref{thm:main_cost}).

\begin{proof}[Proof of  \Cref{thm:main} and \Cref{thm:main_cost}]
Let $(G=(V,A),s,T,d,x)$ be a PSSUF instance. 
By \Cref{thm:compute_arc_split} we can compute in $O(|V|^2)$ time an arc-split $(H,\pi,\phi)$ of $G$ and a nice $s$-path decomposition $(P_i,\lambda_i)_{i\in[\ell]}$ of $H$ with $\ell=O(|V|)$.
We define a WPCS instance $(\ell, \mathcal{S}, d,y)$ as follows.
For $t\in T$ we let  $S^t \coloneqq \{ i\in [\ell]: P_i\text{ is an $s$-$t$ path}\}$ and we define $\mathcal{S}\coloneqq \{S^t : t\in T\}$.
Because $G$ is acyclic and $\sum_{i=1}^\ell \lambda_i \chi^{\phi(P_i)} = x$, every path $P_i$ with $i\in [\ell]$ starts at $s$ and ends at some terminal $t\in T$.
Therefore, $\mathcal{S}$ is a partition of $[\ell]$.
We define the demand vector $d\in \mathbb{Q}_{\ge 0}^{\mathcal{S}}$ by $d_{S^t} \coloneqq d_t$.
For $i\in [\ell]$ we define $y_i \coloneqq \frac{\lambda_i}{d_t}$, where $t\in T$ is the terminal where the path $P_i$ ends.
Because $\sum_{i=1}^\ell \lambda_i \chi^{\phi(P_i)} = x$ and $x(\delta^-(t)) - x(\delta^+(t)) = d_t$ for all $t\in T$, we have 
$\sum_{i\in S^t} \lambda_i = d_t$ and thus $\sum_{i\in S^t} y_i = 1$ for all $t\in T$.
This shows that $y$ is a fractional selection for $\mathcal{S}$.
By \Cref{lem:non-interleaving}, the partition $\mathcal{S}$ is non-interleaving.

Now consider an integral selection $z$ for $\mathcal{S}$.
For each $t\in T$, we select the $s$-$t$ path $P^t \coloneqq P_i$, where $i$ is the unique element of $S^t$ with $z_i =1$.
Consider the unsplittable flow $\mathcal{P}=\{P^t\}_{t\in T}$ and an arc $a\in A$.
By \Cref{lem:discr_equiv_to_interval}, the set $\{ i \in [\ell] : a\in \phi(P_i) \}$ is discrepancy equivalent to some circular interval $I_a$, implying $D_{y,z}(\{i\in [\ell]: a\in \phi(P_i)\}) = D_{y,z}(I_a) $.
Using 
\[
x(a) = \sum_{i\in [\ell]: a\in \phi(P_i)} \lambda_i =  \sum_{i\in [\ell]: a\in \phi(P_i)} y^d_i, 
\]
and
\[
  \flow_{\mathcal{P}}(a)  = \sum_{i\in [\ell]: a\in \phi(P_i)} z^d_i, 
\]
this implies 
\begin{equation*}
\abs{x(a) - \flow_{\mathcal{P}}(a) } = \abs{ \sum_{i\in [\ell]: a\in \phi(P_i)} y^d_i  - \sum_{i\in [\ell]: a\in \phi(P_i)} z^d_i } \\
= D_{y,z}(\{i\in [\ell]: a\in \phi(P_i)\}) \\
= D_{y,z}(I_a)  \\
\le D_{y,z}.
\end{equation*}

We conclude that \Cref{thm:main} follows from the discrepancy statement \Cref{thm:discrepancy_main} applied to the WPCS instance $(\ell, \mathcal{S}, d,y)$. As an overall runtime we obtain $O(\abs{V}^2)$.
Moreover, \Cref{thm:main_cost} follows from \Cref{thm:discrepancy_main_costs} applied to the WPCS instance $(\ell, \mathcal{S}, d,y)$ and the cost vector $c\in \mathbb{Q}_{\ge 0}^{\ell}$ defined by $c_i \coloneqq c(P_i) \coloneqq \sum_{a\in P_i} c(a)$. As an overall runtime we obtain $O(\abs{V}^2 + \inp \cdot \abs{V})$.
\end{proof}

The remainder of this paper is structured as follows.
In \Cref{sec:nice_path_decomp} we show how to obtain an arc split graph $H$ with a nice $s$-path decomposition and prove \Cref{thm:compute_arc_split}.
In \Cref{sec:properties_path_decomp}, we prove that a nice $s$-path decomposition has the properties claimed in \Cref{lem:non-interleaving} and \Cref{lem:discr_equiv_to_interval}.
\Cref{sec:discrepancy} contains the proof of the two discrepancy statements \Cref{thm:discrepancy_main} and \Cref{thm:discrepancy_main_costs}.
\Cref{sec:conclusions} contains some concluding remarks.
Finally, \Cref{sec:hardness_cyclic_graphs} discusses the necessity of the acyclicity assumption when dealing with lower bounds.

\section{Computing a Nice Path Decomposition}\label{sec:nice_path_decomp}

We now provide details on how we compute a source-numbered nice path decomposition of a flow $x$, which proves \Cref{thm:compute_arc_split_overview}.
Moreover, we expand on the discussion of \Cref{sec:overview_path_decomp} and further formalize some of the concepts introduced there.

For a vertex $v\in V$, let $\delta(v) \coloneqq \delta^+(v) \cup \delta^-(v)$ be the set of all arcs incident to $v$.
Each planar geometric embedding induces a cyclic ordering $\pi_v$ of the arcs in $\delta(v)$, obtained by traversing the arcs in $\delta(v)$ in counterclockwise sense.
We say that $b\in  \delta(v)$ is the \emph{successor} of $a\in\delta(v)$ in $\pi_v$ if, in the geometric embedding, $b$ is the next arc after $a$ in counterclockwise direction.
In this case, we also say that $a$ is the \emph{predecessor} of $b$ in $\pi_v$.

A collection $\pi=(\pi_v)_ {v\in V}$ of cyclic orderings  of the arcs incident to $v$ for all $v\in V$ is called a \emph{combinatorial embedding}. 
A combinatorial embedding is \emph{planar} if it is induced by some planar geometric embedding.
See \Cref{fig:combinatorial_embedding} for an example.
\begin{figure}[htb]
\begin{center}
\begin{tikzpicture}[yscale=0.7,
vertices/.style={circle,draw=black,inner sep=2pt},
arcs/.style={line width=1.5pt,-stealth},
]

\begin{scope}[every node/.style={vertices}]
\node (v) at (0,0) {$v$};
\end{scope}

\begin{scope}
\coordinate (1) at (-1.5, 1);
\coordinate (2) at (-1.5,-1);
\coordinate (3) at (0,-2);
\coordinate (4) at ( 1.5,-1);
\coordinate (5) at ( 1.5, 1);
\coordinate (6) at (0,2);
\end{scope}

\begin{scope}[arcs]
\draw (v) -- node[above] {$a_1$} (1);
\draw (2) -- node[above] {$a_2$} (v);
\draw (v) -- node[left] {$a_3$} (3);
\draw (4) -- node[below] {$a_4$} (v);
\draw (5) -- node[below] {$a_5$} (v);
\draw (v) -- node[right] {$a_6$} (6);
\end{scope}

\end{tikzpicture}
 \end{center}
\caption{\label{fig:combinatorial_embedding}
A vertex $v$ with incident arcs $a_1, \dots a_6$, geometrically embedded into the plane. 
Let $\pi$ be the induced combinatorial embedding.
In the  cyclic ordering $\pi_v$, the arc $a_6$ is the predecessor of $a_1$ and the arc $a_2$ is the successor of $a_1$.\newline
The tuple $(a_5,a_2,a_3)$ is a $\pi_v$-progression because $a_5,a_2,a_3$ is a subsequence of $a_5,a_6,a_1,a_2,a_3,a_4$.
The tuple $(a_5,a_3,a_2)$ is not a $\pi_v$-progression.
}
\end{figure}

In order to formally define when two paths are crossing, we introduce the notion of a $\pi_v$-progression.
\Cref{fig:combinatorial_embedding} shows an example.

\begin{definition}[$\pi_v$-progression]
Let $v\in V$ and let $b_1,\dots, b_k\in \delta(v)$ be $k$ distinct arcs.
Let $\delta(v)=\{a_1,\dots, a_m\}$, where the numbering is chosen such that $a_1=b_1$ and $a_{i+1}$ is the successor of $a_i$ in the cyclic order $\pi_v$ for all $i\in [m-1]$.
Then $(b_1,b_2,\dots, b_k)$ is a \emph{$\pi_v$-progression} if $b_1,b_2,\dots,b_k$ is a subsequence of $a_1,\dots,a_m$.
\end{definition}

We are now ready to formally define when two edge-disjoint paths are crossing.
See \Cref{fig:crossing_paths}.

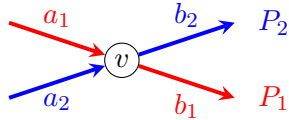
\begin{figure}[htb]
\begin{center}

\begin{tikzpicture}[
vertices/.style={circle,draw=black,inner sep=2pt},
arcs/.style={line width=1.5pt,-stealth},
p1/.style={red},
p2/.style={blue}
]

\begin{scope}[every node/.style={vertices}]
\node (v) at (0,0) {$v$};
\end{scope}

\begin{scope}
\coordinate (u1) at (-1.5, 0.5);
\coordinate (u2) at (-1.5,-0.5);
\coordinate (w1) at ( 1.5,-0.5);
\coordinate (w2) at ( 1.5, 0.5);
\end{scope}

\begin{scope}[arcs]
\begin{scope}[p1]
\draw (u1) -- node[above] {$a_1$} (v);
\draw (v) -- node[below] {$b_1$} (w1);
\end{scope}

\begin{scope}[p2]
\draw (u2) -- node[below] {$a_2$} (v);
\draw (v) -- node[above] {$b_2$} (w2);
\end{scope}
\end{scope}

\begin{scope}
\node[p1] at (w1)[right=5pt] {$P_1$};
\node[p2] at (w2)[right=5pt] {$P_2$};
\end{scope}

\end{tikzpicture}
 \end{center}
\caption{$P_1$ and $P_2$ are crossing because $(a_1,a_2,b_1,b_2)$ is a $\pi_v$-progression.}
    \label{fig:crossing_paths}
\end{figure}

\begin{definition}[Crossing paths]
\label{def:crossing_paths}
Let $H=(V,F)$ be a graph with planar combinatorial embedding $\pi$.
Two arc-disjoint paths $P_1, P_2$ are \emph{crossing} if they have a common mutually internal vertex $v\in V$ such that $(a_1,a_2,b_1,b_2)$ or $(a_2,a_1,b_2,b_1)$ is a $\pi_v$-progression, where $\{a_1\} = P_1\cap \delta^-(v)$, $\{b_1\} = P_1\cap \delta^+(v)$, $\{a_2\} = P_2\cap \delta^-(v)$, and $\{b_2\} = P_2\cap \delta^+(v)$. 
\end{definition}

This notion works well for arc-disjoint paths but fails to capture what would naturally be considered crossing paths if the paths have arcs in common, as illustrated in \Cref{fig:problem_without_arc_split}.
For this reason, we work with the arc-split graph $H$ of $G$, where arcs are replaced by (a well-chosen arc-dependent number of) parallel copies.
This allows for assuming that the paths $P_1,\dots, P_{\ell}$ we construct later are disjoint in $H$.
\begin{figure}[htb]
\begin{center}
\begin{tikzpicture}[
vertices/.style={circle,draw=black,inner sep=2pt,minimum size=19},
arcs/.style={line width=1.5pt,-stealth},
p1/.style={blue},
p2/.style={red},
p3/.style={green!80!black},
p4/.style={orange!80!black}
]

\begin{scope}
\node[vertices] (v1) at (-1,0) {$v_1$};
\node[vertices] (w1) at (1,0) {$w_1$};
\coordinate (u1) at (-2.5, 0.5);
\coordinate (x1) at (-2.5,-0.5);
\coordinate (r1) at ( 2.5,-0.5);
\coordinate (y1) at ( 2.5, 0.5);
\end{scope}

\begin{scope}[shift={(0,-2)}]
\node[vertices] (v2) at (-1,0) {$v_2$};
\node[vertices] (w2) at (1,0) {$w_2$};
\coordinate (u2) at (-2.5, 0.5);
\coordinate (x2) at (-2.5,-0.5);
\coordinate (r2) at ( 2.5,-0.5);
\coordinate (y2) at ( 2.5, 0.5);
\end{scope}

\begin{scope}[shift={(8,0)}]
\node[vertices] (v3) at (-1,0) {$v_1$};
\node[vertices] (w3) at (1,0) {$w_1$};
\coordinate (u3) at (-2.5, 0.5);
\coordinate (x3) at (-2.5,-0.5);
\coordinate (r3) at ( 2.5,-0.5);
\coordinate (y3) at ( 2.5, 0.5);
\end{scope}

\begin{scope}[shift={(8,-2)}]
\node[vertices] (v4) at (-1,0) {$v_2$};
\node[vertices] (w4) at (1,0) {$w_2$};
\coordinate (u4) at (-2.5, 0.5);
\coordinate (x4) at (-2.5,-0.5);
\coordinate (r4) at ( 2.5,-0.5);
\coordinate (y4) at ( 2.5, 0.5);
\end{scope}

\begin{scope}[arcs]

\draw (v1) -- (w1);

\begin{scope}[p1]
\draw (u1) --  (v1);
\draw (w1) --  (r1);
\draw (u3) --  (v3);
\draw[bend left] (v3) to (w3);
\draw (w3) --  (r3);
\end{scope}
\begin{scope}[p2]
\draw (x1) --  (v1);
\draw (w1) -- (y1);
\draw (x3) --  (v3);
\draw[bend right] (v3) to (w3);
\draw (w3) -- (y3);
\end{scope}

\draw (v2) -- (w2);
\begin{scope}[p4]
\draw (u2) -- (v2);
\draw (w2) -- (y2);
\draw (u4) -- (v4);
\draw[bend left] (v4) to (w4);
\draw (w4) -- (y4);
\end{scope}
\begin{scope}[p3]
\draw (x2) --  (v2);
\draw (w2) -- (r2);
\draw (x4) --  (v4);
\draw[bend right] (v4) to (w4);
\draw (w4) -- (r4);
\end{scope}

\end{scope}

\begin{scope}
\node[p1] at (r1)[right=5pt] {$P_2$};
\node[p2] at (y1)[right=5pt] {$P_1$};
\node[p3] at (r2)[right=5pt] {$P_4$};
\node[p4] at (y2)[right=5pt] {$P_3$};
\node[p1] at (r3)[right=5pt] {$P_2$};
\node[p2] at (y3)[right=5pt] {$P_1$};
\node[p3] at (r4)[right=5pt] {$P_4$};
\node[p4] at (y4)[right=5pt] {$P_3$};
\end{scope}

\end{tikzpicture}

 \end{center}
\caption{\label{fig:problem_without_arc_split}
The left-hand side shows paths $P_1$ and $P_2$ that use a common arc $(v_1,w_1)$ and paths $P_3$ and $P_4$ that use a common arc $(v_2,w_2)$.
We would like to consider $P_1$ and $P_2$ as crossing, but not $P_3$ and $P_4$.
However, when considering the local cyclic arc-orderings at vertices, we cannot distinguish these two cases.
This is why we introduce arc-split graphs, where $(v_i,w_i)$ is replaced by two copies, each used by one path, as depicted on the right-hand side.
This leads to arc-disjoint paths $P_1$ and $P_2$ that are crossing according to \Cref{def:crossing_paths}.
(In the example, they cross at $w_1$, and if we swap the red and blue copy of the arc $(v_1,w_1)$, they cross at $v_1$.)
The paths $P_3$ and $P_4$ are not crossing.
}
\end{figure}
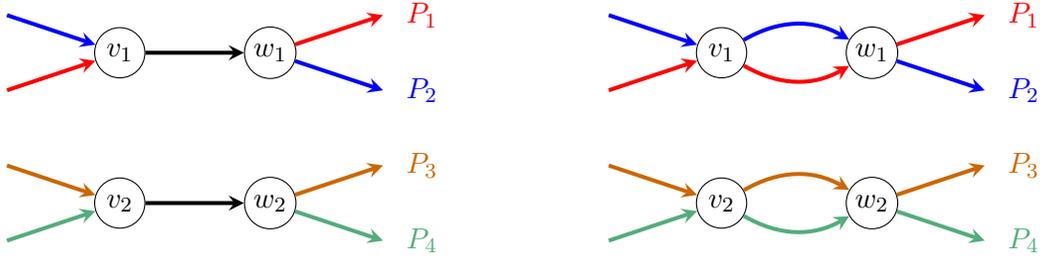

Arc-split graphs are formally defined as follows, where for some arc $a=(v,w)$, we write $\head(a)=w$ and $\tail(a)=v$.

\begin{definition}[Arc-split graph $H$ of $G$]
\label{def:arc_split_graph}
Let $G=(V,A)$ be a planar graph.
An \emph{arc-split} graph of $G$ is a tuple $(H,\pi)$, where $H=(V,F)$ is a planar graph with planar combinatorial embedding $\pi$, such that there exists a mapping $\phi \colon F \to A$ fulfilling for any $f\in F$:
\begin{equation*}
\head(f) =\head(\phi(f)) \quad\text{ and }\quad \tail(f) =\tail(\phi(f)),
\end{equation*}
and, for all $a\in A$, the set $\phi^{-1}(a) \subseteq F$ is consecutive at both $\head(a)$ and $\tail(a)$ in $\pi$.
We call $(H,\pi,\phi)$ an \emph{arc-split} of $G$.
\end{definition}

As mentioned above, we want the paths $P_1,\dots, P_{\ell}$ to be arc-disjoint in the arc split graph $H$.
The number of copies of an arc $a\in A$ in the arc-split graph $H$ that we construct will be chosen such that each copy is used by some path $P_i$. 
Thus, the paths $P_1,\dots, P_{\ell}$ that we construct will form a \emph{nice $s$-path partitioning}  of $H$, which is defined as follows.

\begin{definition}[Nice $s$-path partitioning]
Let $H=(V,F)$ be a graph with planar combinatorial embedding $\pi$, and let $s\in V$.
A collection of paths $P_1,\ldots, P_\ell \subseteq F$ in $H$ is called an \emph{$s$-path partitioning} of $H$ if
\begin{itemize}
    \item $P_i$ starts at $s$ for all $i \in [\ell]$, and
    \item $F = P_1 \dot\cup \ldots \dot\cup P_{\ell}$.
\end{itemize}
It is called \emph{nice} if the paths $P_1, \ldots , P_{\ell}$ are pairwise non-crossing.
\end{definition}

Together with the nice $s$-path partitioning of $H$ we will also compute non-negative weights for the paths $P_i$ in order to obtain a path decomposition of the flow $x$.

\begin{definition}[Nice $s$-path decomposition]
Let $(G,s,T,d,x)$ be a PSSUF instance and let $(H,\pi,\phi)$ be an arc-split of $G=(V,A)$.
A \emph{nice $s$-path decomposition} of $H$ is a sequence of tuples $(P_i,\lambda_i)_{i\in [\ell]}$, where $P_1,\dots, P_\ell$ is a nice $s$-path partitioning of $H$ and $\lambda_i\in \mathbb{Q}_{\geq 0}$ for $i\in [\ell]$, such that
\begin{equation*}
\sum_{i=1}^\ell \lambda_i \chi^{\phi(P_i)} = x.\footnotemark
\end{equation*}
\end{definition}
\footnotetext{$\phi(P_i)\coloneqq \{\phi(f) \colon f\in P_i\}$ denotes the image of $P_i$ by $\phi$.
Moreover, for a set $U\subseteq A$, we denote by $\chi^U\in \{0,1\}^A$ the characteristic vector of $U$, i.e., the zero-one vector in $\{0,1\}^A$ with a $1$ precisely in the entries corresponding to arcs in $U$.
}

Recall that the paths $P_1,\dots, P_{\ell}$ are all paths starting in $s$.
Thus, the cyclic ordering $\pi_s$ of the arcs incident to $s$ naturally induces a cyclic ordering on the paths $P_1, \dots, P_{\ell}$.
As mentioned in \Cref{sec:overview}, if the numbering of these paths is consistent with this cyclic ordering, we say that $P_1, \dots, P_{\ell}$ are \emph{source-numbered}.

\begin{definition}[Source-numbered]
An $s$-path partitioning is called \emph{source-numbered} if $(a_1, \ldots, a_{\ell})$ is a $\pi_s$-progression, where $a_i$ is the first arc of $P_i$ for $i \in [\ell]$.
\end{definition}

We now prove \Cref{thm:compute_arc_split_overview}, which we restate here using the just introduced notation, making explicit that we compute a combinatorial embedding of $H$.
\begin{restatable}{rethm}{computeArcSplit}\label{thm:compute_arc_split}
Let $(G=(V,A),s,T,d,x)$ be a PSSUF instance.
We can compute in $O(|V|^2)$ time an arc-split $(H,\pi,\phi)$ of $G$ and a source-numbered nice $s$-path decomposition $(P_i,\lambda_i)_{i\in[\ell]}$ of $H$ with $\ell=O(|V|)$.
\end{restatable}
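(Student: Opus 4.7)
The plan is to first compute a planar combinatorial embedding $\pi_G$ of $G$ in $O(|V|)$ time via the Hopcroft--Tarjan algorithm, and then process the vertices of $G$ in topological order (which exists by acyclicity) to build the paths $(P_i,\lambda_i)_{i\in[\ell]}$ and the arc-split graph $H$ simultaneously through a local routing at each vertex that enforces the non-crossing property.

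Concretely, I track a collection of weighted ``path fragments'' in transit. At the source $s$, I initialize one fragment of weight $x(e)$ for each out-arc $e\in\delta^+(s)$, indexed consecutively in the cyclic order $\pi_s$ starting from a fixed reference; these indices will determine the source-numbering of the final paths. At each subsequent vertex $v$ (in topological order), each in-arc carries an ordered list of arriving fragments, and I route them to out-arcs as follows. I walk around $v$ in the cyclic order $\pi_v$ starting from a position at which the cyclic prefix-sum of (in-flow minus out-flow) remains non-negative throughout---such a position exists by flow conservation, namely start just after the point where the cyclic prefix-sum attains its minimum---maintaining a LIFO stack: upon visiting an in-arc I push its arriving fragments in order, and upon visiting an out-arc of capacity $c$ I pop fragments totaling weight $c$, splitting the last fragment if its weight would overshoot. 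The classical correspondence between LIFO matchings and non-crossing chord diagrams on a cycle guarantees that the induced in-fragment-to-out-arc pairing at $v$ is non-crossing, so no two paths through $v$ cross at $v$. When a fragment reaches a terminal $t\in T$, it terminates as a complete $s$-$t$ path $P_i$ with accumulated weight $\lambda_i$.

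The arc-split graph $H$ is built by creating one parallel copy of each $a\in A$ per path traversing $a$ and ordering the copies consecutively in the combinatorial embedding of $H$ at each endpoint according to the LIFO routing there (with the appropriate orientation reversal between the two endpoints of $a$). For complexity, each non-source vertex $v$ introduces at most $|\delta^+(v)|-1$ new fragment splits beyond the $|\delta^+(s)|$ initial fragments at $s$, so $\ell \le |A| = O(|V|)$ by planarity of $G$; each path has at most $|V|-1$ arcs, which gives total time $O(|V|^2)$ including the bookkeeping of copies in $H$. The main obstacle is arguing global consistency and planarity of $H$: the per-vertex LIFO property handles local non-crossing, but globally I must verify that the cyclic order of copies of each arc $a=(u,v)$ forced at $u$ is consistent with the order forced at $v$, so that $H$ admits a valid planar embedding $\pi$ in which $P_1,\ldots,P_\ell$ are arc-disjoint and the source-numbering at $s$ is satisfied; this requires a careful compatibility check matching the LIFO stacks on both sides of each arc bundle.
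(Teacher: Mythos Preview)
Your plan is viable and close in spirit to the paper's, but the paper organizes the construction differently in a way that dissolves exactly the obstacle you flag at the end.

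The paper does \emph{not} process vertices in topological order. Instead it first computes, independently at every internal vertex $v$, a \emph{good $v$-wiring}: a non-crossing fractional pairing of in-arcs and out-arcs of $v$ in $G$ (not in $H$), obtained by repeatedly peeling off an adjacent in/out pair in $\pi^G_v$ with weight $\min\{x(a),x(b)\}$. This is essentially your LIFO idea, but applied once per vertex to the arcs of $G$, yielding at most $\deg(v)$ pairs. Only afterwards are the paths built, one at a time: start at $s$ on the first remaining out-arc in $\pi^G_s$, and at each internal vertex follow the first pair of the precomputed wiring that still has positive residual; set $\lambda_i$ to the bottleneck along the path and subtract. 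The arc-split $H$ is then obtained by giving each path its own fresh parallel copy of every arc it uses, with the single global rule that earlier-created copies of an arc $(u,v)$ come before later ones in $\pi_u$ and after later ones in $\pi_v$. Planarity of $(H,\pi)$ is then immediate---realize it geometrically by drawing parallel copies next to the arcs in a planar drawing of $G$---and the non-crossing verification reduces to a short case analysis combining the forbidden $\pi^G_v$-progressions of the wiring with this copy-ordering rule.

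This sidesteps your ``main obstacle'' entirely. In your vertex-by-vertex scheme, the order of fragments along the bundle $\phi^{-1}(a)$ at $\tail(a)$ is dictated by the LIFO there, while the order in which they must be pushed at $\head(a)$ is dictated by what makes the embedding planar at $\head(a)$; you owe an argument that these are compatible (they are, once you push arriving fragments in reversed order, but this needs saying and proving). The paper's path-by-path construction with a global copy-ordering convention gets this consistency for free, and also makes the $\ell=O(|V|)$ bound a one-liner: each new path zeroes out at least one wiring pair, and $\sum_v |W_v|\le \sum_v \deg(v)=O(|A|)=O(|V|)$.
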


\subsection{Proof of \Cref{thm:compute_arc_split}}
Note that source-numbering a set of paths starting at $s$ simply means that the paths are numbered in counterclockwise sense around $s$ (starting at an arbitrary path).
As this is trivial to do at the end, we will ignore this aspect when constructing a source-numbered nice $s$-path decomposition fulfilling the conditions of \Cref{thm:compute_arc_split}, and thus just focus on finding a nice $s$-path decomposition.

There are different ways to obtain nice $s$-path decompositions.
One option is to start with an arbitrary flow decomposition and then iteratively reduce the number of crossings by uncrossing steps.
We present another approach that avoids crossings upfront and leads to a running time that is linear in the total size $\sum_{i=1}^{\ell} |P_i|$ of the computed paths.

To obtain the desired nice $s$-path decomposition in an arc-split graph $H$ of $G$, as claimed by \Cref{thm:compute_arc_split}, we first decide locally on rules how paths can traverse each vertex $v\in V\setminus (T\cup \{s\})$.
These rules are set up such that paths that we later construct from them will be non-crossing.
We formalize these rules through the notion of a \emph{good $v$-wiring}.
A good $v$-wiring is a sequence of pairs, each coupling one $v$-incoming arc $f\in \delta^-(v)$ with one $v$-outgoing arc $g\in \delta^+(v)$, together with a maximum load $\mu_v((f,g))$ with which we can use this pair.
The paths in our nice $s$-path decomposition that go over $v$ will use one of these pairs, and the load with which each pair $(f,g)\in \delta^-(v)\times \delta^+(v)$ is used is given by $\mu_v((f,g))$.
\begin{definition}[Good $v$-wiring]\label{def:good_v_wiring}
Let $(G=(V,A),s,T,d,x)$ be a PSSUF instance, and let $v\in V\setminus (T\cup \{s\})$.
Moreover, let $\pi^G$ be a planar combinatorial embedding of $G$.
A \emph{good $v$-wiring} is a tuple $(W_v,\mu_v)$, where $W_v=((f_1,g_1),\dots,(f_k,g_k))$ is a sequence of distinct arc pairs $(f_i,g_i) \in \delta^-(v) \times \delta^+(v)$ for $i\in [k]$, and $\mu_v\colon W_v \to \mathbb{Q}_{\geq 0}$, such that
\begin{equation*}
\displaystyle\sum_{\substack{i\in [k]:\\a\in \{f_i,g_i\}}} \mu_v((f_i,g_i)) = x(a) \qquad \forall a\in \delta(v),
\end{equation*}
and, for $i,j\in [k]$ with $i < j$, we have that neither $(f_i,f_j,g_i)$ nor $(f_i,g_j,g_i)$ is a $\pi^G_v$-progression.
\end{definition}
Note that forbidding, for $i,j\in [k]$ with $i<j$, the $\pi^G_v$-progressions $(f_i,f_j,g_i)$ and $(f_i,g_j,g_i)$, excludes the $\pi^G_v$-progressions $(f_i,f_j,g_i,g_j)$ and $(f_j,f_i,g_j,f_i)$.

We start by showing that a good $v$-wiring can be computed fast.

\begin{lemma}\label{lem:comp_good_v_wiring}
Let $(G=(V,A),s,T,d,x)$ be a PSSUF instance with planar combinatorial embedding $\pi^G$, and let $v\in V\setminus (T\cup \{s\})$.
We can find in $O(\deg(v))$ time a good $v$-wiring $(W_v, \mu_v)$ with $|W_v|\leq \deg(v)$.
\end{lemma}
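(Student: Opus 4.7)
The plan is a sweep-and-match procedure around $v$ in the cyclic order $\pi^G_v$, analogous to matching balanced parentheses. First list the arcs in $\delta(v)$ in the cyclic order $\pi^G_v$ and compute the cumulative signed flow at each position, where an incoming arc contributes $+x(f)$ and an outgoing arc contributes $-x(g)$. Flow conservation at the internal vertex $v$ forces the total to be zero. Starting the sweep from the arc immediately after a position where this cumulative sum is minimized guarantees that every prefix sum along the sweep is non-negative, which is the precise condition needed for the stack to never run short of incoming flow when an outgoing arc is being processed.

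During the sweep I would maintain a stack whose entries are pairs (incoming arc, remaining flow). When an incoming arc $f$ is visited, push $(f,x(f))$. When an outgoing arc $g$ with residual $r$ is visited, iterate the following: pop the top entry $(f,\alpha)$, emit the pair $(f,g)$ with weight $\mu := \min(\alpha,r)$, push $(f,\alpha-\mu)$ back whenever $\alpha > \mu$, set $r := r - \mu$, and stop once $r=0$. The emitted pairs, in order of emission, form $W_v$ with weights $\mu_v$. The identity $\sum_{i:\, a\in\{f_i,g_i\}}\mu_v((f_i,g_i)) = x(a)$ for every $a\in\delta(v)$ is immediate from the construction. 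Each emitted pair exhausts at least one of its two arcs (either popping $f_i$ permanently from the stack or advancing past $g_i$ in the sweep), hence $|W_v| \le |\delta^-(v)| + |\delta^+(v)| = \deg(v)$, and the whole procedure including the one-pass minimum-prefix computation runs in $O(\deg(v))$ time.

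The main obstacle is verifying the non-crossing condition. Fix two emitted pairs with indices $i < j$. At the moment pair $i$ is emitted, LIFO discipline places $f_i$ on top of the stack, so $f_j$ is either (1) already in the stack at that moment, necessarily below $f_i$ and therefore pushed before $f_i$ in the sweep, or (2) pushed only strictly later. In case (1), $f_j$ precedes $f_i$ in the sweep and $g_j$ occurs at or after $g_i$, since $f_j$ is not popped until after the processing of $g_i$ is complete. In case (2), $f_j$ is encountered only after $g_i$ has been fully handled, and $g_j$ is later still. In both cases, traversing $\pi^G_v$ starting from $f_i$ one reaches $g_i$ strictly before reaching either $f_j$ or $g_j$; the only coincidences that can arise among the involved arcs are $g_j = g_i$ or $f_j = f_i$ (since an incoming arc cannot coincide with an outgoing arc at $v$), and each such coincidence produces a repeated arc in the relevant triple, so the definition of a $\pi^G_v$-progression, which requires distinct arcs, rules out the triple being a progression in these degenerate cases as well. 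Therefore neither $(f_i,f_j,g_i)$ nor $(f_i,g_j,g_i)$ is a $\pi^G_v$-progression, which is precisely the non-crossing condition in the definition of a good $v$-wiring.
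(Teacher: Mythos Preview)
Your approach is correct and takes a genuinely different route from the paper. The paper maintains the arcs of $\delta(v)$ in a circular doubly linked list and repeatedly extracts a pair $(a,b)$ with $a\in\delta^-(v)$ whose current successor $b$ lies in $\delta^+(v)$, matching them with weight $\min\{y(a),y(b)\}$ and deleting exhausted arcs; to hit the linear running time it additionally maintains an auxiliary set $Z$ of currently eligible incoming arcs. The non-crossing argument there is essentially a one-liner: when $(f_i,g_i)$ is chosen, $g_i$ is the immediate successor of $f_i$ in the pruned list, so neither $f_j$ nor $g_j$ (which are still present since $j>i$) can sit counterclockwise between them. Your stack-based sweep, after linearizing the cycle via the minimum-prefix-sum trick, avoids the auxiliary data structure altogether and makes the $O(\deg(v))$ bound immediate; the price is that the non-crossing verification needs the two-case LIFO argument rather than a single observation.

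One small sloppiness: the sentence ``each such coincidence produces a repeated arc in the relevant triple'' does not actually cover the triple $(f_i,g_j,g_i)$ in the case $f_j=f_i$ with $g_j\neq g_i$, since that triple still consists of three distinct arcs. Moreover, the case $f_j=f_i$ does not sit cleanly in either of your cases (1) or (2) as stated: $f_j$ is neither strictly below $f_i$ on the stack nor first encountered after $g_i$. The fix is easy and already implicit in your analysis: when $f_j=f_i$, the re-push of $f_i$ occurs only because $g_i$ has been exhausted, so $g_j$ lies strictly after $g_i$ in the sweep, and hence starting from $f_i$ one reaches $g_i$ before $g_j$, ruling out $(f_i,g_j,g_i)$ as a $\pi^G_v$-progression exactly as in the non-degenerate case.
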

\begin{proof}
We first describe a simple procedure to obtain the desired result, which, when implemented in a straightforward way, only runs in polynomial time instead of the desired $O(\deg(v))$ running time bound.
In a second step, we expand on how to refine this procedure to obtain the claimed linear time bound.

We construct a good $v$-wiring by iteratively finding a pair of arcs $a\in \delta^-(v)$, $b\in \delta^+(v)$ with $b$ being $a$'s successor in $\pi^G_v$, adding the pair $(a,b)$ to $W_v$ with a $\mu_v$-value equal to $\min\{x(a),x(b)\}$, reducing the $x$-values of $a$ and $b$ correspondingly, and deleting $a$ and/or $b$ if their $x$-value reached $0$.
\Cref{alg:good_v_wiring_simple} formalizes this procedure.
In its implementation, we use a circular doubly linked list $L$ of the arcs in $\delta(v)$.
This list contains all arcs in $\delta(v)$ in the order given by $\pi^G_v$.
For an arc $a\in L$, we denote by $\succelem_L(a)$ the arc in $L$ that comes after $a$ with respect to the circular ordering $\pi^G_v$, i.e., the one after $a$ in $L$ in counterclockwise order.
\LinesNumbered
\begin{algorithm2e}[ht]
\DontPrintSemicolon
\caption{Simple algorithm to create good $v$-wiring.}
\label{alg:good_v_wiring_simple}
Let $y=x$.\;
Let $L$ be a circular doubly linked list of $\delta(v)$.\;
$W_v = ()$.\;
\While {$L\neq \emptyset$}{
Find an arc $a\in L \cap \delta^-(v)$ such that $b= \succelem_L(a)\in \delta^+(v)$.\label{algline:find_ab_pair_simple}\;
Add $(a,b)$ to $W_v$.\;
Set $\mu_v((a,b))=\min\{y(a),y(b)\}$.\;
$y(a) = y(a) - \mu_v((a,b))$.\;
$y(b) = y(b) - \mu_v((a,b))$.\;
\lIf {$y(a)=0$}{
delete $a$ from $L$.\label{algline:delete_a_simple}
}
\lIf {$y(b)=0$}{
delete $b$ from $L$.\label{algline:delete_b_simple}
}
}
\Return $(W_v,\mu_v)$.\;
\end{algorithm2e}

Note that we have $y(\delta^+(v))=y(\delta^-(v))$ during the algorithm.
At the start, this holds because $x(\delta^+(v))=x(\delta^-(v))$, and the property is preserved because, whenever a pair gets added to $W_v$, we remove the same value from one incoming and one outgoing arc of $v$.
This implies that, indeed, in every iteration of the algorithm there exists a pair $(a,b)$ of arcs as desired.

Also observe that this algorithm can clearly be implemented to have polynomial running time.
Indeed, at every iteration of the while loop, at least one of the arcs $a$ or $b$ will be removed from $L$, leading to at most $O(\deg(v))$ many iterations.
Moreover, the bottleneck operation in one iteration is finding an arc $a$ in line~\ref{algline:find_ab_pair_simple}.
A trivial implementation of this step, where we simply check all arcs in $L$, leads to a running time of $O(\deg(v)^2)$.

Before expanding on how to obtain a faster implementation, we show that $(W_v,\mu_v)$ is a good $v$-wiring.
Let $W_v=((f_1,g_1),\dots,(f_k,g_k))$.
Note that we clearly have
\begin{equation*}
\sum_{\substack{i\in [k]:\\ a\in \{f_i,g_i\}}} \mu_v((f_i,g_i)) = x(a) \qquad \forall a\in \delta(v),
\end{equation*}
because the procedure stops once all $y$-values, which are the leftover $x$-values, are set to zero.

Now consider $i,j\in [k]$ with $i < j$, and we observe that between $f_i$ and $g_i$, in counterclockwise sense, there is neither the arc $f_j$ nor $g_j$.
This holds because, when \Cref{alg:good_v_wiring_simple} adds the pair $(f_i, g_i)$, then the arc $g_i$ appears right after $f_i$ in the circular doubly linked list $L$, i.e., $g_i=\succelem_L(f_i)$.
As neither $f_j$ nor $g_j$ have been deleted at this point because $i<j$, they do not lie between $f_i$ and $g_i$ in counterclockwise sense.
This shows that \Cref{alg:good_v_wiring_simple} returns a good $v$-wiring as claimed.

\smallskip

It remains to show that it can be sped up to run in $O(\deg(v))$ time.
To this end, we show how, after an initial $O(\deg(v))$-time preprocessing step, one can find in each iteration a pair $(a,b)\in \delta^-(v) \times \delta^+(v)$ of arcs in $L$ with $b=\succelem_L(a)$ to add to the $v$-wiring in constant time.
To this end, we maintain throughout the iterations of the while loop of \Cref{alg:good_v_wiring_simple} the set $Z \subseteq \delta^-(v)\cap L$ of all arcs $a\in L\cap \delta^-(v)$ with $\succelem_L(a)\in \delta^+(v)$.
The set $Z$ thus contains the arcs $a$ of all pairs $(a,b)$ that could be chosen in line~\ref{algline:find_ab_pair_simple}.
Before the start of the while loop in \Cref{alg:good_v_wiring_simple}, we initialize $Z$ in $O(\deg(v))$ time by exhaustively going through all arcs in $\delta(v)$. 
Then, at the end of each iteration of the while loop, we update $Z$ in constant time as follows.
If the arc $a$ is deleted in line~\ref{algline:delete_a_simple} of \Cref{alg:good_v_wiring_simple}, we remove it from $Z$ and check whether its predecessor in $L$ should be added to $Z$.
Otherwise, $\succelem_L(a)$ will get deleted in line~\ref{algline:delete_b_simple}, and we check whether, after deleting $\succelem_L(a)$, the arc $a$ still fulfills that its successor in $L$ is a $v$-outgoing arc, i.e., is in $\delta^+(v)$.
If so, we keep $a$ in $Z$; otherwise, we remove it from $Z$.
One can easily verify that this procedure correctly maintains $Z$.
Hence, in each iteration of the while loop, we can find a pair $(a,b)\in Z$ in constant time, and the total time spent on initializing and updating $Z$ is $O(\deg(v))$, as desired.
(These time bounds are for example achieved when implementing $Z$ as a stack.)
\end{proof}

Starting with good $v$-wirings, we now show how to compute an arc-split of $G$ together with a nice $s$-path decomposition as claimed in \Cref{thm:compute_arc_split}.

\begin{proof}[Proof of \Cref{thm:compute_arc_split}]
We start by obtaining a planar combinatorial embedding $\pi^G$ of $G$, which can be done in linear time, and compute a structured $s$-path decomposition of $x$ as described in \Cref{alg:nice_paths_through_good_wirings}.
Note that \Cref{alg:nice_paths_through_good_wirings} requires a planar combinatorial embedding $\pi^G$ of $G$ as it first computes a good $v$-wiring $(W_v,\mu_v)$ for each vertex $v\in V\setminus (T\cup \{s\})$.
Moreover, also line~\ref{algline:chooseArcAtS} relies on $\pi^G_s$.
To perform line~\ref{algline:chooseArcAtS}, we interpret $\pi_s^G$ as a linear order on $\delta^+(s)$ instead of a cyclic one by simply choosing an arbitrary arc $a\in \delta^+(s)$, declaring $a$ to be the first arc, and then numbering the remaining arcs with respect to $\pi^G_s$, i.e.,  in counterclockwise order.

\Cref{alg:nice_paths_through_good_wirings} greedily performs a path-decomposition of the flow $x$.
The algorithm builds paths step-by-step according to these $v$-wirings; whenever we enter a vertex $v$ through some arc $f$, we look for the first pair $(f,g)$ in the $v$-wiring that still has a strictly positive $\mu_v$-value and use the arc $g$ to leave $v$.
To simplify notation, we drop the index $v$ from $\mu_v$ in \Cref{alg:nice_paths_through_good_wirings}.
This is without risk of ambiguity, because each pair $(f,g)$ of consecutive links that we consider is contained in precisely one $W_v$, namely the one for $v=\head(f)=\tail(g)$; hence, in this case $\mu((f,g))\coloneqq \mu_v((f,g))$.

\LinesNumbered
\begin{algorithm2e}[ht]
\DontPrintSemicolon
\caption{Compute structured flow decomposition of $x$.}
\label{alg:nice_paths_through_good_wirings}
Compute a good $v$-wiring $(W_v,\mu_v)$ for each $v\in V\setminus (T\cup \{s\})$.\label{algline:good_v_wirings}\;
$y=x$.\;
$\ell=0$.\;

\While{$y(\delta^+(s))>0$}{
$\ell = \ell+1$.\;
Let $a\in \delta^+(s)$ be the first arc with respect to $\pi^G_s$ in $\delta^+(s)$ with $y(a)>0$.\label{algline:chooseArcAtS}\;
$Q_{\ell}=\{a\}$.\;
\While{$\head(a)\not\in T$}{
Let $(f,g)$ be the first pair in $W_v$ with $f=a$ and $\mu((f,g))>0$, where $v = \head(a)$.;
$Q_\ell = Q_\ell \cup \{g\}$.\;
$a=g$.
}
Let $\lambda_\ell$ be the smallest value $\mu((f,g))$ for any pair $(f,g)$ of consecutive arcs in $Q_\ell$.\label{algline:set_lambda_ell}\;
For every pair $(f,g)$ of consecutive arcs in $Q_\ell$, set $\mu((f,g))=\mu((f,g))-\lambda_\ell$.\label{algline:update_mu}\;
$y(a) = y(a) - \lambda_\ell \qquad \forall a\in Q_{\ell}$.\;
}

\Return $(Q_i,\lambda_i)_{i\in [\ell]}$.\;
\end{algorithm2e}

Given the family of pairs $(Q_i,\lambda_i)_{i\in [\ell]}$ obtained from \Cref{alg:nice_paths_through_good_wirings}, we then construct the claimed arc-split $(H,\pi,\phi)$ of $G$ together with a nice $s$-path decomposition as follows.
Each tuple $(Q_i, \lambda_i)$, for $i\in [\ell]$, will correspond to one path $P_i$ in our $s$-path decomposition with coefficient being $\lambda_i$, i.e., the nice $s$-path decomposition we construct will be $(P_i, \lambda_i)_{i\in [\ell]}$.
We start with an empty graph $H$ on the vertices $V$ and then add, one-by-one for $i$ starting from $1$ to $\ell$, the paths $P_i$.
When considering an index $i\in[\ell]$, we add, for each arc $a=(u,v)\in Q_i$, a fresh arc $h$ from $u$ to $v$ to the graph $H$ and set $\phi(h) = a$.
The path $P_i$ consists of all the arcs added to $H$ when considering the index $i$.
Hence, $P_i$ is a path traversing the same vertices as $Q_i$ (in the same order), simply on a fresh set of arcs.
To conclude the construction, we have to specify how to define the planar combinatorial embedding $\pi$ of $H$.
The idea is that we largely inherit the planar combinatorial embedding $\pi^G$ of $G$.
More precisely, the copy of the arc $(u,v)\in Q_i$ that we added to $H$ will appear in $\pi_u$ and $\pi_v$ at the same spot as it appears in $\pi^G_u$ and $\pi^G_v$, respectively.
In other words, we define the planar combinatorial embedding such that, for any vertex $u\in V$, any subset of non-parallel arcs in $H$ incident to $u$ appear in $\pi_u$ in the precise same order as the order they have in $\pi^G_u$.
It remains to decide how to order different copies in $H$ of the same arc $(u,v)$ of $G$.
Hence, assume that $h_1, h_2$ are two arcs in $H$, both going from $u\in V$ to $v\in V$, and assume that we added $h_1$ in an earlier iteration than $h_2$.
We apply the following rule:
\begin{itemize}
\item Within all arcs from $u$ to $v$ in $H$, the arc $h_1$ comes before $h_2$ in $\pi_u$, i.e., the arc $h_2$ is on the counterclockwise side of $h_1$ within all arcs from $u$ to $v$ in $H$,
\item Within all arcs from $u$ to $v$ in $H$, the arc $h_1$ comes after $h_2$ in $\pi_v$, i.e., the arc $h_2$ is on the clockwise side of $h_1$ within all arcs from $u$ to $v$ in $H$.
\end{itemize}

This finishes the procedure to compute the arc-split $(H,\pi,\phi)$ and the $s$-path decomposition $(P_i,\lambda_i)_{i\in [\ell]}$.

We start by showing the desired bound of $\ell = O(|V|)$ on the number of paths in our $s$-path decomposition.
Note that this number of paths is the same as the number of paths $Q_i$, computed in \Cref{alg:nice_paths_through_good_wirings}.
Each time we compute a path $Q_i$ in \Cref{alg:nice_paths_through_good_wirings}, one of the values $\mu_v((f,g))$ is set to zero for some pair $(f,g)$ in one of the sets $W_v$ for some $v\in V\setminus (T\cup \{s\})$.
We thus have
\begin{equation*}
\ell \leq \sum_{v\in V\setminus (T\cup \{s\})} |W_v| = O\left(\sum_{v\in V \setminus (T\cup \{s\})} \deg(v)\right) = O(|A|) = O(|V|),
\end{equation*}
where the first equality follows from $|W_v|\leq \deg(v)$, which holds by \Cref{lem:comp_good_v_wiring}, the second equality is due to the fact that the sum of all degrees of a graph is twice the number of its arcs, and the last inequality holds because $G$ is planar. 

We now discuss the running time of the suggested algorithm.
First, note that \Cref{alg:nice_paths_through_good_wirings} can clearly be implemented to run in $O(|V|^2)$ time.
More precisely, due to \Cref{lem:comp_good_v_wiring}, computing the good $v$-wirings at the beginning of the algorithm takes $O(\sum_{v\in V\setminus (T\cup \{s\})} \deg(v)) = O(|A|) = O(|V|)$ time.
Each iteration of the outer while loop constructs one path $Q_i$, and as we construct $\ell = O(|V|)$ many paths, this while loop has $O(|V|)$ iterations.
Moreover, the inner while loop successively adds arcs to the path $Q_i$, each such step taking $O(1)$ time.
Because $|Q_i| = O(|V|)$ for each path $Q_i$, this leads to a total running time of $O(|V|^2)$ for \Cref{alg:nice_paths_through_good_wirings}.
Finally, the construction of the arc-split $(H,\pi,\phi)$ and nice $s$-path decomposition $(P_i,\lambda_i)_{i\in [\ell]}$ takes time linear in the size of $H$, which is bounded by $O(\sum_{i=1}^{\ell} |Q_i|) = O(|V|^2)$ as desired, where the equality follows from $\ell = O(|V|)$ and $|Q_i|\leq |V|$ for $i\in [\ell]$.

It remains to show that the arc-split $(H,\pi,\phi)$ and the $s$-path decomposition $(P_i,\lambda_i)_{i\in [\ell]}$ have the desired properties.
First, note that $\pi$ is indeed a planar combinatorial embedding of $H$.
This readily follows from the fact that a (geometric) planar embedding of $H$ corresponding to $\pi$ can be obtained by first obtaining a (geometric) planar embedding of $G$ that corresponds to $\pi^G$ and then replacing arcs of $G$ by parallel arcs.
Also, the properties of $\phi$ for $(H,\pi,\phi)$ to be an arc-split of $G$ (see \Cref{def:arc_split_graph}) are clearly fulfilled.
Hence, $(H,\pi,\phi)$ is an arc-split of $G$.

Because the arcs of $H$ are by construction the disjoint union of the paths $P_i$ for $i\in [\ell]$, and all these paths start at $s$, we have that $P_1,\dots, P_{\ell}$ is an $s$-path partitioning of $H$.
To show that it is a nice $s$-path partitioning, we have to show that they are pairwise non-crossing.
Hence, let $i,j\in [\ell]$ with $i < j$ and let $v\in V$ be a common internal vertex of $P_i$ and $P_j$.
We name the arcs of $P_i$ and $P_j$ that are incident to $v$ as follows:
\begin{align*}
\{a_i^H\} &= P_i \cap \delta^-(v) & a_i^G &= \phi(a_i^H)\\
\{b_i^H\} &= P_i \cap \delta^+(v) & b_i^G &= \phi(b_i^H)\\
\{a_j^H\} &= P_j \cap \delta^-(v) & a_j^G &= \phi(a_j^H)\\
\{b_j^H\} &= P_j \cap \delta^+(v) & b_j^G &= \phi(b_j^H).
\end{align*}
Note that we may have $a_i^G = a_j^G$, or $b_i^G = b_j^G$, or both.
We need to show that neither $(a_i^H, a_j^H, b_i^H, b_j^H)$ nor $(a_j^H, a_i^H, b_j^H, b_i^H)$ is a $\pi_v$-progression.
To this end, we exploit that, by construction of $\pi_v$, we have the following property:
\begin{equation}\label{eq:orderCompatibilityGH}
\parbox[c]{0.8\linewidth}{For any subset $U\subseteq \delta_H(v)$ of non-parallel arcs incident to $v$, the ordering of $U$\\
with respect to $\pi_v$ is the same as the ordering of $\phi(U)$ with respect to $\pi_v^G$.}
\end{equation}

We first observe that $(a_i^H, a_j^H, b_i^H, b_j^H)$ is not a $\pi_v$-progression.
If $a_i^G \neq a_j^G$, then $(a_i^G, a_j^G, b_i^G)$ is not a $\pi_v^G$-progression, which follows from $(W_v,\mu_v)$ being a good $v$-wiring.
By~\eqref{eq:orderCompatibilityGH}, this implies that $(a_i^H, a_j^H, b_i^H)$ is not a $\pi_v$-progression, and therefore neither is $(a_i^H, a_j^H, b_i^H, b_j^H)$.
Otherwise, if $a_i^G = a_j^G$, then, in the ordering $\pi_v$ of the arcs $\phi^{-1}(a_i^G)$, the arc $a_i^H$ appears after $a_j^H$.
Again, this implies that $(a_i^H, a_j^H, b_i^H, b_j^H)$ is not a $\pi_v$-progression as $b_i^H, b_j^H \not\in \phi^{-1}(a_i^G)$.

Analogously, we now show that $(a_j^H, a_i^H, b_j^H, b_i^H)$ is not a $\pi_v$-progression.
If $b_i^G \neq b_j^G$, then $(a_i^G, b_j^G, b_i^G)$ is not a $\pi_v^G$-progression, which follows from $(W_v,\mu_v)$ being a good $v$-wiring.
By~\eqref{eq:orderCompatibilityGH}, this implies that $(a_i^H, b_j^H, b_i^H)$ is not a $\pi_v$-progression, and therefore neither is $(a_j^H, a_i^H, b_j^H, b_i^H)$.
Otherwise, if $b_i^G = b_j^G$, then, in the ordering $\pi_v$ of the arcs $\phi^{-1}(b_i^G)$, the arc $b_i^H$ appears before $b_j^H$.
Again, this implies that $(a_j^H, a_i^H, b_j^H, b_i^H)$ is not a $\pi_v$-progression.

Thus, $P_1,\ldots, P_\ell$ are pairwise non-crossing, which shows that $P_1,\dots, P_\ell$ is a nice $s$-path partitioning.
To finish the proof, it remains to show that
\begin{equation}\label{eq:PisArePathDecomp}
\sum_{i=1}^{\ell} \lambda_i \chi^{\phi(P_i)} = x.
\end{equation}

This holds because \Cref{alg:nice_paths_through_good_wirings} performs greedily an $s$-path decomposition that uses up all the $\mu_v$-values of the good $v$-wirings $(W_v, \mu_v)$ for each $v\in V\setminus (T\cup \{s\})$.
Indeed, this path decomposition is just one way to perform a greedy path decomposition of the original flow $x$.
(Recall that a good $v$-wiring $(W_v, \mu_v)$ with $W_v=((f_1,g_1),\dots, (f_k,g_k))$ satisfies
$\sum_{\substack{i\in [k]: a\in \{f_i,g_i\}}} \mu_v((f_i,g_i)) = x(a)$ for all $a\in \delta(v)$, and thus the $\mu_v$-values indeed represent the original flow.)
Moreover, a greedy path decomposition of a flow in an acyclic graph always decomposes the full flow value on all arcs.
This implies \Cref{eq:PisArePathDecomp}.
\end{proof}
\section{Nice path partitionings have nice properties}\label{sec:properties_path_decomp}

In this section we prove \Cref{lem:non-interleaving} and \Cref{lem:discr_equiv_to_interval}, which are both statements about source-numbered nice $s$-path partitionings stemming from a PSSUF instance $(G,s,T,d,x)$.
As these statements do not depend on the fractional flow $x$ or the demands $d$, we will only fix, throughout this section, an acyclic graph $G=(V,A)$ and an arc-split $(H=(V,F), \pi , \phi)$ of $G$ together with a source-numbered nice $s$-path partitioning $P_1, \ldots , P_{\ell}$ of $H$.
Moreover, we assume throughout that no end vertex of any path of the $s$-path partitioning has outgoing arcs, because these end vertices correspond to terminals, which do not have outgoing arcs by assumption.
We thus also call here such end vertices \emph{terminals} and let $T$ be the set of all such terminals.

Our proofs of the statements \Cref{lem:non-interleaving} and \Cref{lem:discr_equiv_to_interval} both rely on a crucial property how the paths $P_1,\dots, P_{\ell}$ can interact.
This property is described in terms of how two paths that share a common internal vertex split the planar embedding in two parts.
We represent these two parts as a tri-coloring, which we introduce first.
Using this tri-coloring, we first prove \Cref{lem:discr_equiv_to_interval} in \Cref{sec:discr_equiv}, and then \Cref{lem:non-interleaving} in \Cref{sec:non-interleaving}.

\paragraph{Tri-coloring}

Let us fix a planar geometric embedding of $H$ realizing the combinatorial embedding $\pi$. 
Let $\widetilde{P}$ and $\overline{P}$ be two distinct paths of the nice $s$-path partitioning that share a vertex $v \in V \setminus \{s\}$.
Moreover, let $C$ be the subgraph of $H$ obtained from taking the union of the $s$-$v$ subpaths of $\widetilde{P}$ and $\overline{P}$.
Observe that $C$ is a connected graph in which every vertex $v$ has even degree $|\delta(v)|$.
Thus, the planar dual $C^*$ of the undirected graph obtained from $C$ by ignoring the orientation of the arcs is bipartite.\footnote{This is well known and can be seen as follows.
In the undirected graph obtained from $C$ by ignoring the orientation of the arcs, all vertex degrees are even and hence every cut contains an even number of edges. 
As cycles in the dual graph $C^*$ correspond to cuts in this graph, $C^*$ has no odd cycle and is thus bipartite.}
Therefore, we can color the faces of $C$ with two colors, red and blue.
Because the geometric embedding is planar, this yields a red-blue coloring of all arcs of $H$ except for $C$.
Finally, we color the arcs in $C$ in black.
We call this coloring of all arcs of $H$ the \emph{tri-coloring of $H$ induced by $\widetilde{P}$, $\overline{P}$, and $v$}.
(It is unique up to exchanging red and blue.)
See \Cref{fig:coloring_example}.
When we say that a path changes color, we mean that it changes the color of its arcs from either red, blue, or black, to another one of these 3~options.

\begin{figure}[ht]
\begin{center}
\begin{tikzpicture}[
vertices/.style={circle,draw=black,inner sep=2pt, fill=white},
arcs/.style={line width=2pt,-stealth},
p1/.style={},
p2/.style={dashed}
]

\begin{scope}
\coordinate (c) at (-1,0);
\coordinate (a1) at (-2,0.5);
\coordinate (a2) at (-1,1);
\coordinate (a3) at (-2,2.5);
\coordinate (a4) at (-2,6);
\coordinate (a5) at (1,6);
\coordinate (a6) at (2,4);
\coordinate (a7) at (2,2);
\coordinate (a8) at (1.2,2.5);
\coordinate (a9) at (0.9,4.5);
\coordinate (a10) at (0,4);

\coordinate (b1) at (a2);
\coordinate (b2) at (3,1);
\coordinate (b3) at (5,3);
\coordinate (b4) at (a6);
\coordinate (b5) at (3,2.5);
\coordinate (b6) at (a7);
\coordinate (b7) at (a3);
\coordinate (b8) at (-0.9,4.5);
\coordinate (b9) at (0,3);
\coordinate (b10) at (a8);
\coordinate (b11) at (a10);
\end{scope}

\draw[fill=blue, opacity=0.3]  (c) -- (a1) -- (a2)  --cycle;
\draw[fill=blue, opacity=0.3]  (a2) -- (a3) -- (a7) -- (b5) -- (a6) -- (b3) -- (b2)  --cycle;
\draw[fill=blue, opacity=0.3]  (a3) -- (a4) -- (a5) -- (a6) -- (a7) -- (a8) -- (a9) -- (a10) -- (a8) -- (b9) -- (b8)  --cycle;

\draw[fill=red, opacity=0.3] (a7) -- (a6) --(b5) -- cycle;
\draw[fill=red, opacity=0.3]  (b6) -- (b7) -- (b8) -- (b9) -- (b10) -- (b11) -- (a9) -- (a8) --cycle;
\draw[draw=none, fill=red, opacity=0.3]  (c) -- (-3,0) -- (-3,6.5) -- (5.5,6.5) -- (5.5,0) -- (c) -- (b1) -- (b2) -- (b3) --( b4) -- (a5) -- (a4) -- (a3) -- (a2) -- (a1) -- cycle;
\draw[draw=none, fill=red, opacity=0.3]  (-3,0) -- (5.5,0) -- (5.5,-0.5) -- (-3,-0.5) --cycle;

\begin{scope}[every node/.style={vertices}]
\node  (s) at (c) {$s$};
\foreach \i in {1, ..., 9} {
\node (n\i) at (a\i) {};
}
\foreach \i in {1, ..., 10} {
\node (m\i) at (b\i) {};
}
\node (m11) at (b11) {$v$};
\end{scope}
\begin{scope}[arcs, p1]
\draw (s) to (n1);
\draw (n1) to (n2);
\draw (n2) to (n3);
\draw (n3) to (n4);
\draw (n4) to (n5);
\draw (n5) to (n6);
\draw (n6) to (n7);
\draw (n7) to (n8);
\draw (n8) to (n9);
\draw (n9) to (m11);
\end{scope}
\begin{scope}[arcs, p2]
\draw (s) to (m1);
\draw (m1) to (m2);
\draw (m2) to (m3);
\draw (m3) to (m4);
\draw (m4) to (m5);
\draw (m5) to (m6);
\draw (m6) to (m7);
\draw (m7) to (m8);
\draw (m8) to (m9);
\draw (m9) to (m10);
\draw (m10) to (m11);
\end{scope}

\end{tikzpicture} \end{center}
\caption{Example of the coloring of the plane. The $s$-$v$ subpaths of $\overline{P}$ and $\widetilde{P}$ are shown in black with solid and dashed arcs, respectively.
The color assigned to the arcs in $\overline{P}\cup \widetilde{P}$ is black, arcs lying within a red face are colored red, and arcs lying within a blue face are colored blue.
}
    \label{fig:coloring_example}
\end{figure}

Note that no path of the nice $s$-path partitioning except for $\widetilde{P}$ and $\overline{P}$ contains black arcs, because paths in $P_1,\dots, P_{\ell}$ use distinct arcs.
The paths $\widetilde{P}$ and $\overline{P}$ use black arcs until they reach $v$, and red or blue ones afterward.

The following lemma highlights a crucial property how different paths of a nice $s$-partitioning can interact, based on the above-introduced tri-coloring.
\begin{lemma}
\label{lem:only_one_change_of_color}
Let $\widetilde{P}$ and $\overline{P}$ be two distinct paths of the nice $s$-path partitioning that share a vertex $v \in V \setminus \{s\}$ and consider the tri-coloring of $H$ induced by $\widetilde{P}$, $\overline{P}$, and $v$.
Then any path $P$ of the nice $s$-path partitioning can change its color at most once and, if so, only at vertex $v$.
\end{lemma}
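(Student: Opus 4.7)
The plan is to reduce the claim to showing that any color change of $P$ along its arcs must occur at the vertex $v$. Since $P$ is a simple path and $v$ is visited at most once, this immediately yields the bound of at most one color change as well.

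First, I would observe that if $u$ is a vertex of $H$ not lying in $C$, then every arc of $H$ incident to $u$ lies in a single face of $C$ and hence carries the same (non-black) color, so $P$ cannot change color at $u$. It therefore suffices to consider color changes at vertices of $C$.

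For a path $P \notin \{\widetilde P, \overline P\}$, no arc of $P$ is an arc of $C$, so $P$ uses only red and blue arcs. Consider a vertex $u$ of $C$ with $u \neq s, v$. At least one of $\widetilde P, \overline P$ is internal at $u$ and contributes both an incoming and an outgoing arc of $C$ there, and the $C$-arcs at $u$ partition the cyclic order $\pi_u$ into sectors whose colors alternate (because the dual of $C$ is bipartite). If $P$ changes color at $u$, then the in-arc and out-arc of $P$ at $u$ lie in sectors of opposite color, so each of the two cyclic arcs between them in $\pi_u$ must cross an odd (hence nonzero) number of $C$-arcs. A short case analysis, depending on whether $u$ has degree $2$ or $4$ in $C$, then shows that one of the $\pi_u$-progressions forbidden by \Cref{def:crossing_paths} arises between $P$ and one of $\widetilde P, \overline P$, contradicting that $P_1, \ldots, P_\ell$ are pairwise non-crossing. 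The vertex $u = s$ is excluded since $s$ is the start of $P$ and so not internal, and $u = v$ is permitted by the statement.

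For $P \in \{\widetilde P, \overline P\}$, the $s$-$v$ subpath of $P$ lies entirely in $C$ and is therefore all black, so the first possible color change is at $v$, where $P$ leaves $C$ into a red or blue face. I would then use the acyclicity of $G$ to argue that $P$ does not revisit any vertex of $C$ after $v$: such a revisit at some $u$ would combine the $v$-to-$u$ portion of $P$ with the $u$-to-$v$ portion of the other path through $u$ into a directed cycle. Hence after $v$ the path $P$ stays in a single face of $C$, keeps a single color, and the only color change occurs exactly at $v$. The main technical obstacle will be executing the case analysis at $u \in V(C) \setminus \{s, v\}$, where one has to translate "color change of $P$" into precisely one of the forbidden $\pi_u$-progressions in both the degree-$2$ and degree-$4$ cases; here one uses that the non-crossing of $\widetilde P$ and $\overline P$ themselves rules out the alternating degree-$4$ pattern, ensuring the $C$-arcs at $u$ appear in the expected "blocked" cyclic arrangement.
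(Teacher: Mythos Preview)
Your outline is correct and, for the case $P\notin\{\widetilde P,\overline P\}$, follows exactly the paper's argument: a color change at $u\in V(C)\setminus\{s,v\}$ forces a forbidden $\pi_u$-progression with one of $\widetilde P,\overline P$, with the degree-$4$ case relying on $\widetilde P$ and $\overline P$ themselves being non-crossing to fix the cyclic pattern of the four $C$-arcs at $u$.

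For $P\in\{\widetilde P,\overline P\}$ you take a genuinely different route. The paper argues that any vertex $u$ that $P$ visits after $v$ can only be an internal vertex of the $s$-$v$ subpath of the other path $Q$ (so $|C\cap\delta(u)|=2$), and then reuses the crossing argument at $u$ to rule out a color change there. You instead invoke acyclicity of $G$: since every vertex of $C$ has a directed path to $v$ inside $C$, a visit of $P$ to any $u\in V(C)$ after $v$ would close a directed cycle. This is a clean shortcut that avoids repeating the crossing case analysis; it does, however, use the ambient acyclicity assumption, whereas the paper's argument for this lemma relies purely on the non-crossing property and planarity. Either way the conclusion follows.
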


\begin{proof}
As in the definition of a tri-coloring, let $C$ be the set of all arcs colored black, i.e., let $C$ be the union of the $s$-$v$ subpaths of $\overline{P}$ and $\widetilde{P}$.
First assume that $P \not\in \{\overline{P},\widetilde{P}\}$.
Suppose for the sake of deriving a contradiction that the color of $P$ changes at a vertex $u\neq v$. 
Note that $P$ does not contain any black arcs since it is arc-disjoint from $\overline{P}$ and $\widetilde{P}$.
Because $P$ starts at $s$ we have $u \neq s$.
Moreover, for a color change to be possible at $u$, the vertex $u$ must lie on  the $s$-$v$ subpath of either $\overline{P}$ or $\widetilde{P}$.
Thus, we either have $|C \cap \delta(u)|=2$, in which case $u$ is visited by exactly one of the $s$-$v$ subpaths of $\overline{P}$ and $\widetilde{P}$, or we have $|C \cap \delta(u)|=4$, in which case $u$ is visited both by the $s$-$v$ subpath of $\overline{P}$ and the $s$-$v$ subpath of $\widetilde{P}$.
This holds because any path, including $\overline{P}$ and $\widetilde{P}$, can visit any vertex at most once.
If $|C \cap \delta(u)|=2$, then when changing colors at $u$ the path $P$ would cross the path $Q \in \{\overline{P},\widetilde{P}\}$ whose $s$-$v$ subpath visits $u$. More formally, let $a$ be the single arc in $Q\cap \delta^-(u)$ and $b$ be the single arc in $Q\cap \delta^+(v)$.
The arcs counterclockwise in $\pi_u$ between $a$ and $b$ must either all be red or blue (say red), and all arcs clockwise between $a$ and $b$ have the other color among red/blue (say blue), see the left part of \Cref{fig:changing_color_only_at_v}.
Up to symmetry this means $\pi_u$ has the following structure:
\[
\pi_u = (a, \underbrace{\ldots}_{\textcolor{red}{red}} , b , \underbrace{\ldots}_{\textcolor{blue}{blue}}).
\]
Thus, a path entering $u$ with an arc of color red or blue and leaving $u$ with an arc of the other of these colors implies a crossing among the paths $P_1,\dots, P_\ell$, which violates that they are a nice $s$-path partitioning.

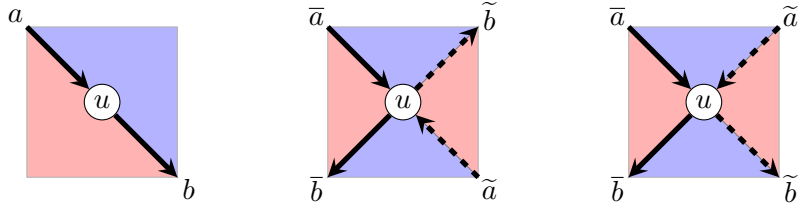
\begin{figure}[ht]
\begin{center}
\begin{tikzpicture}[
vertices/.style={circle,draw=black,inner sep=2pt, fill=white},
arcs/.style={line width=2pt,-stealth},
p1/.style={},
p2/.style={dashed}
]

\begin{scope}[shift={(-4,0)}]
\begin{scope}
\coordinate (c) at (0,0);
\coordinate (u1) at (-1, 1);
\coordinate (u2) at (-1,-1);
\coordinate (w1) at ( 1,-1);
\coordinate (w2) at ( 1, 1);
\end{scope}
\draw[fill=blue, opacity=0.3]  (c) -- (u1) -- (w2) -- (w1) --cycle;
\draw[fill=red, opacity=0.3]  (c) -- (w1) -- (u2) -- (u1) --cycle;
\begin{scope}[every node/.style={vertices}]
\node (u) at (c) {$u$};
\end{scope}
\begin{scope}[arcs, p1]
\draw (u1) --  (u);
\draw (u) --  (w1);
\end{scope}
\node at (-1.15,1.15) {$a$};
\node at (1.15,-1.15) {$b$};
\end{scope}

\begin{scope}[shift={(4,0)}]
\begin{scope}
\coordinate (u1) at (-1, 1);
\coordinate (u2) at (-1,-1);
\coordinate (w1) at ( 1,-1);
\coordinate (w2) at ( 1, 1);
\end{scope}
\draw[fill=blue, opacity=0.3]  (0,0) -- (u1) -- (w2) --cycle;
\draw[fill=blue, opacity=0.3]  (0,0) -- (u2) -- (w1) --cycle;
\draw[fill=red, opacity=0.3]  (0,0) -- (w1) -- (w2) --cycle;
\draw[fill=red, opacity=0.3]  (0,0) -- (u1) -- (u2) --cycle;
\begin{scope}[every node/.style={vertices}]
\node (u) at (0,0) {$u$};
\end{scope}
\begin{scope}[arcs]
\begin{scope}[p1]
\draw (u1) --  (u);
\draw (u) --  (u2);
\end{scope}
\begin{scope}[p2]
\draw (w2) --  (u);
\draw (u) --  (w1);
\end{scope}
\node at (-1.15,1.15) {$\overline{a}$};
\node at (-1.15,-1.15) {$\overline{b}$};
\node at (1.15,-1.15) {$\widetilde{b}$};
\node at (1.15,1.15) {$\widetilde{a}$};
\end{scope}
\end{scope}

\begin{scope}[shift={(0,0)}]
\begin{scope}
\coordinate (c) at (0,0);
\coordinate (u1) at (-1, 1);
\coordinate (u2) at (-1,-1);
\coordinate (w1) at ( 1,-1);
\coordinate (w2) at ( 1, 1);
\end{scope}
\draw[fill=blue, opacity=0.3]  (c) -- (u1) -- (w2) --cycle;
\draw[fill=blue, opacity=0.3]  (c) -- (u2) -- (w1) --cycle;
\draw[fill=red, opacity=0.3]  (c) -- (w1) -- (w2) --cycle;
\draw[fill=red, opacity=0.3]  (c) -- (u1) -- (u2) --cycle;
\begin{scope}[every node/.style={vertices}]
\node (u) at (c) {$u$};
\end{scope}
\begin{scope}[arcs]
\begin{scope}[p1]
\draw (u1) --  (u);
\draw (u) --  (u2);
\end{scope}
\begin{scope}[p2]
\draw (w1) --  (u);
\draw (u) --  (w2);
\end{scope}
\node at (-1.15,1.15) {$\overline{a}$};
\node at (-1.15,-1.15) {$\overline{b}$};
\node at (1.15,-1.15) {$\widetilde{a}$};
\node at (1.15,1.15) {$\widetilde{b}$};
\end{scope}
\end{scope}

\end{tikzpicture} \end{center}
\caption{Illustration of the neighborhood of a vertex $u$ in $C$.
The three different possible situations (up to symmetry and exchange of the colors red and blue) are illustrated in the picture. 
Here the solid and dashed arcs belong to the $s$-$v$ subpaths of $\overline{P}$ and $\widetilde{P}$, respectively.
A path changing its color from a red face to a blue face (or the other way around) at $u$ crosses one of the paths $\overline P$ and $\widetilde P$.}
    \label{fig:changing_color_only_at_v}
\end{figure}

So assume $|C \cap \delta(u)|=4$.
Hence, the four primal faces adjacent to $u$ have alternating colors between red and blue.
(More precisely, as we color the arcs, the color of a face corresponds to the color of all arcs within that face.)
Let us denote by $\overline{a}$ and $\widetilde{a}$ the arc of $\overline{P}$ and $\widetilde{P}$, respectively, that enters $u$, and by $\overline{b}$ and $\widetilde{b}$ the arc of $\overline{P}$ and $\widetilde{P}$, respectively, that leaves $u$, i.e., $\{\overline{a}\} = \overline{P}\cap \delta^-(u)$, $\{\widetilde{a}\} = \widetilde{P}\cap \delta^-(u)$, $\{\overline{b}\} = \overline{P}\cap \delta^+(u)$, and $\{\widetilde{b}\} = \widetilde{P}\cap \delta^+(u)$.
Because $\overline{P}$ and $\widetilde{P}$ do not cross, we can describe the structure of $\pi_u$ up to symmetries as follows:
\begin{align*}
\pi_u &= (\overline{a}, \underbrace{\ldots}_{\textcolor{red}{red}} , \overline{b}, \underbrace{\ldots}_{\textcolor{blue}{blue}}, \widetilde{a}, \underbrace{\ldots}_{\textcolor{red}{red}} , \widetilde{b}, \underbrace{\ldots}_{\textcolor{blue}{blue}}) \quad \text{ or }\\
\pi_u &= (\overline{a}, \underbrace{\ldots}_{\textcolor{red}{red}} , \overline{b}, \underbrace{\ldots}_{\textcolor{blue}{blue}}, \widetilde{b}, \underbrace{\ldots}_{\textcolor{red}{red}} , \widetilde{a}, \underbrace{\ldots}_{\textcolor{blue}{blue}}).
\end{align*}
Again, a path $P$ would have to cross one of the paths $\overline{P}$ and $\widetilde{P}$ when changing color in $u$.
See \Cref{fig:changing_color_only_at_v} for a visualization. This is not possible in a nice $s$-path partitioning.

We conclude that $P$ can change its color only at the vertex $v$.
Because $P$ is a path, it can thus change its color at most once.

Now suppose $P\in \{\overline{P},\widetilde{P}\}$, and let $Q \in \{\overline{P},\widetilde{P}\}$ be the other path.
By definition, $P$ is black until $v$ and becomes red or blue after that. 
A second change of color is thus only possible at a vertex that $P$ visits after $v$.
Because $P$ cannot visit a vertex twice, such a vertex is not visited by the $s$-$v$ subpath of $P$.
Thus, if $P$ changes its color at a vertex $u\neq v$, the vertex $u$ must be an internal vertex of the $s$-$v$ subpath of $Q$, and we have $|C \cap \delta(u)|=2$.
However, at such a vertex $u$, the path $P$ cannot change color as this would imply a crossing of $P$ and $Q$ at $u$.
Thus, after changing color from black to red or blue at vertex $v$, the path $P$ does not change color again, as desired.
\end{proof}

\subsection{Characterization of paths sharing an arc}\label{sec:discr_equiv} 

We now derive \Cref{lem:discr_equiv_to_interval} from \Cref{lem:only_one_change_of_color}.
For convenience, we restate the lemma below and recall that, for $t\in T$, the set
\begin{equation*}
\mathcal{P}^t \coloneqq \{P_i\colon i\in [\ell] \text{ and $P_i$ ends at terminal $t$}\},
\end{equation*}
are all paths in the nice $s$-path partitioning that end at terminal $t$.

\lemDiscrEquivToInterval*

\begin{proof}
Fix some arc $a=(v,w) \in A$ and a geometric planar embedding of $H$ realizing the combinatorial embedding $\pi$.
Let
\begin{align*}
I           &\coloneqq \{i\in [\ell] \colon a\in \phi(P_i)\} \text{ and}\\
\mathcal{P} &\coloneqq \{P_i \colon i\in I \}
\end{align*}
be the indices that we want to show to be discrepancy-equivalent to a circular interval, and the corresponding paths, respectively.

If $v=s$, then $I$ is a circular interval because the paths $P_1, \dots, P_{\ell}$ are source-numbered and the arcs in $\phi^{-1}(a)$ appear consecutively in $\pi_s$ (see \Cref{def:arc_split_graph}, i.e., the definition of an arc-split graph).
Moreover, if $|\mathcal{P}| \le 1$ the statement of the lemma is trivial.
Hence, we will now assume $s \neq v$ and $|\mathcal{P}| \ge 2$.

By definition of $\mathcal{P}$, every path in $\mathcal{P}$ uses an arc of $\phi^{-1}(a) \subseteq \delta^+(v)$. Since $v \neq s$, also every path of $\mathcal{P}$ uses an arc of $\delta^-(v)$. We denote these arcs as $\Fin$, i.e., $\Fin \subseteq F$ contains all arcs that belong to a path of $\mathcal{P}$ and enter $v$.

Recall that by the definition of an arc-split graph, the arcs in $\phi^{-1}(a)$ appear consecutively in $\pi_v$.
Consider the first arc of $\Fin$ appearing after the arcs of $\phi^{-1}(a)$ in clockwise direction in the geometric embedding and let $\Pclo$ be the unique path in  $\mathcal{P}$ using this arc. (Uniqueness follows from the fact that the different paths in $\mathcal{P}$ are arc-disjoint by definition of an $s$-path partitioning.) 
Similarly, consider the first arc of $\Fin$ after the arcs of $\phi^{-1}(a)$ in counterclockwise direction and let $\Pcc$ be the unique path using this arc. 
Because $|\mathcal{P}| \ge 2$, also $\abs{\Fin} \geq 2$ and, thus, the paths $\Pcc$ and $\Pclo$ exist and are distinct. 

Both $\Pclo$ and $\Pcc$ start in $s$ and contain the vertex $v$. We consider the tri-coloring of $H$ induced by $\Pclo$, $\Pcc$, and $v$.
Note that, for each terminal $t\in T$, either all arcs in $\delta^-(t)$ are red or all arcs in $\delta^-(t)$ are blue.
This follows from the fact that terminals cannot lie in the interior of the paths $\Pclo$ or $\Pcc$, because they do not have outgoing arcs. 
In particular, $v$ cannot be a terminal, because it has an outgoing arc $a=(v,w)$.
We call a terminal $t\in T$ \emph{red} if all arcs in $\delta^-(t)$ are red, and we call it \emph{blue} if all arcs in $\delta^-(t)$ are blue.
Analogously, the vertex $w\coloneqq \head(a)$ does not lie on the $s$-$v$ subpaths of either $\Pclo$ or $\Pcc$ because both paths $\Pclo$ and $\Pcc$ contain some arc from $\phi^{-1}((v,w))$, and $w$ cannot appear twice on any path.
Hence, also the arcs in $\delta(w)$ are either all red or all blue.
We assume without loss of generality that the arcs in $\delta(w)$ are all colored red. 
See \Cref{fig:v} for an illustration.

\begin{figure}[ht]
\begin{center}
\begin{tikzpicture}[ scale =1.5,
vertices/.style={circle,draw=black,inner sep=2pt, fill=white},
arcs/.style={line width=2pt,-stealth},
p1/.style={},
p2/.style={dashed}
]

\begin{scope}
\coordinate (u1) at (-1, 1);
\coordinate (u2) at (-1,-1);
\coordinate (w1) at ( 1,-1);
\coordinate (w2) at ( 1, 1);
\end{scope}
\draw[fill=blue, opacity=0.3]  (0,0) -- (u1) -- (w2) --cycle;
\draw[fill=red, opacity=0.3]  (0,0) -- (u1) -- (u2) -- (w1) -- (w2) --cycle;
\begin{scope}[every node/.style={vertices}]
\node (v) at (0,0) {$v$};
\node (w) at (0,-0.8) {$w$};
\end{scope}
\begin{scope}[arcs]
\begin{scope}[p1]
\draw (u1) --  (v);
\draw[->, red!70!black] (v) to [bend right=45]  (w);
\end{scope}
\begin{scope}[p2]
\draw (w2) --  (v);
\draw[->, red!70!black] (v) to [bend left = 45]  (w);
\end{scope}
\end{scope}

\end{tikzpicture} \end{center}
\caption{Illustration of the neighborhood of $v$. 
The solid arcs are arcs of $\Pclo$ and the dashed arcs belong to $\Pcc$.}
    \label{fig:v}
\end{figure}
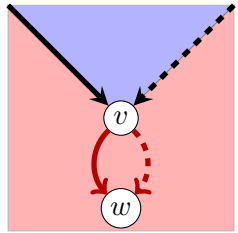

Let  $\mathcal{Q}$ be the set of paths containing $\Pclo$, $\Pcc$, and all paths $P \in \{P_1,\dots, P_{\ell}\}$ whose first arc is colored blue.
By construction of the tri-coloring and since $s$ is adjacent to only two black arcs, %
the index set $J\coloneqq \{i\in [\ell] \colon P_i \in \mathcal{Q}\}$ is a circular interval.
We claim that the set $I$ of indices of paths in $\mathcal{P}$ is discrepancy-equivalent to the set $J$ of indices of paths in $\mathcal{Q}$.
To this end, we will show that:
\begin{enumerate}
\item\label{item:redPaths} For every red terminal $t\in T$, we have $\mathcal{P}^t \cap \mathcal{P} =\mathcal{P}^t \cap \mathcal{Q}$.
\item\label{item:bluePaths} For every blue terminal $t\in T$, we have $\mathcal{P}^t \cap \mathcal{P} = \emptyset$ and $\mathcal{P}^t \subseteq \mathcal{Q}$.
\end{enumerate}
Hence, $J$ can be obtained from $I$ by $S^t$-addition for all terminals $t\in T$ that are colored blue, where $S^t\coloneqq \{ i \in [\ell] : P_i \in \mathcal{P}^t\}$.

It remains to prove \ref{item:redPaths} and \ref{item:bluePaths}.
By \Cref{lem:only_one_change_of_color}, a path $P\in \{P_1,\dots, P_{\ell}\}$ can only change its color at $v$. We show that this implies that $P$ can only change its color from either blue or black to red, but not the other way around.
Hence, once a path is red, it stays red.
Because all arcs incident with $w$ are red, including all arcs in $\phi^{-1}(a)$, this implies \ref{item:bluePaths}.
More concretely, given a blue terminal $t$, a path $P \in \mathcal{P}^t$ can never use a red arc, because it would then stay red and can therefore not end in a blue terminal.
Thus, $P \in \mathcal{Q}$.
Moreover, $P \not\in \mathcal{P}$ since otherwise it contains $w$, which only has red arcs incident with it.
Note that this also implies that both $\Pclo$ and $\Pcc$ do not have a blue terminal since they use an arc of $\phi^{-1}(a)$, which is red.

To show that a path can only change the color from blue or black to red but not the other way around, remember that paths can only change their color in $v$ by \Cref{lem:only_one_change_of_color}.
The vertex $v$ has only two adjacent black arcs (one incoming arc each from $\Pclo$ and $\Pcc$).
Both $\Pclo$ and $\Pcc$ use an arc of $\phi^{-1}(a)$ by definition and all the arcs of $\phi^{-1}(a)$ appear consecutively in the embedding.
Thus, $\pi_v$ has the following form,
where $\aclo$ and $\acc$ are the arcs of $\Pclo$ and $\Pcc$, respectively, that enter $v$, and $\bclo$ and $\bcc$ are the arcs of $\Pclo$ and $\Pcc$, respectively, that leave $v$, i.e., $\{\aclo\} = \Pclo\cap \delta^-(v)$, $\{\bclo\} = \Pclo\cap \delta^+(v)$, $\{\acc\}=\Pcc\cap \delta^-(v)$, and $\{\bcc\}=\Pcc\cap \delta^+(v)$:
\begin{equation}
\label{eq:neighborhood_v}
\pi_v = (\aclo, \underbrace{\ldots}_{\textcolor{red}{red}} ,  \underbrace{\bclo, \ldots, \bcc}_{\textcolor{red}{\subseteq \phi^{-1}(a)}}, \underbrace{\ldots}_{\textcolor{red}{red}} , \acc, \underbrace{\ldots}_{\textcolor{blue}{blue}}).
\end{equation}
One can even show that all arcs of $\phi^{-1}(a)$ lie counterclockwise between $\bclo$ and $\bcc$, but since it is not needed for the proof, we just use the simpler fact that $\bclo, \bcc \in \phi^{-1}(a)$ and thus also all arcs which are counterclockwise between them are contained in $\phi^{-1}(a)$.
See \Cref{fig:v_with_arc_names} for an illustration of the neighborhood of $v$.

\begin{figure}[ht]
\begin{center}
\begin{tikzpicture}[ scale =1.5,
vertices/.style={circle,draw=black,inner sep=2pt, fill=white},
arcs/.style={line width=2pt,-stealth},
p1/.style={},
p2/.style={dashed}
]

\begin{scope}
\coordinate (u1) at (-1, 1);
\coordinate (u2) at (-1,-1);
\coordinate (w1) at ( 1,-1);
\coordinate (w2) at ( 1, 1);
\end{scope}
\draw[fill=blue, opacity=0.3]  (0,0) -- (u1) -- (w2) --cycle;
\draw[fill=red, opacity=0.3]  (0,0) -- (u1) -- (u2) -- (w1) -- (w2) --cycle;
\begin{scope}[every node/.style={vertices}]
\node (v) at (0,0) {$v$};
\node (w) at (0,-0.8) {$w$};
\end{scope}
\begin{scope}[arcs]
\begin{scope}[p1]
\draw (u1) -- node[below left=-7pt]{$\aclo$} (v);
\draw[->, red!70!black] (v) to [bend right=45] node[left]{$\bclo$}  (w);
\end{scope}
\begin{scope}[p2]
\draw (w2) -- node[below right=-5pt]{$\acc$}  (v);
\draw[->, red!70!black] (v) to [bend left = 45] node[right]{$\bcc$}  (w);
\end{scope}
\end{scope}

\end{tikzpicture} \end{center}
\caption{Illustration of the arcs $\aclo, \bclo,\acc,\bcc$ incident to $v$.
The solid arcs are arcs of $\Pclo$ and the dashed arcs belong to $\Pcc$.}
    \label{fig:v_with_arc_names}
\end{figure}

\noindent To change the color from red to blue, a path cannot use an arc of $\phi^{-1}(a)$, since they are all red and leave $v$. This means that any path changing color from red to blue needs to cross either $\Pclo$ or $\Pcc$, which is forbidden by the definition of a nice $s$-path partitioning. 

It remains to prove \ref{item:redPaths}.
To this end we show that every $P\in \{P_1,\dots, P_{\ell}\}$ satisfies the following.
\begin{enumerate}[label=(\alph*)]
\item \label{item:redPathsDoNotvisitV} If $P$ enters $v$ via a red arc, then it does not use an arc from $\phi^{-1}(a)$, i.e., $P\notin \mathcal{P}$.
\item \label{item:colorChangeImpliesUsingA} If $P$ changes its color from blue to red, then it must use an arc from $\phi^{-1}(a)$, i.e., $P\in \mathcal{P}$.
\end{enumerate}
Then \ref{item:redPaths} follows, because \ref{item:redPathsDoNotvisitV} implies that every path in $\mathcal{P}$, i.e., every path using an arc in $\phi^{-1}(a)$ from $v$ to $w$, must change its color at $v$ from blue to red and thus must be contained in $\mathcal{Q}$. 
Hence, $\mathcal{P} \subseteq \mathcal{Q}$, and thus $\mathcal{P}^t \cap \mathcal{P} \subseteq \mathcal{P}^t \cap \mathcal{Q}$ for every terminal $t\in T$.
(Here we used that  $\Pcc$ and $\Pclo$ are contained in $\mathcal{Q}$ by definition.)
Now consider a path $P \in \mathcal{P}^t \cap \mathcal{Q}$ for some red terminal $t\in T$.
Then \ref{item:colorChangeImpliesUsingA} implies that $P$ uses an arc from $\phi^{-1}(a)$, i.e., we have $P\in \mathcal{P}$.
(Note that $\Pclo$ and $\Pcc$ are contained in $\mathcal{P}$ by definition.)
Thus, it remains to prove \ref{item:redPathsDoNotvisitV} and \ref{item:colorChangeImpliesUsingA}.

First, \ref{item:redPathsDoNotvisitV} follows by definition of $\Pclo$ and $\Pcc$. If a path enters $v$ using a red arc, this arc lies either clockwise between $\bclo$ and $\aclo$ or counterclockwise between $\bcc$ and $\acc$, see \Cref{fig:v} and Equation~\eqref{eq:neighborhood_v}. 
This is because the red arcs of $\delta(v)$ are precisely those that are counterclockwise between $\aclo$ and $\acc$; moreover, all arcs counterclockwise between $\bclo$ and $\bcc$ are arcs of $\phi^{-1}(a)$, which leave $v$. By definition of $\Pclo$, there is no path entering $v$ clockwise between $\bclo$ and $\aclo$ and using an arc of $\phi^{-1}(a)$ afterward. 
Analogously, by definition of $\Pcc$, there is no path entering counterclockwise between $\bcc$ and $\acc$ and then using an arc of $\phi^{-1}(a)$. This shows \ref{item:redPathsDoNotvisitV}.

Finally, \ref{item:colorChangeImpliesUsingA} follows because paths are non-crossing.
By \Cref{lem:only_one_change_of_color} a path can only change its color in $v$. Now, assume for the sake of deriving a contradiction that there is a path $P$ that enters $v$ via a blue arc and leaves $v$ via a red one not in $\phi^{-1}(a)$. Such a path either crosses $\Pclo$ (when it leaves clockwise between $\bclo$ and $\aclo$) or $\Pcc$ (when it leaves counterclockwise between $\bcc$ and $\acc$)---see also \Cref{fig:v} and Equation~\eqref{eq:neighborhood_v}---which is a contradiction.

\end{proof}

\subsection{Nice $s$-path partitionings are non-interleaving}\label{sec:non-interleaving}

We now show \Cref{lem:non-interleaving}, which we restate here for convenience.

\nonInterleaving*

\begin{proof}
Assume for the sake of deriving a contradiction that there are two interleaving sets $S^{t_1}$ and $S^{t_2}$, for some $t_1, t_2 \in T$, with $a_i, b_i \in S^{t_i}$ for $i \in [2]$ such that $a_1<a_2<b_1<b_2$.

Consider the paths $P_{a_1}$ and $P_{b_1}$, which both go to the same terminal $t_1$.
We consider the tri-coloring of $H$ induced by $P_{a_1}$, $P_{b_1}$, and $t_1$, see \Cref{fig:coloring_example} with $v=t_1$. 
By \Cref{lem:only_one_change_of_color}, a path can change the color only once and only at $t_1$. Because $t_1$ does not have any outgoing arcs, we get that no path can change its color. 
As paths are source-numbered, $P_{a_2}$ and $P_{b_2}$ start with arcs of different color.
Thus, their joint terminal $t_2$ needs to lie on either $P_{a_1}$ or $P_{b_1}$, because for all vertices $w$ not lying on one of these two paths, the arcs in $\delta(w)$ have all the same color.
However, as $t_2$ does not have any outgoing arcs and is disjoint from $t_1$, it cannot lie on either $P_{a_1}$ or $P_{b_1}$, leading to a contradiction.
\end{proof}

\section{Interval-discrepancy on non-interleaving partitions}\label{sec:discrepancy}

In this section we prove our two main discrepancy statements, \Cref{thm:discrepancy_main} and \Cref{thm:discrepancy_main_costs}.

\subsection{The case without costs (Proof of \Cref{thm:discrepancy_main})}

For convenience, we first recall \Cref{thm:discrepancy_main}.
\discrepancyMain*

In order to prove \Cref{thm:discrepancy_main}, we show the following bound on the discrepancy of the prefix intervals $[i]$ with $i\in [\ell]$.

\begin{theorem}\label{thm:prefix_discrepancy_main}
Consider a WPCS instance $(\ell,\mathcal{S}, d,y)$. 
 If $\mathcal{S}$ is non-interleaving, then there exists an integral selection  $z$ with
    \[
    D_{y,z}([i]) \leq  \frac{d_{\max}}{2}
    \]
 for all $i\in [\ell]$, where $d_{\max} \coloneqq \max_{S \in \mathcal{S}} d_S$.
Moreover, it can be computed in time $O(\ell)$. 
\end{theorem}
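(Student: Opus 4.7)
The plan is to settle the theorem with a single left-to-right greedy. Define the running imbalance $r_i \coloneqq y^d([i]) - z^d([i])$; since $D_{y,z}([i]) = |r_i|$, the goal is to keep $|r_i| \leq d_{\max}/2$ at every step. The algorithm scans $i = 1,\ldots,\ell$ in order; with $S$ denoting the unique set in $\mathcal{S}$ containing $i$, it sets $z_i = 1$ exactly when $S$ has not yet been committed and either $i=\max S$ (forced) or $r_{i-1} + y^d_i \geq d_S/2$; otherwise $z_i = 0$. An $O(\ell)$ preprocessing pass that records, for each $i$, its set and whether $i = \max S$, lets each step run in $O(1)$, giving the claimed $O(\ell)$ total running time.

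The key structural observation is that non-interleaving forces the intervals $I_S \coloneqq [\min S, \max S]$ to form a laminar family; moreover, if $I_{S'} \supsetneq I_S$ then $S'$ has \emph{no} elements in $I_S$, since otherwise $\min S, \max S$ together with two elements of $S'$ would yield an interleaving quadruple. Together with the algorithm's invariant that every set is committed exactly once inside its own interval, a straightforward induction on the laminar tree gives the \emph{net-zero property}: $r_{\max S} = r_{\min S - 1}$ for every $S$, because within $I_S$ the set $S$ itself and each nested set each contribute $0$ in total.

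The bound $|r_i| \leq d_{\max}/2$ is then proved by induction on $i$ with a case split (writing $S = S(i)$ for the set containing $i$). If the greedy commits at $i$, then $r_i = r_{i-1} + y^d_i - d_S \in [\,d_S/2 - d_S,\ r_{i-1}\,] \subseteq [-d_{\max}/2,\, d_{\max}/2]$; if the greedy declines to commit (and $i < \max S$), then $r_i = r_{i-1} + y^d_i < d_S/2 \leq d_{\max}/2$ and $r_i \geq r_{i-1}$; and at a forced commit $i = \max S$ the net-zero property immediately yields $r_i = r_{\min S - 1}$, which is within range by the inductive hypothesis. The delicate case, and the main obstacle, is when $i \in S$ but $S$ was already committed at some $c_S < i$: now $r$ keeps accruing $+y^d$ contributions at the remaining $S$-elements with no balancing drop. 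Here one invokes the net-zero property on every laminar sub-interval contained in $(c_S,i)$ (which is possible precisely because $c_S, i \in S$ force any intersecting nested interval to be fully contained) to telescope into the clean identity $r_i = r_{\min S - 1} + P_i - d_S$, where $P_i \coloneqq \sum_{k\in S,\, k \leq i} y^d_k \leq d_S$. The upper bound $r_i \leq r_{\min S - 1} \leq d_{\max}/2$ is then immediate, and for the lower bound the greedy's commit condition at $c_S$ rewrites as $r_{\min S - 1} + P_{c_S} \geq d_S/2$, so using $P_i \geq P_{c_S}$ gives $r_i \geq -d_S/2 \geq -d_{\max}/2$. Once the net-zero property is in hand, everything else is routine accounting.
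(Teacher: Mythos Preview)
Your proposal is correct and follows essentially the same approach as the paper: a left-to-right greedy together with the structural fact (your ``net-zero'' observation, the paper's Lemma~5.3) that for $i,j\in S$ the discrepancy $r_j - r_{i-1}$ depends only on the $S$-entries, which is exactly what non-interleaving buys. The only differences are cosmetic: your commit threshold is $r_{i-1}+y_i^d \ge d_S/2$ while the paper commits whenever $r_{j-1}+y_j^d - d_S \ge -d_{\max}/2$ (a more aggressive rule when $d_S<d_{\max}$, but both work), and in the ``already committed'' case the paper bounds $r_i$ from below by sandwiching it between $r_{c_S}$ and $r_{\min S -1}$ (both already controlled by induction), which is marginally cleaner than unpacking the commit condition as you do.
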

    
Before proving \Cref{thm:prefix_discrepancy_main}, we first observe that it implies \Cref{thm:discrepancy_main}.
For any interval $I=\{i, i+1,\dots,j\}\subseteq [\ell]$, we have 
\begin{align*}
D_{y,z} (I) = D_{y,z}([j]\setminus [i-1])\le \ D_{y,z} ([i-1]) +  D_{y,z} ([j]) \le d_{\max}.
\end{align*}
If $I$ is a circular interval, then either $I$ is an interval or the complement $[\ell]\setminus I$ of $I$ is an interval.
In the latter case we have $D_{y,z} (I) = D_{y,z}([\ell]\setminus I)  \le d_{\max}$.

\smallskip

Hence, it remains to prove \Cref{thm:prefix_discrepancy_main}, which we do in the remainder of this section.
We fix a WPCS instance $(\ell,\mathcal{S}, d,y)$ such that $\mathcal{S}$ is non-interleaving.

For a given integral selection  $z \in \{0,1\}^{\mathcal{S}}$, we define the values
\begin{equation*}
D^k \coloneqq y^d([k]) - z^d([k]) \quad \text{for } k\in \{0,\dots, \ell\}.
\end{equation*}
Then $D_{y,z}([k])=\abs{D^k}$.
We now show that the difference of $D^i$ and $D^j$ for $i,j \in S \in \mathcal{S}$ depends only on the restriction of $z$ to $S$ and not on any other entry of the vector $z$.
This is a  key consequence of the fact that $\mathcal{S}$ is non-interleaving.

\begin{lemma}\label{lem:main_consequence_non_interleaving}
Let $z \in \{0,1\}^{\mathcal{S}}$ be an integral selection.
Let $S\in \mathcal{S}$ and $i,j\in S$ with $i \le j$.
Then
\[
  D^j - D^{i-1} = \sum_{\substack{m\in S:\\ i \leq m \leq j}} (y^d_m -z^d_m).
\]
\end{lemma}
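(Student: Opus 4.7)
The plan is to expand $D^j - D^{i-1}$ using the definition, then argue that contributions from indices outside $S$ cancel thanks to the non-interleaving property of $\mathcal{S}$. Concretely, by the definition of $D^k$,
\[
D^j - D^{i-1} = \bigl(y^d([j]) - y^d([i-1])\bigr) - \bigl(z^d([j]) - z^d([i-1])\bigr) = \sum_{m=i}^{j}(y^d_m - z^d_m),
\]
so the goal reduces to showing
\[
\sum_{\substack{m\in [i,j]\\ m\notin S}}(y^d_m - z^d_m) = 0.
\]

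The first key step is the following structural claim: for every $S'\in \mathcal{S}$ with $S'\neq S$, either $S'\subseteq [i,j]$ or $S'\cap [i,j]=\emptyset$. I would prove this by contradiction: if $S'$ contained some $m\in [i,j]$ and some $m'\notin [i,j]$, then, since $S\cap S'=\emptyset$, one has $i<m<j$, and either $m'<i$ or $m'>j$. In the first case $m'<i<m<j$ with $m',m\in S'$ and $i,j\in S$, and in the second case $i<m<j<m'$; in both cases $S$ and $S'$ interleave, contradicting the hypothesis on $\mathcal{S}$.

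The second step is to observe that whenever $S'\subseteq [i,j]$, the contributions cancel. Since $y$ is a fractional selection and $z$ an integral selection, $y(S')=z(S')=1$, and hence
\[
\sum_{m\in S'}(y^d_m - z^d_m) = d_{S'}\bigl(y(S') - z(S')\bigr) = 0.
\]
Combining these two facts, the indices in $[i,j]\setminus S$ decompose into a disjoint union of sets $S'\in\mathcal{S}\setminus\{S\}$ entirely contained in $[i,j]$, and each contributes $0$. Thus only the terms with $m\in S\cap [i,j]$ survive, yielding the claimed identity.

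I do not expect a real obstacle here: the only nontrivial ingredient is the interleaving argument, which is a direct unpacking of the definition, and the rest is bookkeeping using that both $y$ and $z$ sum to $1$ on every class of $\mathcal{S}$.
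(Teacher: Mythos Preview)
Your proof is correct and follows essentially the same approach as the paper: expand $D^j - D^{i-1}$ as $\sum_{m=i}^{j}(y^d_m - z^d_m)$, use the non-interleaving property to conclude that every $S'\in\mathcal{S}\setminus\{S\}$ is either disjoint from or entirely contained in $\{i,\dots,j\}$, and then cancel the contribution of each such $S'$ using $y(S')=z(S')=1$. The paper's version is slightly more terse, but the argument is the same.
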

\begin{proof}
Consider a set $R\in \mathcal{S}\setminus \{ S \}$.
Because $\mathcal{S}$ is non-interleaving, we have either $R\cap \{ i,i+1,\dots, j\} = \emptyset$ or $R\subseteq  \{ i,i+1,\dots, j\}$.
In both cases we have $\sum_{m\in R: i\leq m\leq j} (y^d_m -z^d_m) =0$, and thus
\[
D^j - D^{i-1}  = \sum_{m=i}^j (y^d_m -z^d_m) = \sum_{U\in \mathcal{S}} \sum_{\substack{m\in U:\\ i\leq m\leq j}} (y^d_m -z^d_m) = \sum_{\substack{m\in S: \\ i\leq m\leq j}} (y^d_m -z^d_m).
\]
\end{proof}

We now describe the algorithm which we use to compute an integral selection $z$ with the desired properties.
The algorithm fixes the entries $z_j$ from $j=1$ to $j=\ell$ one by one in this order, while maintaining the desired prefix discrepancy bound $D^j\in [-\frac{d_{\max}}{2},\frac{d_{\max}}{2}]$.

Suppose the currently considered entry is $z_j$ with $j\in S\in \mathcal{S}$.
If, for some $k\in S$ with $k<j$, the entry $z_k$ has already been fixed to $1$ in an earlier iteration, the algorithm has to choose $z_j = 0$ in order to obtain an integral selection.
Similarly, if $j$ is the largest element of $S$, and we have chosen $z_k=0$ for all other elements $k$ of $S$, the algorithm must set $z_j=1$ to obtain an integral selection.
We will use \Cref{lem:main_consequence_non_interleaving} to prove that in both of these cases we have $\abs{D^j} \le \frac{d_{\max}}{2}$ (using that prior prefix discrepancies are fine, i.e., $|D^i|\leq \frac{d_{\max}}{2}$ for $i<j$). 
If neither of the above two cases applies, our algorithm greedily fixes the currently considered entry $z_j$ to ensure  $\abs{D^j} \le \frac{d_{\max}}{2}$. 
This is possible because $D^{j-1} \in [-\frac{d_{\max}}{2}, \frac{d_{\max}}{2}]$ and the two different values that $D^j$ can attain, depending on the choice of $z^j\in \{0,1\}$, differ by at most $d_S \le d_{\max}$ from each other, where one of the two values lies below $D^{j-1}$ and the other one lies above $D^{j-1}$.
A formal description of the algorithm is given by \Cref{alg:discrepancy}.
Note that the runtime of the algorithm is $O(\ell)$, since every iteration of the for-loop can be executed in constant time ($D^{j-1}$ does not need to be recomputed from scratch in iteration $j$ but can be obtained from the previously computed $D^{j-2}$ in constant time).

\begin{algorithm2e}[ht]
\DontPrintSemicolon
\caption{Computing an integral selection $z$ with small interval-discrepancy.}
\label{alg:discrepancy}
\For{$j=1, \dots, \ell$}{
Let $S\in \mathcal{S}$ such that $j \in S$.\\
If $z_k =1$ for some $k\in S$ with $k < j$, set $z_j \coloneqq 0$. \\
If $j$ is the largest element of $S$ and $z_k=0$ for all $k\in S$ with $k<j$, set $z_j \coloneqq 1$. \\
If none of the above two cases apply, set
\[
z_j \coloneqq 
\begin{cases}
1 & \text{ if } D^{j-1} + y^d_j - d_S \ge - \frac{d_{\max}}{2}\\
0 & \text{ otherwise,}
\end{cases}
\]
where $D^{j-1} \coloneqq y^d([j-1]) - z^d([j-1])$.
}
\Return $z$.
\end{algorithm2e}

The vector $z$ returned by \Cref{alg:discrepancy} is an integral selection by construction. 
It remains to prove that it fulfills the discrepancy bound claimed by \Cref{thm:prefix_discrepancy_main}.

\begin{lemma}
Let $z$ be the integral selection returned by \Cref{alg:discrepancy}.
For all $j\in \{0,\dots, \ell\}$ we have $D_{y,z}([j])= \abs{D^j} \le \frac{d_{\max}}{2}$.
\end{lemma}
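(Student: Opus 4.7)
The plan is to prove the bound $|D^j| \le \tfrac{d_{\max}}{2}$ by induction on $j$, with the trivial base case $D^0 = 0$. For the inductive step, assume $|D^i| \le \tfrac{d_{\max}}{2}$ for every $i < j$, let $S \in \mathcal{S}$ be the set containing $j$, and split into the three cases that determine $z_j$ in \Cref{alg:discrepancy}: (A)~some earlier $z_k$ with $k \in S$ already equals $1$ and we force $z_j = 0$; (B)~$j$ is the last element of $S$ and all previous $z_k$ for $k \in S$ are $0$, so we force $z_j = 1$; (C)~the greedy case.

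The greedy case (C) is the easiest and does not need \Cref{lem:main_consequence_non_interleaving}. The two possible updates give $D^j \in \{D^{j-1} + y_j^d,\ D^{j-1} + y_j^d - d_S\}$, which differ by $d_S \le d_{\max}$; and since $0 \le y_j^d \le d_S$, the former is $\ge D^{j-1}$ while the latter is $\le D^{j-1}$. If the algorithm chooses $z_j = 1$, then $D^j \le D^{j-1} \le \tfrac{d_{\max}}{2}$ and the condition $D^{j-1} + y_j^d - d_S \ge -\tfrac{d_{\max}}{2}$ directly gives the lower bound. Otherwise $z_j = 0$, in which case $D^j \ge D^{j-1} \ge -\tfrac{d_{\max}}{2}$, and using that the alternative choice was rejected one obtains $D^j = (D^{j-1} + y_j^d - d_S) + d_S < -\tfrac{d_{\max}}{2} + d_{\max} = \tfrac{d_{\max}}{2}$.

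The forced cases (A) and (B) are where \Cref{lem:main_consequence_non_interleaving} is essential, because the algorithm no longer has the freedom to balance $D^j$ locally; the whole point of the non-interleaving structure is that what happens between two indices of $S$ is controlled entirely by $S$. In case (B), let $i$ be the smallest element of $S$. Then all $z_m$ with $m \in S$ are known (zero except at $j$), and \Cref{lem:main_consequence_non_interleaving} applied to $i$ and $j$ gives $D^j - D^{i-1} = \sum_{m \in S} (y_m^d - z_m^d) = d_S - d_S = 0$, so $D^j = D^{i-1}$ and induction finishes it. Case (A) is the main obstacle and the one that requires the most care: let $k < j$ be the (unique) element of $S$ with $z_k = 1$. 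Because $k$ is not the last element of $S$ (namely, $j \in S$ comes later), $z_k$ must have been set in the greedy case (C), so the defining inequality of that case yields $D^k \ge -\tfrac{d_{\max}}{2}$, and from $y_k^d \le d_S$ also $D^k \le D^{k-1} \le \tfrac{d_{\max}}{2}$. Applying \Cref{lem:main_consequence_non_interleaving} to $k$ and $j$ in $S$ and using that $z_m = 0$ for every $m \in S$ with $k < m \le j$, we get
\[
D^j = D^{k-1} + \sum_{\substack{m \in S \\ k \le m \le j}} y_m^d - d_S \le D^{k-1} \le \tfrac{d_{\max}}{2},
\]
where the first inequality uses $\sum_{m \in S} y_m^d = d_S$. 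The lower bound follows from $D^j = D^k + \sum_{m \in S,\ k < m \le j} y_m^d \ge D^k \ge -\tfrac{d_{\max}}{2}$. Together these close the induction.
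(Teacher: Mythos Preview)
Your proof is correct and follows essentially the same approach as the paper: induction on $j$ with the same three-case split, using \Cref{lem:main_consequence_non_interleaving} in the forced cases. The only cosmetic difference is in case~(A): the paper anchors at the minimal element $i$ of $S$ and sandwiches $D^k \le D^j \le D^{i-1}$, whereas you anchor at $k$ and $k-1$ to get $D^k \le D^j \le D^{k-1}$; both are valid applications of the lemma. Your observation that $z_k$ must have been set in the greedy branch is correct but unnecessary, since $|D^k| \le \tfrac{d_{\max}}{2}$ already follows from the induction hypothesis.
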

\begin{proof}
We prove the lemma by induction on $j$, starting with $j=0$.
We have $D^0=0$ and hence assume $j>0$.
As in \Cref{alg:discrepancy}, we let $S\in \mathcal{S}$ such that $j\in S$.
We distinguish the same three cases as the algorithm.

\medskip

First, suppose $z_k =1$ for some $k\in S$ with $k < j$.
Let $i$ be the minimal element of $S$.
By \Cref{lem:main_consequence_non_interleaving}, we have
\begin{align*}
D^k=&\ D^{i-1} + y^d(\{m\in S: i \le m \le k\}) - d_S, \text{ and} \\
D^j=&\  D^{i-1} + y^d(\{m\in S: i \le m \le j\}) - d_S.
\end{align*}

Since $y$ is a selection, we have $y^d(\{m\in S: i <m \le j\})\leq d_S$.
This implies $D^j \leq D^{i-1}$.
Moreover, because $y^d \ge 0$ and $k < j$, we get $D^k \le D^j$.
But since $\abs{D^{i-1}} \leq \frac{d_{\max}}{2}$ and $\abs{D^k} \leq \frac{d_{\max}}{2}$ by induction, we get $\abs{D^j} \leq \frac{d_{\max}}{2}$.

\medskip

Now suppose  $j$ is the largest element of $S$ and $z_k=0$ for all $k\in S$ with $k<j$. 
Let $i$ be the minimal element of $S$.
Then \Cref{lem:main_consequence_non_interleaving} implies
\begin{align*}
D^j - D^{i-1} =& \sum_{m\in S\colon i\leq m \leq j} (y_m^d - z_m^d) = \sum_{m\in S} (y_m^d - z_m^d) = 0,
\end{align*}
where the second equation follows from the fact that $i$ is the smallest element of $S$ and $j$ the largest one, and the last equality holds due to $y(S)=1$ and $z(S)=1$ because $y$ and $z$ is a fractional and integral selection, respectively.
By the induction hypothesis, we conclude $\abs{D^j} = \abs{D^{i-1}} \le \frac{d_{\max}}{2}$.

\medskip

Finally, we assume that none of the above cases applies.
If we set $z_j=1$, then we have 
$D^j = D^{j-1} + y^d_j - z^d_j = D^{j-1} + y^d_j - d_S \ge -\frac{d_{\max}}{2}$, where we used the choice of $z_j$ in  \Cref{alg:discrepancy}.
Moreover, because $y^d_j = y_j d_S \le d_S$, we have $D^{j-1} \ge D^j $, which implies $D^j \le \frac{d_{\max}}{2}$ by the induction hypothesis.
If we set $z_j=0$, we have $D^j \ge D^{j-1}$, which implies $D^j \ge - \frac{d_{\max}}{2}$ by the induction hypothesis.
Moreover, by the choice of $z_j$ in \Cref{alg:discrepancy} we have $D^{j-1} + y^d_j - d_S \le  -\frac{d_{\max}}{2}$, implying $D^j = D^{j-1} + y^d_j \le d_S  -\frac{d_{\max}}{2} \le  \frac{d_{\max}}{2}$, where we used $d_S \le d_{\max}$.
\end{proof}

\subsection{The case with costs (Proof of \Cref{thm:discrepancy_main_costs})}\label{sec:proof_discr_cost}

We first restate \Cref{thm:discrepancy_main_costs} for convenience.
\discrepancyMainCosts*

As in the proof of \Cref{thm:discrepancy_main}, we start by bounding the discrepancy of the prefix intervals~$[i]$ with $i\in [\ell]$.

\begin{theorem}\label{thm:prefix_discrepancy_main_costs}
Consider a WPCS instance $(\ell,\mathcal{S}, d,y)$  and let $c \in \mathbb{Q}^{\ell}$.
 If $\mathcal{S}$ is non-interleaving, there exists an integral selection  $z$ with
     \begin{align*}
    D_{y,z}([i]) &\leq d_{\max} \quad \text{ for all }i\in [\ell]\text{ and}\\[2mm]
    c^Tz &\leq c^Ty,
    \end{align*}
where $d_{\max} \coloneqq \max_{S \in \mathcal{S}} d_S$.
Moreover, such a selection $z$ can be computed in time $O(\inp \! \cdot \ell)$. 
\end{theorem}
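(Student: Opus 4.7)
The proof follows the strategy outlined in Section~\ref{sec:overview_discrepancy}. First, I reduce the given WPCS instance $(\ell, \mathcal{S}, d, y)$ with cost vector $c$ to a half-integral WPCS instance $(\ell', \mathcal{S}', d', y', c')$ in which every set $S \in \mathcal{S}'$ has size two, $y'_i = \tfrac{1}{2}$ for every $i \in [\ell']$, and $\mathcal{S}'$ is still non-interleaving. This reduction is a classical partial-rounding argument in the spirit of Lovász (1986) and the recent application of Bansal et al.\ (2022) cited in the overview: iteratively pick two entries $y_i, y_j \notin \{0, \tfrac{1}{2}, 1\}$ in a common set $S$ and simultaneously increase one and decrease the other by the same $\epsilon$ until one becomes half-integral, always choosing the direction so that $c^T y$ does not increase (always possible since the two options change $c^T y$ by opposite amounts). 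Entries equal to $0$ or $1$ at termination are fixed in the returned integral selection and removed, and the remaining indices are relabelled in their induced order. Since elements are only removed and the partition $\mathcal{S}$ is restricted to what remains, the non-interleaving property is preserved. Carefully selecting the direction of each rounding step also bounds the contribution of the reduction to the prefix discrepancy of $z$ versus the original $y$ by $d_{\max}/2$; combined with the $d_{\max}/2$ bound obtained in the reduced half-integral instance, the triangle inequality gives the claimed $D_{y,z}([i]) \le d_{\max}$ (this is the factor-two loss announced in the overview).

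In the reduced half-integral instance I apply the algorithm of \Cref{thm:prefix_discrepancy_main} in time $O(\ell')$ to obtain an integral selection $z \in \{0,1\}^{[\ell']}$ satisfying $D_{y',z}([i]) \le d'_{\max}/2$ for every $i \in [\ell']$. Let $\bar z \coloneqq \mathbf{1}_{[\ell']} - z$ denote the complementary integral selection; since every set of $\mathcal{S}'$ has exactly two elements and $z(S) = 1$, we have $\bar z(S) = 1$, so $\bar z$ is also an integral selection. Using $y'_i = \tfrac{1}{2}$ uniformly, for any prefix $[i]$ one computes
\begin{equation*}
y'^d([i]) - \bar z^d([i]) \;=\; \sum_{j \le i} d'_{S'(j)} \bigl(\tfrac{1}{2} - (1 - z_j)\bigr) \;=\; -\bigl(y'^d([i]) - z^d([i])\bigr),
\end{equation*}
so $D_{y',\bar z}([i]) = D_{y',z}([i])$ for all $i$. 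Moreover, $c'^T z + c'^T \bar z = c'^T \mathbf{1} = 2\, c'^T y'$, yielding $\min(c'^T z, c'^T \bar z) \le c'^T y' \le c^T y$. Returning the cheaper of $z$ and $\bar z$ and translating it back to the original instance gives an integral selection with $D_{y,z}([i]) \le d_{\max}$ for every $i \in [\ell]$ and $c^T z \le c^T y$, as required.

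The main obstacle is the reduction step, which must simultaneously (i) produce an output with the rigid structure $|S|=2$, $y'_i = \tfrac{1}{2}$; (ii) maintain non-interleaving; (iii) bound the prefix-discrepancy perturbation by $d_{\max}/2$; and (iv) keep $c^T y$ non-increasing throughout. Specifying the precise order and direction of the iterative rounding so that both the discrepancy accounting and the cost bound go through is the heart of the argument; the structural observations about $\bar z$ in the second paragraph are then essentially a one-line check. The overall runtime $O(\inp \cdot \ell)$ follows from at most $O(\ell)$ iterative rounding steps, each operating on numbers of bit-size bounded by $\inp$, plus the $O(\ell')$ cost of the call to the algorithm of \Cref{thm:prefix_discrepancy_main} in the reduced instance.
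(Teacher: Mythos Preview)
Your proposal has a genuine gap in the reduction step. You write that in the reduction to the half-integral case you ``always choose the direction so that $c^T y$ does not increase'' and then claim that ``carefully selecting the direction of each rounding step also bounds the contribution of the reduction to the prefix discrepancy \ldots\ by $d_{\max}/2$''. These two criteria are in conflict, and the procedure you actually describe (direction chosen by cost) fails. Take $\ell=2m$, nested sets $S_j=\{j,\,2m{+}1{-}j\}$ for $j\in[m]$ with $d_{S_j}=1$, $y_j=0.4$, $y_{2m+1-j}=0.6$, and costs $c_j=1$, $c_{2m+1-j}=0$. For each $S_j$ your rule shifts toward $(0,1)$, so after the reduction every entry is already integral, the half-integral subinstance is empty, and you return $z_j=0$, $z_{2m+1-j}=1$. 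Then $D_{y,z}([m])=|0.4m-0|=0.4m$, which exceeds $d_{\max}=1$ for $m\ge 3$. You acknowledge that specifying the rounding so that both the cost and the discrepancy accounting go through ``is the heart of the argument'', but you do not supply it, and there is no simple local rule that does both at once.

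The paper resolves this tension differently: it never tries to control cost and discrepancy by a single local choice. Instead it rounds \emph{bit by bit}. Given a $2^{-k}$-integral $y$, the indices whose last bit is $1$ are paired up within each set (preserving non-interleaving), and to this auxiliary half-integral instance one applies \Cref{thm:prefix_discrepancy_main} \emph{and then} the complement trick; the complement has the same prefix discrepancy, so whichever of $z,\bar z$ is cheaper still meets the $d_{\max}/2$ bound. This yields a $2^{-k+1}$-integral $y'$ with $|y^d([i])-y'^d([i])|\le 2^{-k}d_{\max}$ and $c^T y'\le c^T y$. Iterating over $k$ and summing the geometric series gives total prefix discrepancy below $d_{\max}$; an initial $\epsilon$-rounding (\Cref{lem:rounding_epsilon}) handles the case that $y$ is not $2^{-k}$-integral for any $k$, and choosing $\epsilon$ via the denominators of $y$ and $d$ makes the number of bit levels $O(\inp)$. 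The point is that the complement trick is invoked at \emph{every} bit level, not just once at the end; this is exactly what lets cost and discrepancy be controlled simultaneously without the conflict your one-shot reduction runs into.
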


Before proving \Cref{thm:prefix_discrepancy_main_costs}, we first observe that it implies \Cref{thm:discrepancy_main_costs}.
For any interval $I=\{i, i+1,\dots,j\}\subseteq [\ell]$, we have 
\begin{align*}
D_{y,z} (I) = D_{y,z}([j]\setminus [i-1])\le \ D_{y,z} ([i-1]) +  D_{y,z} ([j]) \le 2 d_{\max}.
\end{align*}
If $I$ is a circular interval, then either $I$ is an interval or the complement $[\ell]\setminus I$ of $I$ is an interval.
In the latter case we have $D_{y,z} (I) = D_{y,z}([\ell]\setminus I)  \le 2 d_{\max}$.

To prove \Cref{thm:prefix_discrepancy_main_costs}, we first consider a special case of it where the given fractional selection $y$ is half-integral.
In this special case we will be able to achieve a stronger discrepancy bound.
Then we will show that the general case can be reduced to the special case at the cost of increasing the interval discrepancy by a factor of two.
The approach we employ to reduce to this special case is a common argument in discrepancy theory (see, e.g, \cite{lovasz_1986_discrepancy}, and \cite{bansal2022flow} for a recent application in a scheduling context).

\begin{lemma}\label{lem:half-integral}
 Consider a WPCS instance $(\ell,\mathcal{S}, d,y)$ and let $c \in \mathbb{Q}^{\ell}$.
 Suppose $|S|=2$ for all $S\in \mathcal{S}$ and $y_i = \frac{1}{2}$ for all $i\in [\ell]$.
 If $\mathcal{S}$ is non-interleaving, there exists an integral selection  $z$ with
     \begin{align*}
    D_{y,z}([i]) &\leq \frac{d_{\max}}{2} \quad \text{ for all }i\in [\ell]\text{ and}\\[2mm]
    c^Tz &\leq c^Ty,
    \end{align*}
where $d_{\max} \coloneqq \max_{S \in \mathcal{S}} d_S$.
      Moreover, such a selection $z$ can be computed in time $O(\ell)$.
\end{lemma}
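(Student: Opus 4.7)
The plan is to apply the cost-free prefix discrepancy algorithm from \Cref{thm:prefix_discrepancy_main} and then exploit the symmetry of the half-integral setting: since each $S\in\mathcal{S}$ has exactly two elements, any integral selection $z$ has a natural ``complementary'' integral selection $\bar z$ that picks the opposite element in every set. The key observation is that in this half-integral regime, $z$ and $\bar z$ have identical prefix discrepancies, while their costs sum to exactly $2\,c^T y$. Consequently at least one of $z,\bar z$ is no more expensive than $y$, and we can return it.

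More concretely, the argument proceeds in three steps. First, invoke the algorithm underlying \Cref{thm:prefix_discrepancy_main} (which requires only that $\mathcal{S}$ be non-interleaving) to compute in time $O(\ell)$ an integral selection $z$ with $D_{y,z}([i]) \le d_{\max}/2$ for all $i\in[\ell]$. Second, define $\bar z$ by $\bar z_k \coloneqq 1 - z_k$ for each $k\in[\ell]$; since $|S|=2$ and $z(S)=1$, we have $\bar z(S) = 1$, so $\bar z$ is again an integral selection. Third, return whichever of $z,\bar z$ has smaller cost.

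The two properties to verify are both short calculations. For the cost, note that $z + \bar z = \mathbf{1}$ coordinate-wise, and since $y_i = \tfrac12$ we have $c^T z + c^T \bar z = c^T \mathbf{1} = 2\, c^T y$, so $\min\{c^T z,\; c^T \bar z\} \le c^T y$. For the discrepancy, fix an index $k\in[\ell]$ and let $S\in\mathcal{S}$ be the set containing $k$; then $y_k^d = d_S/2$ and $z_k^d + \bar z_k^d = d_S$, whence
\[
y_k^d - z_k^d \;=\; \tfrac{d_S}{2} - z_k^d \;=\; \bar z_k^d - \tfrac{d_S}{2} \;=\; -\bigl(y_k^d - \bar z_k^d\bigr).
\]
Summing over $k\in [i]$ yields $y^d([i]) - z^d([i]) = -\bigl(y^d([i]) - \bar z^d([i])\bigr)$, so $D_{y,z}([i]) = D_{y,\bar z}([i]) \le d_{\max}/2$ for every prefix.

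I do not expect any serious obstacle: the non-interleaving hypothesis is used only to invoke \Cref{thm:prefix_discrepancy_main}, the complementation trick is purely structural and does not interact with $\mathcal{S}$, and all three steps run in $O(\ell)$ time, so the overall runtime bound is immediate. The only place where one must be slightly careful is in noting that $\bar z$ is well-defined as a selection precisely because $|S|=2$ everywhere, which is why the reduction of the general case to this half-integral case (carried out in the remainder of the section) incurs the factor-of-two loss in the discrepancy bound.
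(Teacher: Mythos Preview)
Your proposal is correct and follows essentially the same approach as the paper: apply \Cref{thm:prefix_discrepancy_main} to obtain $z$, form the complementary selection $\bar z$, observe that $y^d_k - \bar z^d_k = -(y^d_k - z^d_k)$ so the prefix discrepancies coincide, and use $z+\bar z = 2y$ to conclude that the cheaper of the two satisfies the cost bound.
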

\begin{proof}
We apply \Cref{thm:prefix_discrepancy_main} to obtain an integral selection $z \in \{0,1\}^{\ell}$ with $D_{y,z}([i]) \leq \frac{d_{\max}}{2}$ for all $i\in [\ell]$.
We show that either $z$ or the vector $\overline{z} \in \{0,1\}^{\ell}$, defined by
\[
\overline{z}_i \coloneqq 
\begin{cases}
1 & \text{ if } z_i =0 \\
0  &\text{ if } z_i =1,
\end{cases}
\]
has the desired properties.
First observe that for any integral selection $q\in \{0,1\}^{\ell}$ and any $j\in S \in \mathcal{S}$, we have
\[
y^d_j - q^d_j =
\begin{cases}
\frac{d_S}{2} & \text{ if } q_j =0 \\[2mm]
- \frac{d_S}{2} & \text{ if } q_j =1,
\end{cases}
\]
where we used $y_i =\frac{1}{2}$ for all $i \in [\ell]$.
This implies $y^d_j - \overline{z}^d_j = - (y^d_j - z^d_j)$ for  all $j\in [\ell]$.
Thus, we have for all $i\in [\ell]$:
\[
D_{y,\overline{z}}([i]) = \abs{\sum_{j\in [i]} (y^d_j - \overline{z}^d_j)} =  \abs{ - \sum_{j\in [i]} (y^d_j - z^d_j)} = D_{y,z}([i])  \le  \frac{d_{\max}}{2}.
\]
Moreover, we have $z + \overline{z} = 2 y$ and thus $\min\{ c^Tz, c^T\overline{z}\} \le c^Ty$.
We conclude that one of the integral selections $z$ or $\overline{z}$ has the desired properties. 
The runtime is $O(\ell)$, since \Cref{thm:prefix_discrepancy_main} gives an integral selection $z$ in that time and computing $\overline z$ and evaluating the cost is both also linear in $\ell$.   
\end{proof}

For $\alpha \in \mathbb{Q}$, we call a vector $y\in \mathbb{Q}^{\ell}$ an \emph{$\alpha$-integral vector} if every entry of $y$ is an integer multiple of $\alpha$.
As a consequence of \Cref{lem:half-integral} we obtain the following.

\begin{lemma}\label{lem:round_one_bit}
 Consider a WPCS instance $(\ell,\mathcal{S}, d,y)$ with $\mathcal{S}$ being non-interleaving, let $c \in \mathbb{Q}^{\ell}$, and let $k\in \mathbb{Z}_{> 0}$.
Given a $2^{-k}$-integral fractional selection $y$, we can compute in polynomial time a $2^{-k+1}$-integral fractional selection $\overline{y}$ such that
     \begin{align*}
    \abs{y^d([i]) - \overline{y}^d([i])} &\leq 2^{-k} \cdot d_{\max} \quad \text{ for all }i\in [\ell]\text{ and}\\[2mm]
    c^T\overline{y} &\leq c^Ty,
    \end{align*}
where $d_{\max} \coloneqq \max_{S \in \mathcal{S}} d_S$.
      Moreover, such a selection $\overline{y}$ can be computed in time $O(\ell)$. 
\end{lemma}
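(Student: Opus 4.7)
The plan is to reduce this statement to \Cref{lem:half-integral} by building an auxiliary half-integral WPCS instance whose solution will prescribe, for each entry of $y$ that is $2^{-k}$-integral but not $2^{-k+1}$-integral, whether to round it up or down by $2^{-k}$.

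First, we write $y_i = a_i \cdot 2^{-k}$ with integer coefficients $a_i \in \{0,1,\dots,2^k\}$ and single out the ``odd'' indices $O \coloneqq \{i \in [\ell] : a_i \text{ is odd}\}$. These are exactly the coordinates that need to change. Since $\sum_{i\in S} a_i = 2^k$ is even for each $S \in \mathcal{S}$ (using $k \geq 1$), the set $S\cap O$ has even cardinality, so we can partition $S \cap O$ into pairs by pairing consecutive elements in increasing order. Collecting these pairs across all $S \in \mathcal{S}$ defines a partition $\mathcal{S}'$ of $O$. The key structural check at this stage is that $\mathcal{S}'$ is non-interleaving: pairs coming from the same $S$ are disjoint integer intervals thanks to the consecutive pairing, and pairs $P \subseteq S$ and $Q \subseteq T$ coming from different $S, T \in \mathcal{S}$ inherit non-interleaving directly from $S$ and $T$.

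Next we form the auxiliary WPCS instance $(|O|, \mathcal{S}', d', y')$ with $y'_j \coloneqq \tfrac{1}{2}$ for all $j \in O$ and $d'_P \coloneqq 2^{1-k} d_{S(P)}$ for each pair $P \in \mathcal{S}'$, where $S(P) \in \mathcal{S}$ denotes the original set containing $P$. We also restrict the cost vector $c$ to the coordinates in $O$. Applying \Cref{lem:half-integral} then produces, in $O(\ell)$ time, an integral selection $z \in \{0,1\}^O$ whose cost on $O$ is at most $\tfrac{1}{2}\sum_{i\in O} c_i$ and whose prefix discrepancy is at most $d'_{\max}/2 = 2^{-k} d_{\max}$. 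From $z$ we define $\overline y_i \coloneqq y_i + (2z_i - 1) \cdot 2^{-k}$ for $i \in O$ and $\overline y_i \coloneqq y_i$ otherwise. The required sanity checks are that $\overline y$ is $2^{-k+1}$-integral (the ``odd bit'' is eliminated), that $\overline y_i \in [0,1]$ (follows from $a_i$ being odd and nonzero, giving $y_i \in [2^{-k}, 1 - 2^{-k}]$), and that $\overline y(S) = 1$ for every $S \in \mathcal{S}$ (holds because on every pair of $\mathcal{S}'$ contained in $S$ exactly one $z$-entry equals $1$, so up- and down-roundings cancel over $S\cap O$).

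Finally we translate the bounds back. A direct computation gives $\overline y^d([i]) - y^d([i]) = \sum_{j \in O \cap [i]} (2z_j - 1) \cdot 2^{-k} d_{S(j)}$, whose absolute value equals the prefix discrepancy $D_{y',z}([m])$ of the auxiliary instance for $m \coloneqq |O \cap [i]|$; hence \Cref{lem:half-integral} delivers the desired bound of $2^{-k} d_{\max}$. The cost bound $c^T \overline y \leq c^T y$ reduces by the same change of variables to the cost bound produced by \Cref{lem:half-integral}. The only step requiring genuine care is verifying that $\mathcal{S}'$ is non-interleaving, which is precisely why pairs must be formed from consecutive elements of each $S \cap O$ rather than arbitrarily; the rest is linear-time accounting.
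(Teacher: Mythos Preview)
Your proof is correct and follows essentially the same approach as the paper: both identify the coordinates where $y$ fails to be $2^{-k+1}$-integral, pair consecutive such coordinates within each $S\in\mathcal{S}$ to obtain a non-interleaving partition into size-two sets, and apply \Cref{lem:half-integral} to decide the rounding direction. The only cosmetic differences are that the paper keeps the original demands $d_S$ in the auxiliary instance and applies the factor $2^{-k+1}$ afterward (whereas you scale the demands to $d'_P = 2^{1-k} d_{S(P)}$ up front), and the paper explicitly renumbers the odd indices via an order-preserving bijection to $[\hat\ell]$ before invoking \Cref{lem:half-integral}.
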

\begin{proof}
We construct a WPCS instance $(\hat{\ell}, \hat{\mathcal{S}}, \hat{d}, \hat{y})$ to which we will then apply \Cref{lem:half-integral}.
Let
\begin{equation*}
F \coloneqq \{ i\in [\ell] : y_i \text{ is not $2^{-k+1}$-integral}\},
\end{equation*}
and we denote by $\hat{\ell}\coloneqq |F|$ the number of elements in $F$.
Moreover, we define a renumbering $f \colon  F \to [\hat{\ell}]$ such that $f$ is the (unique) bijection preserving the order of elements, i.e., $f(i) < f(j)$ if and only if $i < j$.
Because $y$ is $2^{-k}$-integral and $y(S)=1$  for $S\in\mathcal{S}$, we have that $|F\cap S|$ is even for every set $S\in \mathcal{S}$. 
Hence, there is a partition $\mathcal{R}_{S}$ of $F\cap S$ into sets of size two, which we choose such that no two sets in $\mathcal{R}_S$ are interleaving.
This can be achieved, e.g., by considering the elements of $F\cap S$ in increasing order and repeatedly putting two consecutive elements in the same set in $\mathcal{R}_S$.
We define
\begin{equation*}
\hat{\mathcal{S}} \coloneqq \{ \{f(i), f(j)\} : \{i,j\} \in \mathcal{R}_S,  S\in \mathcal{S} \}.
\end{equation*}
Then $\hat{\mathcal{S}}$ is a partition of $[\hat{\ell}]$.
Moreover, $\hat{\mathcal{S}}$ is non-interleaving, because any two distinct sets from the same set $\mathcal{R}_S$ are non-interleaving by construction, and for $R_1 \in \mathcal{R}_{S_1}$ and $R_2 \in \mathcal{R}_{S_2}$ for distinct sets $S_1,S_2\in \mathcal{S}$, the sets $R_1 \subseteq S_1$ and $R_2\subseteq S_2$ are non-interleaving because $S_1$ and $S_2$ are non-interleaving.
For a set $\hat{S}=\{f(i),f(j)\}$ with $\{i,j\}\subseteq S\in \mathcal{S}$, we define $\hat{d}_{\hat S} \coloneqq d_S$. 
Moreover, we set $\hat{y}_i \coloneqq \frac{1}{2}$ for all $i\in [\hat{\ell}]$ and $\hat{c}_{f(i)} \coloneqq c_i$ for all $i\in [\hat{\ell}]$.

We apply \Cref{lem:half-integral} to obtain an integral selection $\hat{z}$ for the  WPCS instance $(\hat{\ell}, \hat{\mathcal{S}}, \hat{d}, \hat{y})$ with cost $\hat{c}$.
Hence, $D_{\hat{y},\hat{z}}([\hat{i}]) \leq \frac{d_{\max}}{2}$ for $\hat{i}\in [\hat{\ell}]$, and $\hat{c}^T\hat{z} \leq \hat{c}^T\hat{y}$.
Then we define
\begin{equation*}
\overline{y}_i \coloneqq
\begin{cases}
y_i & \text{ if } i \in [\ell]\setminus F \\
y_i + 2^{-k+1} \cdot \left(\hat{z}_{f(i)} -\hat{y}_{f(i)}\right) & \text{ if }i \in F.
\end{cases}
\end{equation*}

Because $\hat{z}$ is integral, the definition of $F$ implies that $\overline{y}$ is $2^{-k+1}$-integral.
Moreover, $\hat{c}^T\hat{z} \le \hat{c}^T \hat{y}$ implies $c^T\overline{y} \le c^Ty$.
Finally, we consider an index $i\in [\ell]$.
If $F\cap [i] = \emptyset$, we have $\overline{y}^d([i]) = y^d([i])$.
Otherwise, let $i_F \coloneqq \max \{f(j) : j\in F\cap [i]\}$.
Then 
\begin{equation*}
\abs{y^d([i]) - \overline{y}^d([i])}= 2^{-k+1 } \abs{\hat{y}^d([i_F]) - \hat{z}^d([i_F])} =  2^{-k+1 } \cdot  D_{\hat{y},\hat{z}}([i_F])  \le  2^{-k} d_{\max}.
\end{equation*}
By \Cref{lem:half-integral}, we obtain $\hat z$ in $O(\hat\ell)$, where $\hat\ell \leq \ell$ by definition. The subsequent definition of $\bar y$ can clearly be done in $O(\ell)$.
\end{proof}

By applying \Cref{lem:round_one_bit} repeatedly, we can round a $2^{-k}$-integral fractional selection to an integral one.
However, we might be given a fractional selection $y$ that is not $2^{-k}$-integral for any $k\in \mathbb{Z}_{>0}$.
To handle this case, we will use the following simple observation.

\begin{lemma}\label{lem:rounding_epsilon}
Consider a WPCS instance $(\ell,\mathcal{S}, d,y)$ with $\mathcal{S}$ being non-interleaving, and let $c \in \mathbb{Q}^{\ell}$.
Let $\epsilon > 0$ and $k_{\epsilon}\coloneqq \lceil \log_2 (\sfrac{\ell}{\epsilon})\rceil$.
Then, there exists a $2^{-k_{\epsilon}}$-integral fractional selection $\tilde{y}$ such that
     \begin{align*}
    \abs{y^d(U) - \tilde{y}^d(U)} &\leq \epsilon \cdot d_{\max} \quad \text{ for all }U \subseteq [\ell]\text{, and}\\[2mm]
    c^T\tilde{y} &\leq c^Ty,
    \end{align*}
Moreover, such a selection $\tilde{y}$ can be computed in time $O(\ell)$. 
\end{lemma}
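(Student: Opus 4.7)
Set $\delta \coloneqq 2^{-k_\epsilon}$, so that $\delta \cdot \ell \leq \epsilon$. The plan is to construct $\tilde y$ by rounding each coordinate $y_i$ to an adjacent multiple of $\delta$, independently within each class $S \in \mathcal{S}$, in such a way that $\tilde y$ remains a fractional selection and satisfies $c^T\tilde y \leq c^T y$. Any such rounding has $|y_i - \tilde y_i| < \delta$ coordinatewise, from which the discrepancy bound follows immediately: for every $U \subseteq [\ell]$,
\[
  \abs{y^d(U) - \tilde y^d(U)} \;\leq\; \sum_{i \in U} |y_i - \tilde y_i| \cdot d_{\max} \;<\; \delta \cdot \ell \cdot d_{\max} \;\leq\; \epsilon \cdot d_{\max}.
\]
So the content of the proof is in choosing the direction (up vs.\ down) of each per-coordinate rounding. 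Note that the non-interleaving assumption on $\mathcal{S}$ will not actually be used.

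Fix a class $S \in \mathcal{S}$, and for $i \in S$ write $y_i = k_i\delta + r_i$ with $k_i \coloneqq \lfloor y_i/\delta\rfloor \in \mathbb{Z}_{\geq 0}$ and $r_i \in [0,\delta)$. Because $\sum_{i \in S} y_i = 1$ is itself an integer multiple of $\delta$, the quantity $m \coloneqq \delta^{-1}\sum_{i \in S} r_i$ is a nonnegative integer, and since every $r_i < \delta$ one further has $m \leq |S_{>0}|$, where $S_{>0} \coloneqq \{i \in S : r_i > 0\}$. I will pick a subset $S' \subseteq S_{>0}$ with $|S'|=m$ and define $\tilde y_i \coloneqq (k_i+1)\delta$ for $i \in S'$ and $\tilde y_i \coloneqq k_i\delta$ for all other $i \in S$. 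A direct check gives $\tilde y(S) = \sum_{i \in S}k_i\delta + m\delta = 1$ and $\tilde y_i \in [0,1]$, so regardless of the choice of $S'$ the vector $\tilde y$ is a $\delta$-integral fractional selection.

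The main step is then to choose $S'$ so that the per-class cost change, which equals $\delta \sum_{i \in S'} c_i - \sum_{i \in S_{>0}} c_i r_i$, is nonpositive. I would take $S'$ to consist of the $m$ indices of $S_{>0}$ with smallest cost $c_i$, and compare against the LP
\[
  \min \left\{ \sum_{i \in S_{>0}} c_i z_i \;:\; \sum_{i \in S_{>0}} z_i = m,\ 0 \leq z_i \leq 1 \right\}.
\]
Its fractional point $z^*_i \coloneqq r_i/\delta \in (0,1)$ is feasible and has cost $\delta^{-1}\sum_{i \in S_{>0}} c_i r_i$, whereas sorting by $c_i$ shows that the LP attains its optimum at the integral indicator of our $S'$. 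Chaining the two inequalities gives $\sum_{i \in S'} c_i \leq \delta^{-1}\sum_{i \in S_{>0}} c_i r_i$, i.e., a nonpositive per-class cost change; summing over $S \in \mathcal{S}$ yields $c^T\tilde y \leq c^T y$. The set $S'$ can be found in $O(|S|)$ via linear-time selection, and the remaining bookkeeping per class is also linear in $|S|$, giving an overall runtime of $O(\ell)$.
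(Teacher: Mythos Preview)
Your argument is correct, but the paper takes a noticeably simpler route. Instead of distributing the round-ups among the $m$ cheapest coordinates via a selection/LP argument, the paper fixes, for each $S\in\mathcal{S}$, a single index $i_S\in S$ of minimum cost, rounds every other $y_j$ (for $j\in S\setminus\{i_S\}$) \emph{down} to the nearest multiple of $2^{-k_\epsilon}$, and sets $\tilde y_{i_S}\coloneqq 1-\sum_{j\in S\setminus\{i_S\}}\tilde y_j$ to restore feasibility. The cost inequality is then immediate, since all mass removed from indices $j$ with $c_j\ge c_{i_S}$ is shifted onto $i_S$. The paper's construction can change a single coordinate by as much as $(|S|-1)\cdot 2^{-k_\epsilon}$, so it does not enjoy your uniform per-coordinate bound $|y_i-\tilde y_i|<2^{-k_\epsilon}$; nevertheless a short per-set calculation still gives $|y(U)-\tilde y(U)|\le \ell\cdot 2^{-k_\epsilon}\le\epsilon$ for every $U\subseteq[\ell]$, which is all the lemma needs. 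In short: your approach buys tighter coordinatewise control at the price of linear-time selection and an LP comparison, while the paper's approach is a one-line construction that already suffices. Neither proof uses the non-interleaving hypothesis.
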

\begin{proof}
For each $S\in \mathcal{S}$, let $i_S \in S$ be such that its cost is minimum, i.e., such that  $c_{i_S} \le c_j$ for all $j\in S$.
Then, for all $j\in S\setminus \{i_S\}$, let $\tilde y_j$ be the number obtained from rounding down $y_j$ to the next integer multiple of $2^{-k_{\epsilon}}$.
We define $\tilde y_{i_S} \coloneqq 1 - \sum_{j\in S\setminus \{i_S\}} \tilde y_j$.
Then $\tilde y$ is a fractional selection and, by the choice of the element $i_S \in S$, we have $c^T \tilde y \le c^Ty $.
Moreover, for every set $U\subseteq [\ell]$, we have $\abs{y(U) - \tilde y(U) } \le \ell \cdot 2^{-k_{\epsilon}} \le \epsilon$ and thus $\abs{y^d(U) - \tilde y^d(U) }\le \epsilon \cdot d_{\max}$.
\end{proof}

We now complete the proof of \Cref{thm:prefix_discrepancy_main_costs}.
By applying once \Cref{lem:rounding_epsilon} for some $\epsilon >0$ and applying \Cref{lem:round_one_bit} repeatedly $k_{\epsilon}= \lceil \log_2 (\sfrac{\ell}{\epsilon})\rceil$ times, we obtain an integral selection $z$ such that, for all $i\in [\ell]$,
\begin{equation}\label{eq:epsilon_discr_bound}
 \abs{y^d([i]) - z^d([i])} \leq \epsilon \cdot d_{\max} + \sum_{i=1}^{k_{\epsilon}} 2^{-i} \cdot d_{\max} 
 < (1+\epsilon) \cdot d_{\max}.
\end{equation}
If we choose $\epsilon > 0$ small enough, i.e., such that 
\begin{equation}\label{eq:epsilon_condition_existence}
\epsilon \cdot d_{\max} <  \abs{y^d([i]) -  \tilde z^d([i])} - d_{\max}
\end{equation}
for all integral selections $\tilde z$ and all $i\in [\ell] $ for which $\abs{y^d([i]) - \tilde z^d([i])} > d_{\max}$,
then \eqref{eq:epsilon_discr_bound} implies $\abs{y^d([i]) - z^d([i])} \leq d_{\max}$.
Because the right-hand side of \eqref{eq:epsilon_condition_existence} can attain only finitely many values, such an $\epsilon > 0$ does indeed exist.

To obtain a polynomial time algorithm, we exploit rationality of the entries of the vectors $y$ and $d$, and write $d_i = \frac{K_i}{L_i}$ and $y_i = \frac{M_i}{N_i}$ for integers $K_i, L_i, M_i, N_i$ for all $i\in [\ell]$.
Then we set
\begin{equation*}
\epsilon \coloneqq \frac{1}{d_{\max}}\cdot \prod_{i\in [\ell]} \frac{1}{L_i \cdot N_i}.
\end{equation*}
Note that for this choice of $\epsilon$, the number $k_{\epsilon} = \lceil \log_2 (\sfrac{\ell}{\epsilon})\rceil$ of applications of \Cref{lem:round_one_bit} is linearly bounded in the input size, which implies an overall runtime of $O(\inp \cdot \ell)$.
Moreover, \eqref{eq:epsilon_discr_bound} implies that, for every $i\in [\ell]$, we have
\begin{equation*}
\abs{y^d([i]) - z^d([i])} - d_{\max} < \epsilon \cdot d_{\max} = \prod_{i\in [\ell]} \frac{1}{L_i \cdot N_i}.
\end{equation*}
Because both $\abs{y^d([i]) - z^d([i])}$ and $d_{\max}$ are integer multiples of $\prod_{i\in [\ell]} \frac{1}{L_i \cdot N_i}$, this implies that we have $\abs{y^d([i]) - z^d([i])} \le d_{\max}$, concluding the proof of \Cref{thm:prefix_discrepancy_main_costs}.

\section{Concluding remarks}\label{sec:conclusions}

Recall that we required all numbers in our SSUF instance to be rational, so that we have finite input length, which puts us in the traditional computational model to talk about efficient algorithms.
We remark that even when allowing arbitrary real numbers (for the given flow $x$, the demands $d$, and the costs $c$), the existence of an unsplittable flow as in \Cref{thm:main} and \Cref{thm:main_cost} follows from our results.
This can be derived from \Cref{thm:main} and \Cref{thm:main_cost} by a standard continuity argument.

Moreover, we highlight that the capacity violation $d_{\max}$ of \Cref{thm:main}, which shows \Cref{conj:morellAndSkutella_weaker} in the special case of planar graphs, is tight.
This is the case even if there is only a single demand, as observed already in earlier work \cite{dinitz_1999_singlesource, morell_2022_single}.
To see this, consider an instance with a single terminal $t$ with demand $k$ and a flow $x$ that can be decomposed into $k$ arc-disjoint paths with flow value $1$ each.
Then $x(a)\le 1$ for all $a\in A$, but any unsplittable flow $\mathcal{P}$ will have value $k=d_ {\max}$ on some $s$-$t$ path.
\cite{morell_2022_single} gives a similar example showing that the lower bound on the unsplittable flow in \Cref{conj:morellAndSkutella_weaker} is tight, even in planar graphs.

In order to prove \Cref{conj:morellAndSkutella} for planar instances, one would need to strengthen \Cref{thm:main_cost} by improving the bound on the deviation $\abs{x(a)-\flow_{\mathcal{P}}(a)}$ of the unsplittable flow $\mathcal{P}$ and the given flow $x$ from $2d_{\max}$ to $d_{\max}$.
We expand on two natural strategies toward this and provide examples showing why these strategies do not work.

One natural strategy would be to strengthen the interval-discrepancy bound of \Cref{thm:discrepancy_main_costs} from $2d_{\max}$ to $d_{\max}$.
In the proof of \Cref{thm:discrepancy_main_costs} we showed that we can achieve a discrepancy bound of $d_{\max}$ for prefix intervals $[i]$ with $i\in [\ell]$ (see \Cref{thm:prefix_discrepancy_main_costs}).
One might hope to strengthen this to a bound of $\frac{d_{\max}}{2}$, as we did in the setting without costs (see \Cref{thm:prefix_discrepancy_main}).
However, in the setting with costs such a strengthening is impossible, as the following example shows.
Let $\epsilon \in (0,1]$ and consider the WPCS instance with $\ell = 2$, $\mathcal{S}=\{\{1,2\}\}$, demand $d_{\{1,2\}} = 1$, and  fractional selection $y$ with $y_1=\epsilon$ and $y_2 = 1-\epsilon$. 
If the cost vector $c$ fulfills $c_1=0$ and $c_2=1$, then the only integral selection $z$ with $c^Tz \le c^Ty= 1-\epsilon$ is the selection $z$ with $z_1=1$ and $z_2=0$.
Then $D_{y,z}([1])= 1-\epsilon$.
Because $d_{\max}=1$, and $\epsilon$ can be chosen arbitrarily close to zero, this shows tightness of \Cref{thm:prefix_discrepancy_main_costs}.

Another possible strategy to prove \Cref{conj:morellAndSkutella} for planar SSUF by strengthening \Cref{thm:discrepancy_main_costs} would be to use the reduction to the half-integral setting described in \Cref{sec:proof_discr_cost}, which comes at the cost of loosing a factor of two in the discrepancy bound.
One might then hope to achieve a $(y,z)$-interval-discrepancy bound of $\frac{d_{\max}}{2}$ for the half-integral special case.
However, this is in general impossible as the following example shows.
Let $\ell \coloneqq 6$ and $\mathcal{S}\coloneqq\{\textcolor{red!80!black}{\{1,6\}}, \textcolor{blue!80!black}{\{2,5\}}, \textcolor{green!60!black}{\{3,4\}}\}$.
Note that $\mathcal{S}$ is non-interleaving. 
Moreover, we define $\textcolor{red!80!black}{d_{\{1,6\}}\coloneqq 2}$, $\textcolor{blue!80!black}{d_{\{2,5\}}\coloneqq 1}$, $\textcolor{green!60!black}{d_{\{3,4\}}\coloneqq2}$ and $y_i \coloneqq \frac{1}{2}$ for all $i\in [\ell]$.
Then
\[
\textcolor{red!80!black}{y^d_1 = 1} \quad \quad 
\textcolor{blue!80!black}{y^d_2 = 0.5} \quad \quad
\textcolor{green!60!black}{y^d_3 = 1} \quad \quad 
\textcolor{green!60!black}{y^d_4 = 1} \quad \quad 
\textcolor{blue!80!black}{y^d_5 = 0.5} \quad \quad 
\textcolor{red!80!black}{y^d_6 = 1} 
\]
and $d_{\max}=2$.
Now consider an integral selection $z$.
By symmetry, we may assume $z_1=1$ and $z_6=0$.
If $z_3=1$ and $z_4=0$, we have 
\[
\textcolor{red!80!black}{z^d_1 = 2} \quad \quad 
\textcolor{blue!80!black}{z^d_2 \ge 0} \quad \quad
\textcolor{green!60!black}{z^d_3 = 2} \quad \quad
\textcolor{green!60!black}{z^d_4 = 0} \quad \quad
\textcolor{blue!80!black}{z^d_5 \le 1} \quad \quad
\textcolor{red!80!black}{z^d_6 = 0}
\]
and thus $D_{y,z}(\{1,2,3\}) \ge  1.5 > \frac{d_{\max}}{2}$.
Otherwise, we have $z_3=0$ and $z_4=1$.
Then we either have
\[
\textcolor{red!80!black}{z^d_1 = 2} \quad \quad 
\textcolor{blue!80!black}{z^d_2 = 1} \quad \quad 
\textcolor{green!60!black}{z^d_3 = 0} \quad \quad 
\textcolor{green!60!black}{z^d_4 = 2} \quad \quad 
\textcolor{blue!80!black}{z^d_5 = 0} \quad  \quad 
\textcolor{red!80!black}{z^d_6 = 0}
\]
or 
\[
\textcolor{red!80!black}{z^d_1 = 2} \quad  \quad 
\textcolor{blue!80!black}{z^d_2 = 0} \quad \quad 
\textcolor{green!60!black}{z^d_3 = 0} \quad \quad 
\textcolor{green!60!black}{z^d_4 = 2} \quad \quad 
\textcolor{blue!80!black}{z^d_5 = 1} \quad \quad 
\textcolor{red!80!black}{z^d_6 = 0}. 
\]
In the first case, $D_{y,z}(\{1,2\}) = 1.5 > \frac{d_{\max}}{2}$ and in the second case $D_{y,z}(\{4,5\}) = 1.5 > \frac{d_{\max}}{2}$.
We conclude that in general it is indeed impossible to achieve a $(y,z)$-interval discrepancy of at most $\frac{d_{\max}}{2}$ on half-integral instances.

\appendix

\section{Hardness of SSUF with lower bounds on cyclic graphs}
\label{sec:hardness_cyclic_graphs}

We start by observing that in case of general (cyclic) graphs, \Cref{conj:morellAndSkutella} and even \Cref{conj:morellAndSkutella_weaker} cannot hold as stated, because the desired unsplittable flow might not exist. 
\begin{observation}
Given some constant $\lambda >0$, there is an SSUF instance $(G=(V,A),s,T,d,x)$ such that there exists no unsplittable flow $\mathcal{P}=\{P^t\}_{t\in T}$ with
\[ x(a) - \lambda d_{\max} \leq \flow_{\mathcal{P}}(a)  \quad\text{for all } a\in A.\]
\end{observation}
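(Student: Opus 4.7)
The plan is to exhibit a small cyclic instance in which the splittable flow $x$ carries a large circulation that no unsplittable flow can reproduce. The conceptual point is that flow conservation alone lets $x$ put arbitrarily large flow on a directed cycle that routes no demand from the source to any terminal, while any unsplittable flow must consist of \emph{simple} $s$-$t$ paths and therefore puts zero flow on arcs lying only on such sterile cycles. This is exactly why the acyclicity assumption is essential in \Cref{conj:morellAndSkutella_weaker}.

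Concretely, given a constant $\lambda > 0$, I would take $V = \{s, v, t\}$ and $T = \{t\}$ with $d_t = 1$, and the three arcs $(s,v)$, $(v,t)$, $(v,s)$. Fix any $M > \lambda$, say $M \coloneqq \lceil \lambda \rceil + 1$, and set
\[
x((s,v)) \coloneqq M+1, \qquad x((v,t)) \coloneqq 1, \qquad x((v,s)) \coloneqq M.
\]
Flow conservation is immediate: at $s$ the net outflow is $(M+1) - M = 1 = d_t$; at $v$ the inflow $M+1$ matches the outflow $1 + M$; and $t$ absorbs exactly one unit. Hence $x$ is a valid splittable flow and the tuple $(G,s,T,d,x)$ is a bona fide SSUF instance on a cyclic graph.

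The only simple $s$-$t$ path in $G$ is $s,v,t$, since the arc $(v,s)$ would revisit $s$. Therefore every unsplittable flow $\mathcal{P}=\{P^t\}$ must take $P^t = \{(s,v),(v,t)\}$, and hence $\flow_{\mathcal{P}}((v,s)) = 0$ regardless of the choice of $\mathcal{P}$. Comparing with the claimed lower bound on the arc $(v,s)$ gives
\[
\flow_{\mathcal{P}}((v,s)) \;=\; 0 \;<\; M - \lambda \;=\; x((v,s)) - \lambda\, d_{\max},
\]
so the desired inequality is violated.

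There is no real technical obstacle here: the whole argument is an observation that circulations present in $x$ cannot be realized by simple paths, and the construction simply isolates this phenomenon in a three-vertex graph. The only care needed is to ensure that the chosen $x$ is a feasible splittable flow (guaranteed by the direct flow-conservation check above) and that the graph in question contains no alternative simple $s$-$t$ path that could use $(v,s)$, which is immediate from the three-vertex structure.
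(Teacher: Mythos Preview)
Your proof is correct and, in fact, more economical than the paper's. Both arguments exploit the same phenomenon: a splittable flow can carry arbitrarily large circulation along directed cycles, whereas an unsplittable flow---being a collection of simple $s$-$t$ paths---cannot put any flow on arcs that lie only on such ``sterile'' cycles. You isolate this with a single back-arc $(v,s)$ in a three-vertex graph, so that the unique simple $s$-$t$ path misses it entirely. The paper instead builds an eight-vertex graph in which the source can reach the single terminal via two vertex-disjoint branches, each carrying its own heavy triangular circulation; any single $s$-$t$ path chooses one branch and hence leaves the other triangle's arcs with zero flow. Your construction is strictly simpler and makes the conceptual point more directly; the paper's version has the minor cosmetic advantage that the cycles do not involve the source (so the instance looks a bit less degenerate), but there is no mathematical gain from this.
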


Consider the graph depicted in \Cref{fig:lower_bounds_and_cycles} with one commodity of demand $d$. The fractional flow $x$ on all black, thin arcs is $\frac{d}{2}$, while it is $d(\lambda+1)$ on all blue thick arcs. Note that this flow is a convex combination of two walks. Still, in this case there is no path, and actually not even a walk, that can send positive flow along both triangles, which means that, for any unsplittable flow, some lower bounds are violated heavily.

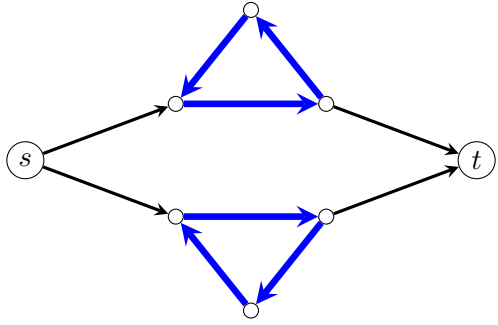
\begin{figure}
\begin{center}
\begin{tikzpicture}[rotate=-90,
terminals/.style={circle,draw=black, inner sep=0pt, minimum size=14pt,},
vertices/.style={circle, draw=black,inner sep=2pt},
arcs/.style={line width=1.2pt,-stealth},
cyclearcs/.style={line width =2.5pt,-stealth, blue}
]

\begin{scope}
\node[terminals] (s) at (0,0) {$s$};
\node[vertices] (v1) at (-0.75,2) { };
\node[vertices] (v2) at (-0.75,4) { };
\node[vertices] (v3) at (-2,3) { };
\node[vertices] (v4) at (0.75,2) { };
\node[vertices] (v5) at (0.75,4) { };
\node[vertices] (v6) at (2,3) { };
\node[terminals] (t) at (0,6) {$t$};

\begin{scope}[arcs]
\draw (s) -- (v1);
\draw[cyclearcs] (v1) -- (v2);
\draw[cyclearcs] (v2) -- (v3);
\draw[cyclearcs] (v3) -- (v1);
\draw (s) -- (v4);
\draw[cyclearcs] (v4) -- (v5);
\draw[cyclearcs] (v5)-- (v6);
\draw[cyclearcs] (v6) -- (v4);
\draw (v2) -- (t);
\draw (v5) -- (t);
\end{scope}
\end{scope}

\end{tikzpicture}

 \end{center}
\caption{\label{fig:lower_bounds_and_cycles}
The figure shows an example, where no unsplittable flow can satisfy any $O(d_{\max})$ violation of the lower bounds. Vertex $s$ is the source and $t$ is the terminal. The fractional flow splits half-half between the up and down option and then cycles often along the blue, thick triangles. The unsplittable flow can only choose one of the two options and thus violates the lower bounds heavily. 
}
\end{figure}

Thus, when considering general graphs, the nature of the problem substantially changes. Instead of proving existence of certain unsplittable flows under the assumption that a splittable flow exists, we consider a different variant of the problem where we are given lower bounds $(\ell(a))_{a \in A}$ instead.

In acyclic graphs we can use \Cref{thm:main} to either decide that there exists no unsplittable flow $\mathcal{P}=\{P^t\}_{t\in T}$ with $\ell(a) \leq \flow_{\mathcal{P}}(a)$ for all $a \in A$, or find an unsplittable flow $\mathcal{P}=\{P^t\}_{t\in T}$ with $\ell(a) - d_{\max} \leq \flow_{\mathcal{P}}(a)$.
To this end we simply check if there exists a splittable flow satisfying the lower bounds  $(\ell(a))_{a \in A}$.
If no such splittable flow exists, there is also no unsplittable flow satisfying the lower bounds.
Otherwise, we obtain a splittable flow $x$ satisfying the given lower bounds, and we can apply \Cref{thm:main} to $x$ to find an unsplittable flow that violates the lower bounds by at most $d_{\max}$.

While \Cref{thm:main} cannot be extended to cyclic graphs, as we observed above, one might still hope for an efficient algorithm that either decides that no unsplittable flow satisfying the lower bounds exists, or finds one that violates the lower bounds by at most $O(d_{\max})$.
We next show that this is impossible for general (not necessarily planar) instances unless $\P=\NP$.

\medskip

To this end, we reduce from the $2$-Vertex-Disjoint Paths problem ($2$-VDP), which is well-known to be NP-complete~\cite{fortune1980directed}.
In a $2$-VDP instance, we are given a directed graph $G=(V,A)$ and two pairs of vertices $(p_1,q_1)$ and $(p_2,q_2)$.
The task is to decide whether there are two vertex-disjoint paths $Q_1$ and $Q_2$ from $p_1$ to $q_1$ and from $p_2$ to $q_2$, respectively.
Given a $2$-VDP instance $I=(G,(p_1,q_1),(p_2,q_2))$, together with a desired number $k\in \mathbb{Z}_{\geq 1}$ of terminals, we create an auxiliary graph $\overline{G} = (\overline{V}, \overline{A})$ together with a source $s\in V$, $k$ terminals $T\subseteq \overline{V}$, and demands $d$ as follows:
\begin{itemize}
\item $\overline{G}$ is obtained from $G$ by adding an arc $(q_1,p_2)$, and by adding $k$ additional vertices $T=\{t_1,\dots, t_k\}$, which will be the terminals, together with arcs $(q_2,t_i)$ for all $i\in [k]$;
\item all demands are unit, i.e., $d_t=1$ for all $t\in T$;
\item $s=p_1$ is the source.
\end{itemize}
See \Cref{fig:hardness} for an illustration of this construction.
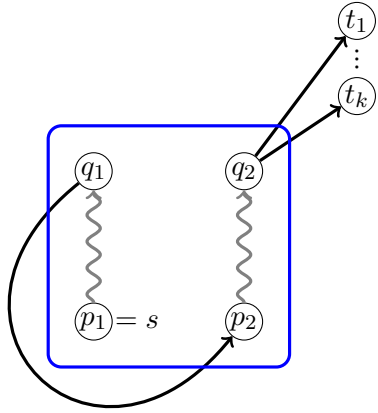
\begin{figure}
 \begin{center}
\usetikzlibrary{fit}
            \begin{tikzpicture}[
                terminals/.style={circle,draw=black, inner sep=0pt, minimum size=14pt,},
                vertices/.style={circle, draw=black,inner sep=2pt},
                box/.style = {draw,blue,line width=1.2pt,inner sep=10pt,rounded corners=5pt}]

                \node[terminals] (p1) at (2,2) {$p_1$}; 
                \node[terminals] (q1) at (2,4) {$q_1$}; 
                \node[terminals] (p2) at (4,2) {$p_2$}; 
                \node[terminals] (q2) at (4,4) {$q_2$}; 
                \node[terminals] (t1) at (5.5,6) {$t_1$};
                \node[rotate=90] at (5.5,5.5) {...};
                \node[terminals] (tk) at (5.5,5) {$t_k$};

                \node (s) at ($(p1) + (1.5em,0)$) {$=s$};

                \draw[gray,line width=1.2pt,decorate,decoration=snake, ->] (p1) to (q1);
                \draw[gray,line width=1.2pt,decorate,decoration=snake, ->] (p2) to (q2);
                 \draw (q1) edge[to path={.. controls (-0.5,2) and (2,-0.5).. (\tikztotarget)}, ->, line width=1.2pt]  (p2);
                \draw[->, line width=1.2pt] (q2) -- (t1);
                \draw[->, line width=1.2pt] (q2) -- (tk);
            
                \node[box,fit=(p1)(q2)(q1)(p2)] {};
            \end{tikzpicture} 

 \end{center}
\caption{\label{fig:hardness}
Given a $2$-VDP instance $I=(G,(p_1,q_1),(p_2,q_2))$, we obtain the corresponding graph $\overline{G}$ from $G$ by adding an arc $(q_1,p_2)$, adding terminals $t_1,\dots, t_k$, and adding for each terminal $t_i$ an arc $(q_2,t_i)$.
Moreover, the vertex $p_1$ is the source.
The idea is that there are two vertex-disjoint paths (as depicted in gray) if and only if there is a path from $s$ to a terminal using the arc $(q_1,p_2)$.     
}
\end{figure}
We denote by $\overline{I}=(\overline{G},s,T,d)$ the resulting single-source unsplittable flow instance without fractional flow vector $x$, as our statements here do not depend on a specific fractional flow or even its existence.

We then have the following immediate relation between the instance $I$ and $\overline{I}$, which exhibits a gap in terms of how much flow can be sent over the arc $(q_1,p_2)$ in $\overline{I}$, depending on whether $I$ is feasible.
\begin{proposition}\label{prop:twoVDPtoSSUF}
Let $I=(G,(p_1,q_1),(p_2,q_2))$ be a $2$-VDP instance, and $\overline{I}=(\overline{G},s,T,d)$ be the corresponding SSUF instance (without fractional flow $x$).
Then we have the following implications:
\begin{enumerate}[label=(\alph*)]
\item\label{item:caseIIsYes} $I$ is a yes instance $\implies$ $\exists$ unsplittable flow $\mathcal{P}=\{P^t\}_{t\in T}$ for $\overline{I}$ with $\flow_{\mathcal{P}}((q_1,p_2)) = k$.
\item\label{item:caseIIsNo} $I$ is a no instance $\implies$ $\nexists$ unsplittable flow $\mathcal{P}=\{P^t\}_{t\in T}$ for $\overline{I}$ with $\flow_{\mathcal{P}}((q_1,p_2)) > 0$.
\end{enumerate}
\end{proposition}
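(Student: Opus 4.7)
The plan is to prove both implications by directly translating between feasible $2$-VDP solutions and unsplittable flows in $\overline{G}$ using the fact that the arc $(q_1,p_2)$ is the only ``bridge'' between the $(p_1,q_1)$-side and the $(p_2,q_2)$-side of $\overline{G}$.

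For part~\ref{item:caseIIsYes}, I would assume the existence of two vertex-disjoint paths $Q_1$ (from $p_1$ to $q_1$) and $Q_2$ (from $p_2$ to $q_2$) in $G$. For every terminal $t_i \in T$, I would define
\[
  P^{t_i} \;\coloneqq\; Q_1 \;\cup\; \{(q_1,p_2)\} \;\cup\; Q_2 \;\cup\; \{(q_2,t_i)\}.
\]
Because $Q_1$ and $Q_2$ are vertex-disjoint, $P^{t_i}$ is indeed a simple path in $\overline{G}$ from $s=p_1$ to $t_i$; moreover the terminals $t_1,\dots,t_k$ only appear as endpoints of their respective paths, so $\mathcal{P} = \{P^{t_i}\}_{i\in[k]}$ is a valid unsplittable flow for $\overline{I}$. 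Since every $P^{t_i}$ traverses the arc $(q_1,p_2)$ and each demand equals $1$, we obtain $\flow_{\mathcal{P}}((q_1,p_2)) = \sum_{i=1}^{k} d_{t_i} = k$, as required.

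For part~\ref{item:caseIIsNo}, I would argue the contrapositive: if some unsplittable flow $\mathcal{P} = \{P^t\}_{t\in T}$ for $\overline{I}$ satisfies $\flow_{\mathcal{P}}((q_1,p_2)) > 0$, then $I$ must be a yes instance. Pick any terminal $t_i$ whose path $P^{t_i}$ uses $(q_1,p_2)$. Writing $P^{t_i} = (v_0, v_1, \dots, v_m)$ with $v_0 = s = p_1$ and $v_m = t_i$, there is an index $j$ with $v_j = q_1$, $v_{j+1} = p_2$, and, since $(q_2,t_i)$ is the unique incoming arc of $t_i$, we also have $v_{m-1} = q_2$. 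I would then define
\[
  Q_1' \;\coloneqq\; (v_0, v_1, \dots, v_j), \qquad Q_2' \;\coloneqq\; (v_{j+1}, v_{j+2}, \dots, v_{m-1}),
\]
which are $p_1$-$q_1$ and $p_2$-$q_2$ paths respectively. Vertex-disjointness of $Q_1'$ and $Q_2'$ follows immediately from $P^{t_i}$ being a simple path. It remains to verify that $Q_1'$ and $Q_2'$ live entirely inside $G$, i.e., that they use neither $(q_1,p_2)$ nor any arc $(q_2,t_l)$: the arc $(q_1,p_2)$ appears only between indices $j$ and $j+1$ of $P^{t_i}$, and an arc $(q_2,t_l)$ would force a subsequent vertex to be a terminal, which has no outgoing arcs, contradicting the fact that these subpaths continue. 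Hence $Q_1'$ and $Q_2'$ certify that $I$ is a yes instance.

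Neither direction seems to present a genuine obstacle; the only point requiring a small amount of care is the verification in part~\ref{item:caseIIsNo} that the two extracted subpaths use only arcs of $G$, which rests entirely on terminals having no outgoing arcs and on $P^{t_i}$ being simple.
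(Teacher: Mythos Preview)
Your proof is correct and follows essentially the same approach as the paper: both directions construct or decompose the $s$-$t_i$ path through $(q_1,p_2)$ in the obvious way, with part~\ref{item:caseIIsNo} argued via the contrapositive. You simply spell out in more detail the verification that the extracted subpaths lie entirely in $G$, which the paper leaves implicit.
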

\begin{proof}
If $I$ is a yes instance, there is an $s$-$q_2$ path $P$ in $\overline{G}$ using the arc $(q_1,p_2)$.
By setting each $P^{t}$ for $t\in T$ to be the path obtained by appending the arc $(q_2,t)$ to $P$, we get the desired result.

We prove the contrapositive of the second statement.
Hence, if there is an unsplittable flow $\mathcal{P}$ with $\flow_{\mathcal{P}}((q_1,p_2)) > 0$, then at least one path $P\in \mathcal{P}$ must use the arc $(q_1,p_2)$.
The path $P$ contains a $p_1$-$q_1$ path and a $p_2$-$q_2$ path that are vertex-disjoint, which implies that $I$ is a yes instance.
\end{proof}

There are different ways to interpret \Cref{prop:twoVDPtoSSUF}.
One natural way is that, in (cyclic) single-source unsplittable flow problems, it is hard to distinguish between the case where there is an unsplittable flow fulfilling all lower bounds and the case where there is no unsplittable flow even if we allow for a heavy violation of the lower bounds. 
\begin{corollary}\label{cor:gaphardness}
Let $k\in \mathbb{Z}_{\ge 1}$ and let $\lambda \in \mathbb{R}_{\ge 0}$ such that $\lambda < k$ be constants.
Then it is NP-hard to distinguish between the following two cases for SSUF instances on general (cyclic) graphs with $k$ terminals:
\begin{enumerate}
    \item\label{item:gapHardCaseFulfillLowerBounds} There is an unsplittable flow satisfying all lower bounds, i.e., an unsplittable flow $\mathcal{P}=\{P^t\}_{t\in T}$ with $\ell(a) \leq \flow_{\mathcal{P}}(a)$ for all $a \in A$.\label{it:unsplittable_flow}
    \item\label{item:gapHardCaseViolateRelaxBounds} There is no unsplittable flow satisfying all lower bounds with a violation of up to $\lambda d_{\max}$, i.e., there is no unsplittable flow $\mathcal{P}=\{P^t\}_{t\in T}$ with $\ell(a) -\lambda d_{\max} \leq \flow_{\mathcal{P}}(a)$ for all $a \in A$.\label{it:no_unsplittable_flow_with_lower_bound_violation}
\end{enumerate}
\end{corollary}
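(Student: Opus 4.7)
The plan is to give a polynomial-time reduction from 2-VDP that exactly reuses the construction $\overline{I}=(\overline{G},s,T,d)$ already introduced for \Cref{prop:twoVDPtoSSUF}. Given a 2-VDP instance $I=(G,(p_1,q_1),(p_2,q_2))$, I would take $\overline{I}$ with $k$ unit-demand terminals (so that $d_{\max}=1$) and equip it with a lower bound vector $\ell \in \mathbb{Q}_{\geq 0}^{\overline{A}}$ defined by $\ell((q_1,p_2)) = k$ and $\ell(a) = 0$ for every other arc $a\in \overline{A}$. Since $k$ is a constant, $\overline{G}$ has only $O(1)$ more vertices and arcs than $G$, so the reduction runs in polynomial time.

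The analysis of the two cases then follows immediately from \Cref{prop:twoVDPtoSSUF}. If $I$ is a yes instance, part~\ref{item:caseIIsYes} of the proposition provides an unsplittable flow $\mathcal{P}$ with $\flow_{\mathcal{P}}((q_1,p_2)) = k = \ell((q_1,p_2))$, and all remaining lower bounds are trivially met because they are zero. Thus the constructed instance falls into case~\ref{item:gapHardCaseFulfillLowerBounds}. If $I$ is a no instance, part~\ref{item:caseIIsNo} guarantees $\flow_{\mathcal{P}}((q_1,p_2)) = 0$ for every unsplittable flow $\mathcal{P}$, so the violation on the arc $(q_1,p_2)$ is
\[
\ell((q_1,p_2)) - \flow_{\mathcal{P}}((q_1,p_2)) = k > \lambda = \lambda\cdot d_{\max},
\]
where the strict inequality uses the assumption $\lambda < k$. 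Hence the instance falls into case~\ref{item:gapHardCaseViolateRelaxBounds}.

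Combining the two implications shows that any polynomial-time procedure that distinguishes cases~\ref{item:gapHardCaseFulfillLowerBounds} and~\ref{item:gapHardCaseViolateRelaxBounds} would decide 2-VDP in polynomial time, contradicting its NP-hardness. I do not foresee any real obstacle: the structural content of the argument is already encapsulated in \Cref{prop:twoVDPtoSSUF}, and the only design choice is to scale the lower bound on $(q_1,p_2)$ so that the gap between the two cases strictly exceeds the allowed additive slack $\lambda d_{\max}$, which the hypothesis $\lambda<k$ secures.
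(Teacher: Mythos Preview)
Your proposal is correct and matches the paper's proof essentially line for line: the same reduction from 2-VDP via $\overline{I}$, the same lower-bound vector $\ell$ with $\ell((q_1,p_2))=k$ and zeros elsewhere, and the same appeal to the two parts of \Cref{prop:twoVDPtoSSUF}. If anything, you spell out a couple of details (e.g., $d_{\max}=1$ and the explicit violation computation) slightly more explicitly than the paper does.
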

\begin{proof}
The hardness follows by considering SSUF instances $\overline{I}=(\overline{G},s,T,d)$ (without fractional flow vector $x$) stemming from $2$-VDP instances $I=(G,(p_1,q_1),(p_2,q_2))$.
Moreover, we assign a lower bound of $\ell((q_1,p_2))=k$ to the arc $(q_1,p_2)$ and lower bounds of $0$ to all other arcs.
Then the two cases~\ref{item:gapHardCaseFulfillLowerBounds} and~\ref{item:gapHardCaseViolateRelaxBounds} of \Cref{cor:gaphardness} correspond to~\ref{item:caseIIsYes} and~\ref{item:caseIIsNo} of \Cref{prop:twoVDPtoSSUF}, respectively.
Hence, they allow for distinguishing yes from no instances of $2$-VDP, which is \NP-hard.
\end{proof}
Note that the reduction and statement only hold true for general (cyclic) graphs because $2$-VDP is polynomial-time solvable in acyclic graphs.
 
\printbibliography

\end{document}